\newcommand{\nt}{\noindent}
\numberwithin{equation}{section}
\newcommand{\bdm}{\begin{displaymath}}
\newcommand{\edm}{\end{displaymath}}
\newcommand{\bdn}{\begin{eqnarray}}
\newcommand{\edn}{\end{eqnarray}}
\newcommand{\bay}{\begin{array}{c}}
\newcommand{\eay}{\end{array}}
\newcommand{\ben}{\begin{enumerate}}
\newcommand{\een}{\end{enumerate}}
\newcommand{\beq}{\begin{equation}}
\newcommand{\eeq}{\end{equation}}
\newcommand{\beqn}{\begin{eqnarray}}
\newcommand{\eeqn}{\end{eqnarray}}
\newcommand{\bml}[1]{\begin{multline} #1 \end{multline}}
\newcommand{\bmln}[1]{\begin{multline*} #1 \end{multline*}}
\newcommand{\lf}{\left}
\newcommand{\ri}{\right}
\newcommand{\braket}[2]{\lf\langle #1|#2 \ri\rangle}
\newcommand{\mean}[3]{\lf\langle \lf.  #1 \lf| #2 \ri| #3 \ri. \ri\rangle}
\newcommand{\RR}{{\mathbb{R}^2}}
\newcommand{\xv}{\mathbf{x}}
\newcommand{\rv}{\mathbf{r}}
\newcommand{\av}{\mathbf{a}}
\newcommand{\deps}{\delta_{\eps}}
\newcommand{\diff}{\mathrm{d}}
\newcommand{\eps}{\varepsilon}
\newcommand{\ab}{\mathcal{A}_{\mathrm{bulk}}}
\newcommand{\ypot}{y_{\mathrm{pot}}}
\newcommand{\hosc}{h_{\mathrm{osc}}}
\newcommand{\kt}{\widetilde{K}}
\newcommand{\gtrial}{g_{\mathrm{trial}}}
\newcommand{\psitrial}{\Psi_{\mathrm{trial}}}
\newcommand{\ba}{\mathcal{B}}
\newcommand{\gpf}{\mathcal{E}^{\mathrm{GP}}}
\newcommand{\gpe}{E^{\mathrm{GP}}}
\newcommand{\gpm}{\psi^{\mathrm{GP}}}
\newcommand{\gpchem}{\mu^{\mathrm{GP}}}
\newcommand{\gpdom}{\mathscr{D}^{\mathrm{GP}}}
\newcommand{\ggpf}{\mathcal{E}^{\mathrm{GP}}_{\mathrm{phys}}}
\newcommand{\tgpf}{\tilde{\mathcal{E}}^{\mathrm{GP}}}
\newcommand{\gvf}{\mathcal{E}^{\mathrm{gv}}}
\newcommand{\gve}{E^{\mathrm{gv}}}
\newcommand{\gvm}{g_{\mathrm{gv}}}
\newcommand{\gvchem}{\mu^{\mathrm{gv}}}
\newcommand{\gvdom}{\mathscr{D}^{\mathrm{gv}}}
\newcommand{\fgv}{F^{\mathrm{gv}}}
\newcommand{\kgv}{K^{\mathrm{gv}}}
\newcommand{\gvfs}{\mathcal{E}_{\star}^{\mathrm{gv}}}
\newcommand{\gves}{E^{\mathrm{gv}}_{\star}}
\newcommand{\gvms}{g_{\star}}
\newcommand{\curl}{\mbox{curl}}
\newcommand{\tfm}{{\rho^{\mathrm{TF}}}}
\newcommand{\tfchem}{\mu^{\mathrm{TF}}}
\newcommand{\xin}{x_{\mathrm{in}}}
\newcommand{\xout}{x_{\mathrm{out}}}
\newcommand{\ann}{{\mathcal{A}_{\eta}}}
\newcommand{\annt}{\widetilde{\mathcal{A}}_{\eta}}
\newcommand{\annh}{\widehat{\mathcal{A}}_{\eta}}
\newcommand{\annd}{\mathcal{A}_{\rm bulk}}
\newcommand{\annm}{\mathcal{A}_{>}}
\newcommand{\anna}{\mathcal{A}_a}
\newcommand{\annmd}{\mathcal{A}_{>}^{\mathrm{2D}}}
\newcommand{\Ofirst}{\Omega_{\mathrm{c_1}}}
\newcommand{\Osec}{\Omega_{\mathrm{c_2}}}
\newcommand{\Othird}{\Omega_{\mathrm{c_3}}}
\newcommand{\disp}{\displaystyle}
\newcommand{\tx}{\textstyle}
\newcommand{\Z}{\mathbb{Z}}
\newcommand{\R}{\mathbb{R}}
\newcommand{\E}{\mathcal{E}}
\newcommand{\A}{\mathcal{A}}
\newcommand{\OO}{\mathcal{O}}
\newcommand{\supp}{\mathrm{supp}}
\newcommand{\Orot}{\Omega_{\mathrm{rot}}}
\newcommand{\Oosc}{\Omega_{\mathrm{osc}}}
\newcommand{\Ophys}{\Omega_{\mathrm{phys}}}
\newcommand{\Omegac}{\Omega_{\mathrm{c}}}
\newcommand{\ep}{\varepsilon}
\newcommand{\Om}{\Omega}
\newcommand{\magnp}{\mathbf{A}_{\mathrm{rot}}}
\newcommand{\half}{\tx\frac{1}{2}}
\newcommand{\aavoo}{\mathbf{A}_{\Omega}}
\newcommand{\rmax}{R_{\mathrm{m}}}
\newcommand{\bigO}[1]{\mathcal{O}\lf(#1\ri)}
\newcommand{\inpr}[3]{\lf\langle#1\lf|#2\ri|#3\ri\rangle_\eta}
\newcommand{\xiny}{1+\eps^2y}
\newcommand{\xint}{1+\eps^2t}
\newcommand{\xinyp}{\lf(1+\eps^2y\ri)}
\newcommand{\xintp}{\lf(1+\eps^2t\ri)}
\newcommand{\dey}{\diff y \: \xinyp \:}
\newcommand{\yb}{{y_\beta}}
\newcommand{\gvbf}{\mathcal{E}_\beta^\mathrm{gv}}
\newcommand{\gvbe}{E_\beta^\mathrm{gv}}
\newcommand{\gvbm}{g_\beta}
\newcommand{\gvbchem}{\mu_\beta}
\newcommand{\gvbdom}{\mathscr{D}_\beta^\mathrm{gv}}
\newcommand{\gcre}{E_{\star}^\mathrm{gv}}
\newcommand{\gcrm}{g_{\star}}
\newcommand{\gcrchem}{\mu_{\star}}
\newcommand{\betas}{{\beta_{\star}}}
\theoremstyle{remark}
\theoremstyle{plain}
\newtheorem{teo}{Theorem}[section]
\newtheorem{lem}{Lemma}[section]
\newtheorem{pro}{Proposition}[section]
\newtheorem{cor}{Corollary}[section]
\newtheorem{rem}{Remark}[section]
\begin{document}

\markboth{\scriptsize{\textsc{Correggi, Dimonte} -- Third Critical Speed for Rotating BECs}}{\scriptsize{\textsc{Correggi, Dimonte} -- Third Critical Speed for Rotating BECs}}

\title{On the Third Critical Speed for Rotating Bose-Einstein Condensates}

\author{M. Correggi${}^{a}$, D. Dimonte${}^{b}$
	\\
	\normalsize\it ${}^{a}$ Dipartimento di Matematica e Fisica, Universit\`{a} degli Studi Roma Tre,	\\
	\normalsize\it L.go San Leonardo Murialdo, 1, 00146, Rome, Italy.	\\
	\normalsize\it ${}^{b}$  Scuola Internazionale Superiore di Studi Avanzati,		\\
	\normalsize\it Via Bonomea, 265, 34136 Trieste, Italy.}
	
\date{\today}

\maketitle

\begin{abstract} 
We study a two-dimensional rotating Bose-Einstein condensate confined by an anharmonic trap in the framework of the Gross-Pitaevksii theory. We consider a rapid rotation regime close to the transition to a giant vortex state. It was proven in \cite{CPRY3} that such a transition occurs when the angular velocity is of order $ \eps^{-4}$, with $ \eps^{-2} $ denoting the coefficient of the nonlinear term in the Gross-Pitaevskii functional and $ \eps \ll 1 $ (Thomas-Fermi regime). In this paper we identify a finite value $ \Omegac $ such that, if $ \Omega = \Omega_0/\eps^4 $ with $ \Omega_0 > \Omegac $, the condensate is in the giant vortex phase. Under the same condition we prove a refined energy asymptotics and an estimate of the winding number of any Gross-Pitaevskii minimizer.
\end{abstract}

\tableofcontents

\section{Introduction}

\nt
Since the first experimental realization of {\it Bose-Einstein (BE) condensation} in the 90's, BE condensates and cold atoms in general have been extensively studied to investigate quantum properties on almost macroscopic scales. Among the typical features of BE condensates, one of the most striking is certainly  {\it superfluidity}, which has been studied in several experiments in the last years by putting the quantum system under rotation and observing its response (see, e.g., the reviews \cite{Co,Fe1}). Because of the quantum nature of BE condensates, the only possible change to the condensate profile due to the imposed rotation is the nucleation of isolated defects, i.e., quantum {\it vortices}. The generation of vortices has been observed in various experiments as well as the growth of their number when the angular velocity increases \cite{MCWD,RAVXK,CHES}. For even larger angular velocities the number of vortices becomes so large that they fill the bulk of the system and arrange in a typical Abrikosov lattice \cite{ARVK}. In presence of harmonic trapping the rotation can not be arbitrarily fast, otherwise the centrifugal forces would break down the trapping and system would eventually fly apart. On the opposite when the trapping contains some stronger confinement, e.g., some anharmonic potential growing faster that $ |\rv|^2$ for large $ |\rv| $, regimes with much more rapid rotation can in principle be reached. Unfortunately so far a loss of coherence of the system has prevented the exploration of such regimes in the experiments \cite{BSSD}, although a depression at the center of the trap has been observed for large angular velocities.

However it has been predicted \cite{CD,CDY2,Fe2,FJS,FB,KTU,KB,KF,R1} that, besides the nucleation of vortices, other phase transitions should be observed in rapid rotating condensates in case of anharmonic confinement, with the occurrence of macroscopic defects or the transition to {\it giant vortex} states: when the rotational velocity gets very large, the centrifugal forces constrain the condensate in some thin annular region around a macroscopic hole and, if the rotation gets even more rapid, vortices disappear from the bulk of the system, which seems then to carry a  huge circulation centered at the origin.

Although BE condensates are many-body quantum system composed of a number of atoms ranging from few thousands to many millions, all the physical prediction about them are made by using an effective theory, the {\it Gross-Pitaevskii (GP) theory}, namely a one-particle approximation in which the energy of the system is given by a suitable nonlinear functional (see below). In spite of its simplicity the agreement with experimental observations is quite good, specially in the so called Thomas-Fermi regime, i.e., when the effective coupling becomes large. One of the major advantages of GP theory is the possibility of run very sophisticated and accurate numerical simulations \cite{Dan,FJS,KTU}. See also the webpage \texttt{http://gpelab.math.cnrs.fr/}, where one can find an efficient free code for simulations of the GP energy or dynamics developed by \textsc{X. Antoine} and \textsc{R. Duboscq} \cite{AD1,AD2}.

In the framework of the GP theory the energy of a two-dimensional rotating BE condensate in physical units on the plane orthogonal to the rotational axis reads
\beq
\label{eq: gGPf}
	\ggpf[\Psi] = \int_{\mathbb R^2} {\diff} \rv \: \lf\{ \half\lf|\lf(\nabla - i \magnp\ri) \Psi \ri|^2 + 
\lf( V(r) - \half \Omega_{\rm rot}^2 r^2 \ri) |\Psi|^2 + \frac{|\Psi|^4}{\eps^2} \ri\},
\eeq
where $ \Orot $ is the angular velocity, $ \magnp : = \Orot r \mathbf{e}_{\vartheta} $,  $r=|\mathbf r|$, with $\mathbf r = (x,y) \in\mathbb R^2$, and $ \mathbf{e}_{\vartheta} = (-y,x)/|\rv| $ is the unit vector in the transverse direction. The trapping potential is assumed to be of the form  
\beq
	\label{eq: potential}
	V(r) : = k  r^s + \half \Oosc^2 r^2
\eeq
with $k>0$ and
\beq 
	2 < s <\infty,
\eeq
i.e., the harmonic trapping is corrected by some anharmonic perturbation. Finally, we will focus on the study of the Thomas-Fermi (TF) regime $\ep \to 0$. The ground state energy of the system is thus obtained by minimizing the functional \eqref{eq: gGPf} under the normalization constraint $ \lf\| \Psi \ri\|_2 = 1 $, which amounts to require conservation of the particle number. Any minimizer is called condensate wave function and its modulus square, i.e., the associated probability distribution, is what can be observed experimentally.

The range of validity of the GP description as well the derivation of the GP effective theory from the quantum mechanical description of a condensed Bose gas is an interesting topic on its own, which has been completely solved in the non-rotating case \cite{LSY1,LSSY}. In presence of rotation on the other hand \cite{LS} contains a derivation of the GP functional, which is however restricted to bounded angular velocities and therefore not directly applicable to the case under discussion (see also \cite{BCPY} for further results). However we will not investigate further such questions and take as a starting point the GP theory.

The mathematical physics literature contains now a large number of works studying the behavior of the GP minimization problem in different asymptotic regimes of the angular velocity. If we restrict the discussion to trapping potentials of the form \eqref{eq: potential}, three {\it phase transitions} have been identified (see \cite{CPRY3} for an extensive discussion or \cite{CPRY4} for a more concise exposition), corresponding to three critical values of the rotational velocity. Here we briefly sum up the most relevant features of the physics of rotating condensates in anharmonic traps:
\begin{itemize}
	\item for small angular velocities $ \Orot $, the rotation has no effect on the condensate wave function, i.e., the minimizer of $ \ggpf $ coincides with the one in absence of rotation \cite{AJR};
	\item when the first critical speed $ \Ofirst \propto \eps^{\frac{4}{s+2}}|\log\eps| $ is crossed, one observes the nucleation of quantum vortices, i.e., isolated zero of the condensate wave function \cite{CR1};
	\item if $ \Orot $ stays far from a second critical speed $ \Osec \propto \eps^{-\frac{s-2}{s+2}} $, the number of vortices might increase but the profile of the condensate wave function is still close to the non-rotating one. Close to $ \Ofirst $ it is possible to derive the explicit distribution of vortices \cite{CR1}, which eventually cover the whole bulk of the condensate. In this regime one expects that they arrange in a regular (Abrikosov) lattice to minimize the interaction energy. This remains an open question although it has been proven that the vorticity is uniformly distributed \cite{CPRY3};
	\item for $ \Orot \propto \eps^{-\frac{s-2}{s+2}} $ a first change of the macroscopic profile of the condensate is observed, due to the effect of the centrifugal forces. When the second critical speed $ \Osec $ is crossed this change has a dramatic effect since a macroscopic hole is created at the center of the trap. However the vorticity remains uniform in the bulk of the system \cite{CPRY3};
	\item for very rapid rotations above $ \Osec $, the bulk of the condensate becomes essentially annular and its width shrinks as $  \eps \to 0 $. No further changes are however observed until a third critical speed $ \Othird \propto \eps^{-\frac{4(s-2)}{s+2}} $ is crossed. Then vortices are expelled from the bulk and the condensate behaves as if the whole vorticity was concentrated at the origin of the trap \cite{CPRY3,CPRY5}. This is the giant vortex state that we plan to study in this paper.
\end{itemize}

So far we have only discussed condensates in anharmonic traps of the type \eqref{eq: potential} but a lot of results are also available for other classes of trapping potentials. First of all the harmonic case has been extensively studied both in the physics and mathematical literature and, while there exists a first critical value of the angular velocity \cite{IM1,IM2} corresponding to the occurrence of vortices and the behavior of the condensate for not too rapid rotation is similar to the one described above (vortex lattice, uniform distribution of vorticity, etc.), when the angular velocity approaches the harmonic frequency of the trap, some new physical features come into play and fractional quantum Hall states emerge \cite{ABD,ABN,LSY2}. As we have already mentioned larger angular velocity are not allowed because the system would otherwise be no longer trapped. See however \cite{Ka} for an alternative setting in which the trapping potential is suitably rescaled to reach fast rotation regimes.

Even if we restrict to the anharmonic traps \eqref{eq: potential} there is an extreme case which is of certain interest, namely $ s = \infty $. Formally this corresponds to a confinement of the system to a two-dimensional disc of unit radius. Naturally one has then to provide suitable boundary conditions and both the Neumann \cite{CDY1,CY,CRY} and Dirichlet \cite{CPRY1} cases have been deeply studied. Indeed phase transitions analogous to the one described above has been found out even in this extreme case, although the nature of the third one is much more subtle. 

Let us now go back to the functional \eqref{eq: gGPf} and introduce more convenient parameters: if we set 
\beq
	\Ophys : = \sqrt{\Orot^2 - \Oosc^2},
\eeq
and obviously assume that $ \Oosc < \Orot $, the trapping potential can be cast in the form
\beq
	\label{eq: eff potential}
	V(r) = k r^s - \tx\frac{1}{2} \Ophys^2 r^2.
\eeq
Since we are interested in exploring a regime in which both $ \eps \to 0 $ and $ \Ophys \to \infty $ (or, equivalently,  $ \Orot \to \infty $), it is convenient to rescale units in the GP functional, in order to observe a non-trivial behavior \cite[Sect. I.A]{CPRY3}: if one would naively minimize $ \ggpf $ under the mass constraint $ \| \Psi \| = 1$, one would get trivially that the ground state energy diverges and the corresponding minimizer tends to $ 0 $ pointwise. The appropriate rescaling depends however on the asymptotics of $  \Ophys $ and, in the regime we want to explore (very fast rotation), 
\beq
	\label{eq: lower bound speed}
	\Ophys \gg \eps^{-\frac{s-2}{s+2}},
\eeq
which leads to the rescaling (see \cite[Sect. I.A]{CPRY3})
\beq
 	\label{eq: rescaling}
	\rv =  \rmax \xv,	\qquad	\Psi(\rv) = \rmax^{-1} \psi(\xv),	\qquad \Ophys =  \rmax^{-2} \Omega,	\qquad	\aavoo  = \Omega x \mathbf{e}_{\vartheta},
\eeq
where $ \rmax $ stands for the unique minimum point of the potential \eqref{eq: eff potential}, i.e., explicitly
\beq
	\label{eq: rm}
	\rmax : = \lf( \frac{\Ophys^2}{s k} \ri)^{\frac{1}{s-2}}.
\eeq
Under the scaling \eqref{eq: rescaling}, the GP functional \eqref{eq: gGPf} becomes
\beq	
	\ggpf[\Psi] = \rmax^{-2} \lf[ \gpf[\psi]  + \lf( \tx\frac{s}{2} - \half \ri) \Omega^2 \ri],
\eeq
with
\beq
	\label{eq: gpf}
	\framebox{$\gpf[\psi] : = \disp\int_{\R^2} \diff \xv \lf\{ \half \lf| \lf( \nabla - i \aavoo \ri) \psi \ri|^2 + \Omega^2 W(x) |\psi|^2 + \eps^{-2} |\psi|^{4} \ri\}$}.
\eeq
The rescaled potential 
\beq
	\label{eq: W}
	W(x) : = \tx\frac{1}{s} \lf( x^s - 1 \ri) - \frac{1}{2} \lf( x^2 - 1 \ri)
\eeq
is positive and has a unique minimum at $ x = 1 $, i.e., $ \inf_{x \in \R^+} W(x) = W(1) = 0 $. The rescaled angular velocity $ \Omega $ is related to the original physical quantities via
\beq
	 \Omega = (sk)^{-\frac{2}{s-2}} \Omega_{\rm phys}^{\frac{s+2}{s-2}},
\eeq
and condition \eqref{eq: lower bound speed} becomes
\beq
	\Omega \gg \eps^{-1}.
\eeq 

From now on we will focus on the analysis of the minimization of the functional \eqref{eq: gpf} on the domain
\beq
	\label{eq: domain}
	{\gpdom : = \lf\{ \psi \in H^1(\R^2)\: \big| x^{s/2} \psi \in L^2(\R^2), \lf\| \psi \ri\|_2 = 1 \ri\}}.
\eeq
We also set
\beq
	\label{eq: ground state}
	\gpe : = \inf_{\psi \in \gpdom} \gpf[\psi],	
\eeq
and denote by $ \gpm $ any minimizer, which is known to exist by standard arguments. In addition any $ \gpm $, which might be non-unique due to a breaking of the rotational symmetry and the occurrence of isolated vortices, solves the variational equation
\beq
	\label{eq: GP variational}
	- \half \lf(\nabla - i \aavoo \ri)^2 \gpm  + \Omega^2 W(x) \gpm + 2 \eps^{-2} \lf| \gpm \ri|^2 \gpm = \gpchem \gpm,
\eeq
where the chemical potential (Lagrange multiplier) is fixed by imposing the $L^2-$normalization of $ \gpm $:
\beq
	\label{eq: chem}
	\gpchem = \gpe + \eps^{-2} \int_{\R^2} \diff \xv \: \lf| \gpm \ri|^4.
\eeq

As discussed in details in \cite[Sect. I.B]{CPRY3}, when $ \Omega \gg \eps^{-1} $ the condensate has already crossed the second critical speed, i.e., its profile approaches a density supported on an annulus centered in the origin, whose inner and outer radii tend to $ 1 $ as $ \eps \to 0 $. More precisely $ |\gpm|^2 $ is close {in $ L^p $, $ p < \infty $,} to the TF profile
\beq
	\label{eq: tfm}
	\tfm(x) = \tx\frac{1}{2} \lf[ \tfchem - \eps^2 \Omega^2 W(x) \ri]_+,
\eeq
with $  \tfchem $ the chemical potential fixed by the $ L^1$-normalization of the function. A straightforward analysis shows indeed that $ \tfm $ is compactly supported and $ \supp(\tfm) = \lf[ \xin, \xout \ri] $ with \cite[Eq. (2.7)]{CPRY3}
\beq
	\xout - \xin = C \lf(\eps \Omega\ri)^{-2/3} {\ll 1},		\qquad		{x_{\mathrm{in}/\mathrm{out}} = 1 + \OO\big(\lf(\eps \Omega\ri)^{-2/3}\big)},
\eeq
{as it can be proven by taking a Taylor expansion of $ W $ around $ x = 1 $ in \eqref{eq: tfm} and imposing the $ L^1 $ normalization.}

The vortex structure of $ \gpm $ is richer: being well above the first critical speed $ \Ofirst \sim |\log\eps| $ for the nucleation of vortices, the GP minimizer contains a very large number of vortices distributed all over its support. More precisely one can prove that the vorticity is uniformly distributed in the bulk of the condensate. {As in \cite[Eq. (1.42)]{CPRY3}, we denote by $ {\mathcal R}_{\rm bulk} \subset \supp(\tfm) $ a suitable annulus $ \lf\{ \xv \: | \: x_< \leq x \leq x_> \ri\} $ with $ x_{>/<} = x_{\mathrm{out}/\mathrm{in}} + o((\eps \Omega)^{-2/3}) $}.
	
	\begin{teo}[\textbf{\cite[Theorem 1.2]{CPRY3}}]
		\label{teo: vortex distribution CPRY3}
		\mbox{}	\\
		If $ \eps^{-1} \lesssim \Omega \ll \eps^{-4} $ as $ \eps \to 0 $, there exists a finite family of disjoint balls $ \{ \ba_i \} : = \{ \ba(\xv_i, \varrho_i) \} \subset {\mathcal R}_{\rm bulk} $, $ i = 1, \ldots, N $, such that 		\ben
			\item 
$ \varrho_i \leq \OO\left(\Omega^{-1/2}\right) $, 
%$ \sum \varrho_i \leq \OO(\Omega^{1/2}) $ and 
$ \sum \varrho_i^2 \leq (1 + (\eps\Omega)^{2/3})^{-1} $;
			\item $ \lf|\gpm\ri| > 0 $ on $ \partial \ba_i $, $ i = 1, \ldots, N $.
		\een
		Moreover, setting $ d_i : = \deg\{\gpm, \partial \ba_i\} $ and defining the vorticity measure as  $ \mu : =  2\pi\sum_{i = 1}^N d_i \delta(\xv- \xv_i) $, then, for any set $ \mathcal{S} \subset {\mathcal R}_{\rm bulk} $ such that {$ |\partial \mathcal{S}| = 0 $ and} $ |\mathcal{S}| \gg \Omega^{-1} |\log(\eps^4\Omega)|^2 $ as $ \eps \to 0 $,  
		\beq
			\frac{\mu(\mathcal{S})}{\Omega |\mathcal{S}|} \underset{\eps \to 0}{\longrightarrow} 1.
		\eeq
	\end{teo}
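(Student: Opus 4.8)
The plan is to obtain the uniform distribution of the vorticity from a matching upper and lower bound for the energy $\gpe$, combined with a Jacobian estimate, the limiting density being singled out as the unique minimizer of a mean-field functional. I would work throughout on the bulk annulus $\mathcal{R}_{\rm bulk}$, using as input the known facts that $|\gpm|^2$ is $L^p$-close to $\tfm$ and exponentially small outside $\supp(\tfm)$, so that on $\mathcal{R}_{\rm bulk}$ the density weight $\tfm$ is bounded above and below by a multiple of $(\eps\Omega)^{-2/3}$. This lets me confine the whole vortex analysis to that annulus and treat the weight as a slowly varying, strictly positive function.

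First I would split off the vortex contribution. Choosing a real, regularized reference profile $g \approx \sqrt{\tfm}$ solving the associated density (vortex-free) equation and writing $\gpm = g\,u$ on $\mathcal{R}_{\rm bulk}$, substitution into \eqref{eq: gpf} and use of the variational equation for $g$ cancels the cross terms and yields the exact decomposition
\[ \gpf[\gpm] = \gpf[g] + \mathcal{F}_g[u], \qquad \mathcal{F}_g[u] = \half \int_{\mathcal{R}_{\rm bulk}} g^2 \lf| \lf( \nabla - i \aavoo \ri) u \ri|^2, \]
where the reduced functional carries all the vorticity and $|u|\approx 1$. To this degenerate-weight functional I would apply the vortex ball construction of Jerrard and Sandier: it produces the finite disjoint family $\{\ba_i\} = \{\ba(\xv_i,\varrho_i)\}$ of items (1)--(2), with $|\gpm| > 0$ on each $\partial\ba_i$, core radii $\varrho_i \leq \OO(\Omega^{-1/2})$ fixed by the healing length, the packing bound $\sum\varrho_i^2 \leq (1+(\eps\Omega)^{2/3})^{-1}$, and a lower bound $\mathcal{F}_g[u] \gtrsim \sum_i \pi\, g^2(\xv_i)\, |d_i|\, \lf|\log(\eps^4\Omega)\ri| - \mathcal{I}_\Omega$ on the reduced kinetic energy in terms of the degrees $d_i = \dg\{\gpm,\partial\ba_i\}$, where $\mathcal{I}_\Omega$ denotes the gain from aligning the vorticity with the rotation $\aavoo$.

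Next I would pass from the discrete degrees to a continuum object via the Jerrard--Soner Jacobian estimate, bounding $\mu = 2\pi\sum d_i\,\delta_{\xv_i}$ against the distributional current-curl $\curl(iu,\nabla u)$ in the dual Lipschitz norm, with error controlled by $\sum\varrho_i$ and the energy. Combining this with a trial upper bound — built from a regularized phase carrying the uniform target vorticity, whose energy reproduces $\gpf[g]$ plus the optimal vortex cost — the leading correction to $\gpe$ becomes, up to lower order, a strictly convex mean-field functional of the limiting vorticity density $\nu$, comprising the self-energy $\propto|\log(\eps^4\Omega)|$, the pairwise logarithmic interaction, and the linear coupling to $\aavoo$. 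Its unique minimizer is the constant density dictated by the effective rotation, normalized so that $\mu/(\Omega\,|\cdot|)\to 1$, and the matching of the two bounds forces $\mu$ to converge to it. For the stated local statement one repeats the estimates with the energy localized to $\mathcal{S}$ and to its complement; the hypothesis $|\mathcal{S}| \gg \Omega^{-1}\lf|\log(\eps^4\Omega)\ri|^2$ is exactly what renders the localization error negligible against the main term $\Omega|\mathcal{S}|$, yielding $\mu(\mathcal{S})/(\Omega|\mathcal{S}|)\to 1$.

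The main obstacle I expect is the sharpness of the lower bound. The vortex-ball lower bounds must be made uniform in the weight $g^2$, which degenerates at the two edges of $\mathcal{R}_{\rm bulk}$, and the subleading interaction term $\mathcal{I}_\Omega$ coupling the vorticity to $\aavoo$ must be retained without loss, since it is precisely this term — rather than the leading self-energy — that pins the density instead of merely its order of magnitude. Equally delicate is the localization in the final step: the local energy estimates have to hold with errors uniformly small on all scales down to $|\mathcal{S}|\sim\Omega^{-1}\lf|\log(\eps^4\Omega)\ri|^2$, i.e. just above the mean inter-vortex spacing $\Omega^{-1/2}$, which is exactly where the averaging that produces a deterministic density is on the verge of breaking down.
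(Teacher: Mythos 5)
This theorem is imported verbatim from \cite{CPRY3} and the present paper contains no proof of it, so there is no in-paper argument to compare against; your outline does, however, follow essentially the strategy of the original proof in \cite{CPRY3} (and \cite{CR1}): Lassoued--Mironescu energy splitting against a real density profile, Jerrard/Sandier vortex-ball construction on the bulk annulus, Jerrard--Soner Jacobian estimates, and matching upper and lower bounds in which the coupling of the vorticity to the rotation field pins the limiting density. The one point to correct is your displayed ``exact decomposition'': the reduced functional is not just $\frac{1}{2}\int g^2\lf|\lf(\nabla-i\aavoo\ri)u\ri|^2$ but also contains the positive term $\eps^{-2}\int g^4(1-|u|^2)^2$, which is precisely what controls $|u|\approx 1$ and makes the vortex-ball construction applicable, so it cannot be dropped from the splitting.
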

	
	\nt
	The inner region $ \lf\{ \xv \in \R^2 \: | \: x \leq \xin \ri\} $ is presumably also filled with vortices but, because of the exponential smallness of $ \gpm $ there, the vortex structure in that region is practically inaccessible. An important condition contained in Theorem \ref{teo: vortex distribution CPRY3} is the request 
	\bdm
		\Omega \ll \eps^{-4}.
	\edm
	The reason is that at angular velocities of that order the proof of Theorem \ref{teo: vortex distribution CPRY3} might fail due to the occurrence of a further phase transition, i.e., the transition to a giant vortex state. This paper is precisely devoted to the investigation of such a transition. 
	
	From the heuristic point of view it is quite simple to explain why one should expect a change in the vortex structure when $ \Omega \sim \eps^{-4} $: from energy considerations it is easy to see that the average size of the vortex core, i.e., the radius of the region around a vortex when $ |\gpm|^2 $ is substantially far from $ \tfm $, is of order $ \eps^{2/3} \Omega^{-1/3} $. The width of $ \supp(\tfm) $ is on the other hand of order $ (\eps\Omega)^{-2/3} $ and the two quantities are clearly of the same order when $ \Omega \sim \eps^{-4} $. Hence it must happen that for $ \Omega = \Omega_0 \eps^{-4} $ with $ \Omega_0 $ a large enough constant, the vortex core becomes larger that the bulk of the condensate, i.e., vortices can no longer be accommodated in $ \supp(\tfm) $. A non trivial phase factor of $ \gpm $ is however needed in order to compensate the effect of the rotation but, because no vortex can occur in the bulk of the condensate, all the vorticity should get concentrated in the inner region where $ \gpm $ is exponentially small. In fact when this occurs it is impossible to distinguish from the energetic point of view such a state with vortices distributed in the inner hole from a giant vortex state of the form $ f(x) e^{i n \vartheta} $, $ n \in \Z $. 
	
	Notice that although this might seem to suggest that the rotational symmetry is restored, such a phenomenon never occurs as proven in \cite[Theorem 1.6]{CPRY3}. However the GP energy is expected to be well approximated above the critical speed for the transition to a giant vortex state by a one-dimensional energy functional obtained by evaluating $ \gpf $ on functions of the form $ f(x) e^{i n \vartheta} $. In fact by some very simple observations one can show that $ n = \lfloor \Omega \rfloor (1 + o(1)) $, where $ \lfloor \: \cdot \: \rfloor $ stands for the integer part. Let us now fix the angular velocity to be
	\beq
		\label{eq: speed}
		\Omega = \frac{\Omega_0}{\eps^4},
	\eeq
	with $ \Omega_0 $ a positive constant. Concerning the giant vortex regime, the main results proven in \cite{CPRY3} are stated below. {We denote by $ \ab $ a suitable annular layer around $ x = 1 $ containing the bulk of the condensate (see next \eqref{eq: bulk} for a precise definition).}
	
	\begin{teo}[\textbf{\cite[Theorem 1.3]{CPRY3}}]
		\label{teo: giant vortex CPRY3}
		\mbox{}	\\
		If $ \Omega $ is given by \eqref{eq: speed}, there exists a finite constant $ \bar\Omega_0 $ such that for any $ \Omega_0 > \bar\Omega_0 $, no minimizer $ \gpm $ has a zero inside $ \ab $ if $ \eps $ is sufficiently small.
	\end{teo}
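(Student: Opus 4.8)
The plan is to argue by contradiction, comparing the energy $\gpe$ of a minimizer with the energy $\gve$ of a pure giant-vortex reference state and showing that a zero of $\gpm$ inside $\ab$ would force an energy excess incompatible with matching upper and lower bounds on $\gpe$. The basic device is phase extraction: writing any minimizer as $\gpm = f\,e^{i(n\vartheta+\phi)}$ with $f=|\gpm|$ and $n=\lfloor\Omega\rfloor$, one computes $|(\nabla-i\aavoo)\gpm|^2 = |\nabla f|^2 + f^2|\nabla\phi + \mathbf{B}|^2$, where $\mathbf{B} := (\tfrac{n}{x}-\Omega x)\,\mathbf{e}_{\vartheta}$ is the effective field left over after removing the giant-vortex circulation (note $\curl\mathbf{B}=-2\Omega$). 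This rewrites $\gpf[\gpm]$ as a density functional in $f$ alone plus the nonnegative reduced kinetic term $\tfrac12\int f^2|\nabla\phi+\mathbf{B}|^2$. A lower bound for the latter returns its vortex-free value $\tfrac12\int f^2|\mathbf{B}|^2$, which reconstitutes the giant-vortex functional $\gvf[f]\ge\gve$, plus a cost attached to each zero of $\gpm$; the heart of the matter is that, for $\Omega=\Omega_0/\eps^4$ with $\Omega_0$ large, this cost is strictly positive because the bulk is too thin to hold a vortex core.

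Concretely I would proceed in four steps. First, a priori estimates: from the variational equation \eqref{eq: GP variational} and the Thomas--Fermi comparison recalled above, $f^2$ is $L^p(\ab)$-close to $\tfm$, the chemical potential satisfies $\tfchem\sim(\eps\Omega)^{2/3}$ and the bulk has width $\xout-\xin\sim(\eps\Omega)^{-2/3}$, while $\|\gpm\|_\infty^2\lesssim\tfchem$ and $\gpm$ is exponentially small outside $\supp(\tfm)$. Second, an upper bound: evaluating $\gpf$ on the trial state $\gtrial(x)\,e^{in\vartheta}$, with $\gtrial$ a regularization of $\sqrt{\tfm}$ and $n=\lfloor\Omega\rfloor$, gives $\gpe\le\gve + R(\eps)$ with an explicit remainder $R(\eps)$. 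Third, a matching lower bound carrying the vortex cost: on $\ab$, a vortex-ball lower bound for the reduced kinetic term detects and localizes any zero and produces, for each vortex of degree $d_i$ in a ball $\ba_i$, an estimate of the form (self-energy)$\,-\,$(interaction with $\mathbf{B}$). Fourth, collect the bounds: since the vortex core radius $\eps^{2/3}\Omega^{-1/3}$ exceeds the bulk width $(\eps\Omega)^{-2/3}$ by the factor $\Omega_0^{1/3}$, I would show that the self-energy of even a single vortex dominates the interaction gain, so that a zero in $\ab$ yields $\gpe\ge\gve + c(\Omega_0)$ with $c(\Omega_0)>R(\eps)$ for $\Omega_0$ large, contradicting the second step and establishing the claim for all $\Omega_0>\bar\Omega_0$.

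The main obstacle is the quantitative lower bound on the vortex cost in this reversed geometry (the third and fourth steps). Standard vortex-ball bounds are calibrated to the opposite regime, where many vortices sit in a bulk much wider than their cores and the cost is logarithmic; here the core is \emph{larger} than the bulk, so no ball carrying the vortex fits inside $\ab$ and the logarithmic mechanism is unavailable. I would therefore replace it by a direct estimate of the condensation energy needed to create a zero: forcing $f$ down from $\tfm$ across the full radial width of $\ab$, over an azimuthal region $A$ of length set by the healing length $\eps^{2/3}\Omega^{-1/3}$, costs at least $\eps^{-2}\int_{A}(\tfm-f^2)^2$, a positive quantity growing with $\Omega_0$. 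The subtler half is to show this is not repaid by the rotation: since the phase $n\vartheta$ already supplies the bulk circulation, the interaction term coupling $d_i$ to $\int_{\ba_i} f^2\,\mathbf{B}$ cannot offset the condensation cost, which amounts to proving that the resulting cost function is positive throughout $\ab$ uniformly for $\Omega_0$ large. Controlling this sign while simultaneously sharpening $R(\eps)$ so that $c(\Omega_0)$ genuinely dominates it is where the bulk of the work lies; once it is in place the contradiction excludes any zero in $\ab$, not merely a nonzero net degree, because the cost is attached to each individual vortex.
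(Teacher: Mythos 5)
This statement is quoted from \cite[Theorem 1.3]{CPRY3}; the present paper does not reprove it but establishes a sharper version (Theorem \ref{teo: no vortices}) by the same circle of ideas, so I will measure your plan against that proof. Your overall architecture is the right one: giant-vortex upper bound, energy decoupling, a lower bound in which each putative vortex carries a cost, and positivity of that cost for $\Omega_0$ large. You also correctly diagnose that vortex-ball methods are useless here because the core exceeds the bulk width by a factor $\Omega_0^{1/3}$. However, there are two genuine gaps at the decisive step.

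First, the decomposition $\gpm=f\,e^{i(n\vartheta+\phi)}$ with $f=|\gpm|$ makes the residual phase $\phi$ ill-defined precisely at the zeros you are trying to exclude, and the identity $|(\nabla-i\aavoo)\gpm|^2=|\nabla f|^2+f^2|\nabla\phi+\mathbf{B}|^2$ degenerates there. The paper instead uses the Lassoued--Mironescu decoupling $\gpm=\frac{1}{\sqrt{2\pi}\,\eps}\,\gcrm\bigl(\frac{x-1}{\eps^2}\bigr)\,u(\xv)\,e^{i(\Omega+\betas)\vartheta}$, where the reference profile $\gcrm$ is strictly positive and \emph{fixed} (it solves its own variational equation), so that $u$ is a globally well-defined complex function carrying all the vortex structure, and the splitting \eqref{eq: splitting} is exact up to the restriction to the bulk. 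Second, and more seriously, your claim that the reduced kinetic term ``returns its vortex-free value $\tfrac12\int f^2|\mathbf{B}|^2$ plus a cost attached to each zero'' is precisely the statement that has to be proven, and it does not follow from anything you write: the cross term $\int f^2\,\nabla\phi\cdot\mathbf{B}$ is indefinite and is exactly the mechanism by which a vortex could pay for itself. Your proposed replacement, a condensation-energy estimate $\eps^{-2}\int_A(\tfm-f^2)^2$, quantifies only the density cost and says nothing about this cross term, so the sign question you flag as ``the subtler half'' is left entirely open. The paper resolves it by the potential-function trick: writing $2\gcrm^2\,\mathbf{a}=-\nabla^{\perp}F$ with $F$ as in \eqref{eq: potential f}, vanishing at the boundary of the bulk, integrating by parts to replace the current term by $\int F\,\curl(\mathbf{j}_u)$, using $|\curl(\mathbf{j}_u)|\le|\nabla u|^2$ together with $F\le 0$, and then proving (Propositions \ref{pro: kgv positive} and \ref{pro: K positive}) that the cost function $K=\tfrac12\gcrm^2+F$ is positive in the bulk --- a delicate argument comparing $K$ at its critical points via the variational equation. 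Without an analogue of this step your outline cannot close; with it, the exclusion of zeros then follows not from a per-vortex energy count but from the smallness of $\int\gcrm^4(1-|u|^2)^2$ combined with an $L^\infty$ bound on $\nabla u$, as in the proof of Theorem \ref{teo: no vortices}.
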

	
	\begin{teo}[\textbf{\cite[Theorem 1.4]{CPRY3}}]
		\label{teo: gv energy CPRY3}	
		\mbox{}	\\
		If $ \Omega $ is given by \eqref{eq: speed} with $ \Omega_0 > \bar\Omega_0 $  as in Theorem \ref{teo: giant vortex CPRY3}, then as $ \eps \to 0 $\footnote{{We use here polar coordinates $ (x,\vartheta) \in \R^+ \times [0,\pi) $ on the plane.}}
		\beq
			\label{teo: gpe asympt CPRY3}
			\gpe = \min_{\lf\| f \ri\|_2 = 1} \gpf\lf[f(x) e^{i \lfloor \Omega \rfloor \vartheta}\ri] + \OO(|\log \ep| ^{9/2}).
		\eeq
	\end{teo}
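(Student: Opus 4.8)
The upper bound requires no work: every trial state $f(x)\,e^{i\lfloor\Omega\rfloor\vartheta}$ with $\|f\|_2=1$ lies in $\gpdom$, so $\gpe\le \gve$, where I abbreviate $\gve:=\min_{\|f\|_2=1}\gpf[f(x)e^{i\lfloor\Omega\rfloor\vartheta}]$ and let $\gvm$ denote the (positive, radial) minimizing profile. The entire statement thus reduces to the lower bound $\gpe\ge\gve-\bigO{|\log\eps|^{9/2}}$. The plan is to compare the true minimizer $\gpm$ with the giant-vortex reference $\gvm\, e^{i\lfloor\Omega\rfloor\vartheta}$ by an energy-decoupling argument, the error being generated entirely by the edges of the Thomas--Fermi support. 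I would first record the properties of $\gvm$ that I need: it solves the radial variational equation, $\gvm^2$ is close to $\tfm$ in $\ab$ (so $\gvm^2\sim\eps^{-2}$ there), it has the expected decay in the boundary layer at $\partial\,\supp\tfm$, and it is exponentially small for $x<\xin$. These follow from the same TF analysis already used to control $\gpm$.

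The second step is the Lassoued--Mironescu splitting. Writing $\gpm=\gvm\,e^{i\lfloor\Omega\rfloor\vartheta}\,v$ on the set $\{\gvm>0\}$, testing the variational equation for $\gvm$ against $\gvm(|v|^2-1)$, and using the two mass constraints $\int\gvm^2=\int|\gpm|^2=1$ to cancel the chemical-potential term, I obtain
\[
\gpf[\gpm]-\gve=\int\Big\{\tfrac12\gvm^2|\nabla v|^2+\gvm^2\,\mathbf B\cdot(iv,\nabla v)+\eps^{-2}\gvm^4(|v|^2-1)^2\Big\}+(\mathrm{bdry}),
\]
where $\mathbf B:=(\Omega x-\lfloor\Omega\rfloor/x)\,\mathbf e_\vartheta$ is the residual velocity field and $(iv,\nabla v):=\mathrm{Im}(\bar v\,\nabla v)$. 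The kinetic and quartic terms are non-negative, so the only mechanism that can push $\gpf[\gpm]$ below $\gve$ is the cross term, and the whole game is to show it cannot do so by more than $\bigO{|\log\eps|^{9/2}}$.

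To handle the cross term I would invoke the no-zeros result of Theorem~\ref{teo: giant vortex CPRY3}: since $\gpm$ does not vanish in $\ab$ and $\gvm>0$ there, I may set $v=\rho\,e^{i\phi}$ with $\rho=|v|$ and $\phi$ single-valued of zero net winding (the phase winding of any minimizer being $\lfloor\Omega\rfloor$ up to a bounded correction that I absorb into the error), whence $(iv,\nabla v)=\rho^2\nabla\phi$. Because $\gvm$ and $\mathbf B$ are radial while $\int_0^{2\pi}\partial_\vartheta\phi\,\diff\vartheta=0$ at each radius, the leading part $\int\gvm^2\,\mathbf B\cdot\nabla\phi$ vanishes identically and the cross term collapses to $\int\gvm^2(\rho^2-1)\,\mathbf B\cdot\nabla\phi$. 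In the interior of $\ab$ this is disposed of by Cauchy--Schwarz, splitting it between $\tfrac12\int\gvm^2\rho^2|\nabla\phi|^2$ and $\eps^{-2}\int\gvm^4(\rho^2-1)^2$; the absorption needs $|\mathbf B|^2\lesssim\eps^{-2}\gvm^2$, which is precisely the critical balance $\Omega\sim\eps^{-4}$ of the heuristics and is where the hypothesis $\Omega_0>\bar\Omega_0$ is used, ensuring the quartic term beats $\gvm^2|\mathbf B|^2$ by a fixed factor away from the edges.

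The hard part, and the source of the $|\log\eps|^{9/2}$, is that this absorption degenerates in the thin boundary layer at $\partial\,\supp\tfm$, where $\gvm\to0$ while $|\mathbf B|$ reaches its maximal size $\sim\eps^{-2}$, so that $\eps^{-2}\gvm^2\ll|\mathbf B|^2$ and the naive Cauchy--Schwarz fails. There I would instead introduce the scalar potential $F$ with $\nabla^\perp F=\gvm^2\mathbf B$ and integrate by parts, $\int\gvm^2\mathbf B\cdot(iv,\nabla v)=-\int F\,\curl(iv,\nabla v)+(\mathrm{bdry})$, so that the residual vorticity $\curl(iv,\nabla v)$ --- which is small in the vortex-free bulk --- is paired against the bounded $\|F\|_\infty$ rather than against $|\mathbf B|$ pointwise. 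Combining the explicit decay of $\gvm$ across the boundary layer, the a priori energy bounds on $v$, and the exponentially small contribution of the inner region $\{x<\xin\}$, these estimates accumulate into the stated $\bigO{|\log\eps|^{9/2}}$ error; together with the trivial upper bound this gives the asymptotics.
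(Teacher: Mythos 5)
This theorem is quoted from \cite[Theorem 1.4]{CPRY3} rather than reproved here, but the paper's own proof of its refinement (Theorem \ref{teo: energy asymptotics}, via Propositions \ref{pro: upper bound} and \ref{pro: lower bound}) follows the same strategy as \cite{CPRY3}, so the comparison is meaningful. Your upper bound and your Lassoued--Mironescu splitting agree with that strategy; the trouble is in how you dispose of the cross term in the lower bound. You invoke Theorem \ref{teo: giant vortex CPRY3} (absence of zeros of $\gpm$ in $\ab$) in order to write $v=\rho e^{i\phi}$ with a single-valued phase and make the leading part of $\int \gvm^2\,\mathbf{B}\cdot(iv,\nabla v)$ vanish. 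This is circular: as the paper states right after quoting the two theorems, the no-vortex result ``is a consequence of the energy asymptotics'' \eqref{teo: gpe asympt CPRY3} --- in \cite{CPRY3}, and in the analogous argument of Sections \ref{sec: energy}--\ref{sec: gv transition} here, absence of vortices is \emph{deduced from} the energy expansion, not assumed in its proof. The correct proof must control the cross term while allowing $v$ to vanish in the bulk; that is precisely what the potential-function device achieves \emph{everywhere}, not just in the boundary layer where you deploy it: one writes $2 f^2\mathbf{B}=-\nabla^{\perp}F$ for the radial profile $f$, integrates by parts to obtain $\int F\,\curl\,\mathbf{j}_v$, uses the pointwise bound $|\curl\,\mathbf{j}_v|\leq|\nabla v|^2$ (valid with no regularity or non-vanishing assumption on the phase), and proves positivity of the cost function $\tfrac12 f^2+F$. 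No polar decomposition of $v$ is ever performed.

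A second, related inaccuracy is your account of where $\Omega_0>\bar\Omega_0$ enters. You claim it guarantees the pointwise domination $|\mathbf{B}|^2\lesssim\eps^{-2}f^2$ needed for a Cauchy--Schwarz absorption in the interior. In the bulk one has $|\mathbf{B}|^2\sim\Omega_0\,\eps^{-4}$ while $\eps^{-2}f^2\sim\sqrt{\Omega_0}\,\eps^{-4}$, so this domination actually \emph{deteriorates} as $\Omega_0$ grows; a pointwise balance cannot be the mechanism. The hypothesis is used instead to ensure positivity of the \emph{integrated} quantity $\tfrac12 f^2+F$, where $F$ is built from $-2\Omega_0\int_y^{\infty}t\,f^2$; this is a global condition on the profile (compare \cite[Lemma 3.3]{CPRY3} and Section \ref{sec: positivity} of the present paper, where positivity is shown to hold precisely when $\Omega_0\geq\frac{4}{s+2}\big[\gvchem-\frac{1}{2\pi}\gvm^2(0)\big]$, the right-hand side growing only like $\sqrt{\Omega_0}$). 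So both the logical order of the two theorems and the role of the largeness of $\Omega_0$ need to be repaired before the proposal becomes a proof.
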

		
	\nt 
	The first result, although being a consequence of the energy asymptotics \eqref{teo: gpe asympt CPRY3}, is the most relevant one, since it shows the occurrence of the giant vortex transition for angular velocities of order $ \eps^{-4} $. The precise mathematical statement is a pointwise estimate in the bulk region of $ |\gpm| $ in terms of a strictly positive function, i.e., the minimizer of the functional appearing on the r.h.s. of \eqref{teo: gpe asympt CPRY3}: since the latter is bounded from below by a positive constant in the bulk and the difference is pointwise small in $ \eps $, also $ \gpm $ can not vanish there. 
	
	For the analysis of the present paper it is very important to remark that both results hold true if the angular velocity is expressed by \eqref{eq: speed} with $ \Omega_0 $ {\it large enough}, namely no precise estimate is {derived there} on the sharp value for the transition {(see also Remark \ref{rem: comparison})}. We indeed expect that the giant vortex structure appears as soon as $ \Omega $ becomes (asymptotically) larger than
	\beq
		\label{eq: othird}
		\Othird = \frac{\Omegac}{\eps^4},
	\eeq
	for some explicit value $ \Omegac $. In this paper we will indeed investigate such a question and exhibit a finite value $ \Omegac $ which is a good candidate for the sharp constant. Actually we are going to see that such a constant is a solution of some algebraic equation (see \eqref{eq: Omegac}) involving quantities relative to a limit problem independent of $ \eps $. Although we have not proven it yet, we do expect that next \eqref{eq: Omegac} has a unique solution, thus providing the sharp value of the critical velocity\footnote{Strictly speaking in order to show that $ \Omegac $ is the sharp value for the transition one should also prove that, below $ \Omegac $, vortices are still present in the bulk of the condensate, as done in \cite{R2} for hard anharmonic traps. We will come back to this question later.}. 
	
	We outline here the structure of the paper. Next Section contains the main results, i.e., the identification of the explicit value of the angular velocity for the transition to the giant vortex state, together with an asymptotic expansion of the GP ground state energy which is actually on the main ingredients of the proof of the above mentioned result. We also show that, as in \cite[Theorem 1.5]{CPRY3}, one can deduce a (better) estimate of the total winding number of any GP minimizer.
	
	Sections \ref{sec: preliminary} contains some preliminary estimates and a detailed analysis of the effective functionals that will play a significant role throughout the proofs. In Section \ref{sec: positivity} we prove the main properties of the cost function and in particular its positivity, which is the main mathematical tool used in the proof of the giant vortex transition as in several other works \cite{CPRY1, CPRY2, CPRY3, CR1, CRY}.
	
	Sections \ref{sec: energy} and \ref{sec: gv transition} are devoted to the proofs of the main results: we first (Section \ref{sec: energy}) obtain the asymptotic expansion of the GP energy by comparing suitable upper and lower bounds and then (Section \ref{sec: gv transition}) use such a result to deduce the pointwise estimate of $ |\gpm| $ showing the absence of vortices in the bulk. 

\medskip
\nt
{{\bf Notation:} In the asymptotic analysis $ \eps \to 0  $ we will often use the Landau symbols: given a positive function $ g $, we say that 
	\begin{itemize}
		\item 	$ f = \mathcal{O}(g) $ (resp. $ = o(g) $), if $ \lim_{\eps \to 0} |f|/g \leq C < \infty $ (resp. $ = 0 $);
		\item 	$ f \propto g $, whenever $ \lim_{\eps \to 0} |f|/g = C $, with $ 0 < C < \infty $;
		\item 	if $ f \geq 0 $, $ f \ll g $ is synonimous of $ f = o(g) $ and $ f \gg g $ simply means that $ g \ll f $.
	\end{itemize}
	Sometimes we will use the notation $ \OO(|\log\eps|^{\infty}) $ to indicate a quantity of order $ |\log\eps|^a $ for some finite but possibly large $ a $. Since such a quantity will typically appear multiplied by powers of $ \eps $, the explicit value of $ a $ will be irrelevant. 	\\
	We denote by $ \ba_{\varrho}(\xv) $ any two-dimensional ball centered in $ \xv $ and with radius $ \varrho $ and by $ \lfloor x \rfloor $ the integer part of the real number $ x $. The symbol $ C $ will stand for a finite constant independent of $ \eps $, whose value might change from line to line.
}

\medskip
\nt
{\bf Acknowledgements:} The authors acknowledge the support of MIUR through the FIR grant 2013 ``Condensed Matter in Mathematical Physics (COND-MATH)'' (code RBFR13WAET).

\section{Main Results}
\label{sec: main results}

The first non trivial observation to improve the results proven in \cite{CPRY3} is that instead of making a special choice of the giant vortex winding number ($ \lfloor \Omega \rfloor $ in \cite{CPRY3}), one might try and optimize w.r.t. such a parameter, so obtaining a better candidate for the giant vortex state. This leads to consider the functional obtained evaluating $ \gpf $ on a giant vortex ansatz $ f(x) e^{i n \vartheta}$ and minimize w.r.t. both $ f $ and $ n $ to find out the optimal giant vortex phase, i.e., explicitly
\beq
	\label{eq: gvbf}
	\gvbf[g] = \int _{-\eta}^{\eta} \diff y \: (1 + \eps^2 y) \lf\{ \half
		\lf|
			\nabla g(y)
		\ri| ^2
		+U _\beta (y) g ^2 (y)
		+\varepsilon ^2 y ^3 v(y) g ^2(y)
		+\frac{1}{2\pi}
		g ^4(y)
	\ri\},
\eeq
where we have set for convenience $ n = \Omega + \beta $ and exploited the exponential fall off of $ \gpm $ to cut the tails $ |y| \geq \eta \propto |\log\eps| $. The spatial coordinate has also been rescaled around $ |\xv| = 1 $ by setting $ x = 1 + \eps^2 y $. The potentials $ U_\beta $ and $ v $ are obtained via a Taylor expansion of $ W(x) $ around $ x = 1 $ and to the leading order in $ \eps $ are simply given by a shifted quadratic potential (see \eqref{eq: pot U} and \eqref{eq: pot v} for their explicit expressions). We remark however that in $ U_\beta $, the parameter $ \beta $ always appears multiplied (at least) by $ \eps^2 $, so showing that the correction is only lower order. 

Setting $ \gvbe : = \inf_{\| f \| = 1} \gvbf[f] $ and denoting by $ \gvbm $ the corresponding minimizer, which can be proven to exist and be unique (up to multiplication by a phase factor) (see Proposition \ref{pro: min gvbf}), one can subsequently minimize w.r.t. $ \beta \in \R $, obtaining the energy $ \gcre $, an optimal phase $ \betas $ and a density $ \gcrm $, i.e.,
\beq
	\label{eq: gcre}
	\gcre := \min_{\beta \in \R} \gvbe = \gve_{\betas} = \gvf_{\betas}[\gcrm].
\eeq

In Subsection \ref{sec: betas} we will prove that $ \betas = \OO(1) $, so that, by the above argument, one expects the functional $ \gvbf $ to be close in the limit $ \eps \to 0 $ to the following simplified giant vortex functional
\beq
	\label{eq: gvf}
	\gvf[g] = \int _{\R} \diff y
	\lf\{
		\tx\frac{1}{2}\lf(
			g '
		\ri) ^2
		+ \frac{\alpha ^2}{2} y^2 g ^2 
		+\frac{1}{2\pi} g ^4
	\ri\},
\eeq 
with ground state energy $ \gve $ and minimizer $ \gvm $, i.e.,
\beq
	\gve : = \inf_{g \in \gvdom} \gvf[g] = \gvf[\gvm],
\eeq
where 
\bdm
	\gvdom : = \lf\{ g \in H^1(\R) \: \big| \: y g \in L^2(\R), \lf\| g \ri\|_{L^2(\R)}  = 1 \ri\}.
\edm
Here we have denoted for short
\beq
	\label{eq: alpha}
	\alpha : = \Omega_0 \sqrt{s+2}.
\eeq
The minimizer $ \gvm $ solves the variational equation
\beq
	- \tx\frac{1}{2} g^{\prime\prime} + \frac{1}{2} \alpha^2 y^2 g + \frac{1}{\pi} g^3 = \gvchem g,
\eeq
where $ \gvchem = \gve + \frac{1}{2\pi} \lf\| \gvm \ri\|_4^4 $.

We are now in position to introduce the explicit value of the constant $ \Omegac $ appearing in the critical value of the angular velocity $ \Othird $, which can be expressed in terms of the critical quantities associated with the effective one-dimensional functional $ \gvf $ and, specifically, $ \gvm $ and $ \gvchem $: we denote by $ \Omegac $ the {\it largest} solution of the equation
\beq
	\label{eq: Omegac}
 	\framebox{$\Omega_0 = \disp\frac{4}{s+2} \lf[ \gvchem - \frac{1}{2\pi} \gvm^2(0) \ri],$}
\eeq
where the r.h.s. depends on $ \Omega_0 $ through $ \gvchem $ and $ \gvm $. The existence of such a solution is proven in Proposition \ref{pro: Omegac}.
Note that thanks to the estimate $ \lf\| \gvm \ri\|^2_{\infty} \leq  \pi \gvchem $ (see \eqref{eq: gvm infty bound}), $ \Omegac > 0 $.

Before stating the main result of this paper, we have to define more precisely the region we identify with the bulk of the condensate: we set for any $ a > 0 $
\beq
	\label{eq: bulk}
	\annd : = \lf\{ \xv \in \R^2 \: | \: \gvm\lf(\tx\frac{x-1}{\eps^2}\ri) \geq |\log\eps|^{-a} \ri\},
\eeq
and observe that by the exponential decay proven in Proposition \ref{pro: gpm exp small}, $ \lf\| \gpm \ri\|_{L^2(\annd)} = 1 + o(1) $, i.e., it certainly contains the bulk of the system.

%\vs

\begin{teo}[Absence of vortices in $ \annd $]
	\label{teo: no vortices}
	\mbox{}	\\
	If $ \Omega = \Omega_0/ \eps^{4} $ with $ \Omega_0 > \Omegac $ as $ \eps \to 0 $, then no GP minimizer $ \gpm $ contains vortices in $ \annd $. More precisely%\footnote{We denote by $ \OO(|\log\eps|^{\infty}) $ any quantity which is bounded by a possibly large power of $ |\log\eps| $. Since such a quantity will always come multiplied by a positive power of $ \eps $, it would produce an error $ o(1) $ anyway.} 
	for any $ \xv \in \annd $
	\beq
		\label{eq: no vortices}
		\framebox{$\lf| \gpm(\xv) \ri|  = \disp\frac{1}{\sqrt{2\pi} \eps} \: \gvm\lf(\tx\frac{x-1}{\eps^2}\ri)  \lf(1 + \OO(\eps^{1/2} |\log\eps|^{\infty}) \ri)$.}
	\eeq
\end{teo}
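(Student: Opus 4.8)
The plan is to derive the pointwise bound \eqref{eq: no vortices} from a sharp two-term expansion of $\gpe$ by means of the \emph{cost function} (density) method, the hypothesis $\Omega_0 > \Omegac$ entering only through a strict positivity that makes the method work. Throughout I would use the exponential decay of $\gpm$ away from $x=1$ (Proposition \ref{pro: gpm exp small}) to confine every integral to the bulk annulus $\ab$, and take as reference profile the rescaled minimizer of the one-dimensional problem, $g_{\mathrm{gv}}(x):=(\sqrt{2\pi}\,\eps)^{-1}\gcrm\big((x-1)/\eps^2\big)$, together with the optimal integer winding number $n$; since $\betas=\OO(1)$ (Subsection \ref{sec: betas}) one has $n=\Omega+\OO(1)$, and by the closeness of $\gvbf$ to $\gvf$ one may replace $\gcrm$ by $\gvm$ in the final statement at negligible cost.

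The first ingredient, proved in Section \ref{sec: energy}, is the energy asymptotics $\eps^{4}\gpe = \gcre + o(1)$ with an error controlled finely enough. The upper bound comes from testing $\gpf$ on $g_{\mathrm{gv}}(x)e^{in\vartheta}$ and Taylor-expanding $W$ around $x=1$, which reproduces $\eps^{-4}\gcre$ to leading order; the delicate half is the matching lower bound. For the latter, and this is the heart of the argument, I would take an arbitrary minimizer, write $\gpm=g_{\mathrm{gv}}\,e^{in\vartheta}\,v$ on $\ab$, and substitute into $\gpf$, using the Euler--Lagrange equation of $g_{\mathrm{gv}}$ to integrate by parts. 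Setting $\mathbf{B}:=\aavoo-\nabla(n\vartheta)$ for the effective vector potential, this yields the decomposition
\[
\gpf[\gpm]=\gpf\big[g_{\mathrm{gv}}e^{in\vartheta}\big]+\int_{\ab}\diff\xv\;g_{\mathrm{gv}}^2\Big\{\tfrac12\big|(\nabla-i\mathbf{B})v\big|^2+\eps^{-2}g_{\mathrm{gv}}^2\big(1-|v|^2\big)^2\Big\}-\mathcal{C}[v]+\mathcal{R},
\]
where $\mathcal{C}[v]$ is the indefinite supercurrent term and $\mathcal{R}$ collects the remainders from the Jacobian and the Taylor truncation.

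The positivity of the cost function is then used to dominate $\mathcal{C}[v]$ by the good terms: via an integration by parts (Stokes) one bounds $\mathcal{C}[v]$ by the integral of a cost function $\mathcal{K}\ge 0$ against the vorticity, and $\mathcal{K}\ge 0$ on $\ab$ is precisely the content of Section \ref{sec: positivity}. Here is where $\Omega_0>\Omegac$ enters: in the rescaled variable the sign of $\mathcal{K}$ is governed by $\gvchem-\tfrac{1}{2\pi}\gvm^2$, and \eqref{eq: Omegac} is exactly the marginal balance $\Omega_0=\Omegac$ at the centre $y=0$, so $\Omega_0>\Omegac$ gives the strict inequality (Proposition \ref{pro: Omegac}). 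Combining the resulting lower bound with the upper bound forces the fluctuation functional to be small, i.e.\ $\int_{\ab}g_{\mathrm{gv}}^2|\nabla v|^2$ and $\int_{\ab}\eps^{-2}g_{\mathrm{gv}}^4(1-|v|^2)^2$ are both $\OO(\eps^{-4}\cdot\text{small})$, so $v$ is close to a unimodular constant in the weighted $H^1$ norm.

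The last step is to upgrade this averaged control to the pointwise identity \eqref{eq: no vortices}. I would combine a priori $L^\infty$ and gradient bounds on $\gpm$ (from the variational equation \eqref{eq: GP variational} and elliptic regularity) with the weighted energy smallness of $v$, running a local Sobolev/mean-value argument on balls of radius $\sim\eps^2$; dividing by $g_{\mathrm{gv}}^2$, which on $\annd$ is bounded below by $\sim|\log\eps|^{-a}$ by definition \eqref{eq: bulk}, is what produces the $|\log\eps|^\infty$ factors and yields $\big||v|-1\big|=\OO(\eps^{1/2}|\log\eps|^\infty)$ uniformly. Since the right-hand side of \eqref{eq: no vortices} is then bounded below by a positive constant on the bulk, $\gpm$ cannot vanish there, which is the claimed absence of vortices. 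The \textbf{main obstacle} is twofold: establishing the positivity of $\mathcal{K}$ with the \emph{sharp} threshold $\Omegac$ (which requires a precise analysis of the limiting profile $\gvm$ and of the subleading Taylor terms of $W$), and keeping the remainder $\mathcal{R}$ and the boundary terms generated by the cut-off at $\partial\ab$ within the tight error budget demanded by the final pointwise statement.
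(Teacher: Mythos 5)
Your proposal follows essentially the same route as the paper: splitting $\gpm$ against the optimal giant-vortex profile $\gcrm$ with phase $\Omega+\betas$, integrating the supercurrent term by parts via a potential function, invoking the positivity of the cost function for $\Omega_0>\Omegac$ to match the upper bound and force $\int\gcrm^4(1-|u|^2)^2=\OO(\eps^6)$, and then upgrading to the pointwise estimate by combining this with an $L^\infty$ gradient bound on $u$ obtained from the variational equation and elliptic (Gagliardo--Nirenberg) regularity. The only quantitative slip is the ball radius in the final persistence argument: with $\|\nabla u\|_{L^\infty(\ab)}=\OO(\eps^{-2}|\log\eps|^{\infty})$ the deviation only persists on balls of radius $\sim\eps^{5/2}|\log\eps|^{\infty}$, not $\eps^2$, and it is precisely this balance that produces the exponent $\eps^{1/2}$ in \eqref{eq: no vortices}.
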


\begin{rem}[Giant vortex structure]
	\mbox{}	\\
	The pointwise estimate \eqref{eq: no vortices} suggests that $ |\gpm| $ is approximately radial within $ \annd $. As already mentioned, this does not imply that the rotational symmetry is restored, since one expects that $ |\gpm| $ is far from being radial in the inner region $ x \leq \xin $, where several vortices should presumably be distributed more or less uniformly. In any case no GP minimizer is invariant under rotations if $ \Omega $ is large enough \cite[Theorem 1.6]{CPRY3}, e.g., in the giant vortex regime.
\end{rem}

\begin{rem}[Third critical velocity]
	\label{rem: sharp}
	\mbox{}	\\
	In Proposition \ref{pro: Omegac} we will prove that the equation \eqref{eq: Omegac} has a solution. Although we do not prove it, we strongly believe that such a solution is in fact unique and identifies the sharp constant in the value of the third critical speed.
	\newline
	More precisely Theorem \ref{teo: no vortices} indicates that above $ \Omegac/\eps^{4} $ the system undergoes the phase transition to the giant vortex state and the bulk of the condensate becomes vortex free. Hence 
	\beq
		\Othird \leq \frac{\Omegac}{\eps^4}.
	\eeq
	We actually expect that
	$
		\Othird =\Omegac/\eps^4
	$,
	which obviously requires to prove that the solution to \eqref{eq: Omegac} is unique. In addition one should also prove that for slower rotations vortices are still present in the bulk of the system. We plan to attack such a problem in a future work, but here we want to stress that the negativity of the cost function (see next Section \ref{sec: heuristics}) for $ \Omega_0 < \Omegac $ is a very strong indication that vortices are indeed convenient in this case and thus the sharp value of the critical speed is precisely $ \Omegac/\eps^4 $.
\end{rem}

{\begin{rem}[Comparison with \cite{CPRY3}]
	\label{rem: comparison}
	\mbox{}	\\
	We want here to discuss in more details the comparison between Theorem \ref{teo: no vortices} and the analogous result proven in \cite[Theorem 1.3]{CPRY3}: in principle, one could indeed derive an estimate of the threshold $ \bar{\Omega}_0 $ for the transition to the giant vortex state there and then it would be natural to compare it with the explicit value found here. However we provide here some heuristic arguments showing that such a comparison is actually not needed (see however next Remark \ref{rem: cost functions} for futher details).	\\
	First of all an explicit estimate of $ \bar{\Omega}_0 $ is not an easy task to achieve, due to the proof structure in \cite{CPRY3}: the result proven there is indeed obtained through an asyptotic analysis as $ \Omega_0 \to \infty $ and one should then estimate all the coefficients of the error terms appearing in the formulae. Such quantities ultimately depends on the pointwise estimate of the difference between the giant vortex profile and the ground state of the harmonic oscillator given in \cite[Proposition 3.5]{CPRY3}, which is not explicit at all. 	\\
	However, even assuming that one could obtain a sharp value $ \bar{\Omega}_0 $, there are strong reasons to believe that, unlike $ \Omegac $ (see also the previous Remark \ref{rem: sharp}), it can not be the coefficient of the critical speed. First of all the condition $ \Omega_0 > \bar{\Omega}_0 $ guarantees the positivity of the vortex energy cost in \cite{CPRY3} (Remark \ref{rem: sharp}) and therefore $ \bar{\Omega}_0 > \Omegac $. Moreover, as explained in \cite{CPRY3} (see also \cite{CPRY2}), when $ \Omega_0 \to \infty $, another transition takes place, i.e., the condensate density profile goes from a TF-like shape \eqref{eq: tfm} to a gaussian function minimizing some suitable harmonic energy. The key fact is that such a transition is expected to take place \underline{after} the giant vortex one. Indeed here we show that, for finite $ \Omega_0 $, when the profile change has not yet occurred, the condensate is already in a giant vortex state. On the opposite, a quick inspection to the proof in \cite{CPRY3} reveals that the transition to the giant vortex is proven there by imposing that the profile is already gaussian. Hence any so obtained threshold value can not be meaningful. 
\end{rem}}

\begin{rem}[Giant vortex density]
	\mbox{}	\\
	We have formulated the pointwise estimate \eqref{eq: no vortices} with $ \gvm $, but an analogous statement holds true with $ \gvm $ replaced with $ \gcrm $. The error in \eqref{eq: no vortices} is indeed so large that one can not appreciate the difference between the two reference profiles (see Proposition \ref{pro: infty est diff gvm}). Let us stress however that the use of $ \gcrm $ as a reference profile in the proof is on the opposite crucial to obtain the result (compare, e.g., the asymptotics \eqref{eq: energy asymptotics} and \eqref{eq: energy asympt gv}).
\end{rem}

	\nt
	The absence of vortices proven in Theorem \ref{teo: no vortices} and the pointwise estimate of $  \gpm$ follows from a refined result about the energy asymptotics in the same regime, that we state in the following

\begin{teo}[Energy asymptotics]
	\label{teo: energy asymptotics}
	\mbox{}	\\
	If $ \Omega = \Omega_0 \eps^{-4} $ with $ \Omega_0 > \Omegac $ as $ \eps \to  0 $, then
	\beq
		\label{eq: energy asymptotics}
		\framebox{$\gpe = \disp\frac{\gves}{\eps^4} + \OO(1).$}
	\eeq
\end{teo}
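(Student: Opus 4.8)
The plan is to establish the two-sided bound $\gpe = \gves/\eps^4 + \OO(1)$ by matching an upper bound and a lower bound, both expressed in terms of the effective one-dimensional energy $\gves$ associated with the optimized giant vortex functional. The heuristic driving the whole argument is that, in the regime $\Omega = \Omega_0/\eps^4$, the condensate is confined to a thin annulus of width $\OO(\eps^2)$ around $x=1$, and after the change of variables $x = 1 + \eps^2 y$ together with the giant-vortex ansatz $\psi = f(x) e^{in\vartheta}$ with $n = \Omega + \beta$, the rescaled GP energy reduces, to leading order, to the functional $\gvbf$ of \eqref{eq: gvbf}, whose $\beta$-optimized value $\gcre$ is in turn close to $\gves$ for the simplified functional \eqref{eq: gvf}. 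The factor $\eps^{-4}$ comes from the rescaling Jacobian: the $\OO(\eps^2)$ annulus width combined with the coefficients in \eqref{eq: gpf} produces the stated scale.

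For the \textbf{upper bound}, I would use the trial state $\psitrial(\xv) = \frac{1}{\sqrt{2\pi}\,\eps}\,\gcrm\!\left(\tfrac{x-1}{\eps^2}\right) e^{i\lfloor \Omega+\betas\rfloor \vartheta}$, built from the optimal one-dimensional profile $\gcrm$ and the integer winding number closest to the optimal $\Omega + \betas$ (recall $\betas = \OO(1)$ by Subsection \ref{sec: betas}). Inserting this into $\gpf$, the radial part reproduces $\gvbf_{\betas}[\gcrm] = \gcre$ after the rescaling, while the angular kinetic term $\tfrac{1}{2}|(\partial_\vartheta/x - \Omega x)\psi|^2$ contributes the magnetic/centrifugal energy; the replacement of the optimal real winding number $\Omega+\betas$ by its integer part costs at most $\OO(1)$ because the discrepancy is bounded by $1$ and enters $U_\beta$ multiplied by $\eps^2$ (as remarked after \eqref{eq: gvbf}). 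The exponential smallness of $\gcrm$ for $|y| \gtrsim |\log\eps|$ (Proposition \ref{pro: gpm exp small}) controls the tails one neglects by cutting at $|y|=\eta$, and the gap between $\gcre$ and $\gves$ is $\OO(1)$ by the comparison of the two one-dimensional problems (the Taylor remainder $\eps^2 y^3 v(y)$ and the prefactor $(1+\eps^2 y)$ each contribute $o(1)$ once rescaled and integrated against the exponentially localized $\gcrm$). This yields $\gpe \leq \gves/\eps^4 + \OO(1)$.

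For the \textbf{lower bound}, the standard strategy is energy decoupling: write any GP minimizer as $\gpm = g_\star\, v$ where $g_\star(\xv) = \frac{1}{\sqrt{2\pi}\,\eps}\gcrm(\tfrac{x-1}{\eps^2}) e^{i\lfloor \Omega+\betas\rfloor\vartheta}$ is the reference giant-vortex state and $v$ is a (complex) remainder. Substituting and using the variational equation for $\gcrm$, the cross terms integrate away and one obtains $\gpf[\gpm] = \gcre/\eps^4 + \int |g_\star|^2\bigl(\tfrac12|\nabla v|^2 + \ldots\bigr)$, where the remaining quadratic form is the \emph{cost function} whose positivity is established in Section \ref{sec: positivity}. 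Restricting attention to the bulk $\annd$, where $\gpm$ carries mass $1+o(1)$, the positivity of the cost function forces the remainder energy to be nonnegative up to errors controlled by the exponential decay outside $\annd$ and by the localization estimates. Matching with the upper bound gives \eqref{eq: energy asymptotics}.

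The \textbf{main obstacle} I anticipate is the lower bound, specifically controlling the kinetic energy of the phase of $v$ and ensuring that the $\eps$-dependent corrections ($\eps^2 y^3 v(y)$ in $\gvbf$, the Jacobian $1+\eps^2 y$, and the error in replacing $\Omega+\betas$ by $\lfloor\Omega+\betas\rfloor$) all genuinely sit at scale $\OO(1)$ rather than leaking into the $\eps^{-4}$ leading term; this is precisely where the positivity of the cost function proven in Section \ref{sec: positivity} does the essential work, since it is what guarantees that the vortex-bearing competitors cannot lower the energy below $\gves/\eps^4$ when $\Omega_0 > \Omegac$. A secondary technical point is justifying the reduction from $\gcre$ (the true $\beta$-optimized one-dimensional energy) to $\gves$ (the limiting constant-coefficient problem) with an error no worse than $\OO(1)$, which relies on the detailed analysis of the effective functionals in Section \ref{sec: preliminary}.
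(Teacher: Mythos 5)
Your strategy coincides with the paper's: the upper bound is obtained from exactly the trial state $\psitrial = (2\pi)^{-1/2}\eps^{-1}\gcrm\lf(\tx\frac{x-1}{\eps^2}\ri)e^{i\lfloor\Omega+\betas\rfloor\vartheta}$ (Proposition \ref{pro: upper bound}), and the lower bound from the Lassoued--Mironescu decoupling of $\gpm$ against the optimal giant vortex state, followed by the potential-function integration by parts and the positivity of the cost function on $\annm$ (Propositions \ref{pro: K positive} and \ref{pro: lower bound}). One correction is needed, however. You read $\gves$ as the ground-state energy of the simplified limiting functional \eqref{eq: gvf} and assert that the gap between it and $\gcre$ contributes only $\OO(1)$ after division by $\eps^4$; that is not justified. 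Proposition \ref{pro: gv energy diff} only gives $\gcre - \gve = \OO(\eps^4|\log\eps|^7)$, so replacing $\gcre$ by the energy of \eqref{eq: gvf} degrades the remainder to $\OO(|\log\eps|^7)$ --- this is precisely why the paper states \eqref{eq: energy asympt gv} separately with the worse error. In the theorem, $\gves$ denotes the $\beta$-optimized energy $\gcre = \min_\beta \gvbe$ itself, so no such comparison is required and your two-sided bound in terms of $\gcre$ already yields \eqref{eq: energy asymptotics}; but the extra reduction you describe, with the error you claim for it, is false as stated. A secondary point: in the lower-bound decoupling the paper deliberately keeps the non-integer phase $\Omega+\betas$, accepting a quasi-periodic $u$ as in \eqref{eq: u semiperiodic}, because this is what makes $2\gcrm^2\,\mathbf{a} = -\partial_x F\, \mathbf{e}_{\vartheta}$ exact and forces $F(\pm\eta)=0$ through the optimality condition \eqref{eq: optimality}; with your integer winding $\lfloor\Omega+\betas\rfloor$ in the reference state these cancellations, and hence the boundary terms in the integration by parts, would have to be re-examined.
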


\begin{rem}[Energy expansion]	
	\mbox{}	\\
	The leading term $ \gves /\eps^{4} $ contains the main energy contribution due to the inhomogeneity of the GP profile together with the subleading kinetic energy of $ |\gpm| $. The absence of vortices in $ \annd $ can be read in the very small remainder term $ \OO(1) $. It is indeed interesting to compare \eqref{eq: energy asymptotics} with the analogous result \cite[Theorem 1.4]{CPRY3}, where the error term is much larger, i.e., $ \OO(|\log\eps|^{9/2}) $, in addition to the fact that the result proven there holds true only for $ \Omega_0 $ large enough.
	
	Notice however that the coefficient of the leading term $ \gves $ still depends on $ \eps $, through the boundaries of the integration domain as well as the optimal phase $ \betas $ and the potential $ U_{\betas} $. If one wanted to extract a proper asymptotic expansion then the natural statement would be
	\beq
		\label{eq: energy asympt gv}
		\gpe = \frac{\gve}{\eps^4} + \OO(|\log\eps|^7),
	\eeq
	with a much worse error term.
\end{rem}

	\nt
	Thanks to the pointwise statement \eqref{eq: no vortices}, one can deduce that $ \gpm $ does not vanish on $ \annd $. In particular for any $ R = 1  + \OO(\eps^2) $, $ |\gpm| > 0 $ on $ \partial \ba_R $. Hence it is possible to define the winding number of $ \gpm $ on $ \partial \ba_R $ for any such $ R $. A consequence of the energy asymptotics and the estimate \eqref{eq: no vortices} is thus the following

\begin{teo}[Winding number]
	\label{teo: winding number}
	\mbox{}	\\
	Let $ \Omega = \Omega_0 \eps^{-4} $ with $ \Omega_0 > \Omegac $ and $ R $ be any radius such that $ R = 1 + \OO(\eps^2) $ as $ \eps \to  0 $, then
	\beq
		\framebox{$ \deg \lf(\gpm, \partial \ba_R\ri) = \disp\frac{\Omega_0}{\eps^4} + \OO(1). $}
	\eeq
\end{teo}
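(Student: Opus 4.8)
The plan is to extract the winding number (degree) of $\gpm$ on $\partial\ba_R$ from the energy asymptotics of Theorem~\ref{teo: energy asymptotics} together with the pointwise estimate \eqref{eq: no vortices}. The key observation is that, once we know $\gpm$ does not vanish on $\annd$, we may write $\gpm = |\gpm| e^{i\varphi}$ locally and the degree on $\partial\ba_R$ is the integer $\frac{1}{2\pi}\oint_{\partial\ba_R} \partial_\vartheta\varphi \, \diff\vartheta$. Since the optimal giant vortex phase was found to be $n = \Omega + \betas$ with $\betas = \OO(1)$ (as established in Subsection~\ref{sec: betas}), and since $\deg$ is an \emph{integer}, the natural guess is that the degree equals $\lfloor \Omega \rfloor$ or a nearby integer, which is $\Omega_0/\eps^4 + \OO(1)$ as claimed.

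First I would fix a radius $R = 1 + \OO(\eps^2)$ lying inside $\annd$; by \eqref{eq: no vortices} we have $|\gpm| > 0$ on $\partial\ba_R$, so $d_R := \deg(\gpm,\partial\ba_R)$ is well defined. The heart of the argument is an energy lower bound that isolates the cost of a degree differing from the optimal value. Writing $\gpm = |\gpm| e^{i\varphi}$ near $\partial\ba_R$ and substituting into $\gpf$, the kinetic term produces, after integrating the angular derivative, a contribution controlled from below by the one-dimensional reduced energy evaluated at winding number $d_R$. More precisely, I would compare $\gpe$ against $\min_{\|f\|=1}\gpf[f(x)e^{i d_R \vartheta}]$, which by the analysis of $\gvbf$ and \eqref{eq: gcre} behaves like $\gcre/\eps^4$ plus a correction that is \emph{quadratic} in the deviation $\beta = d_R - \Omega$. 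The point is that the reduced functional $\gvbf$ depends on the winding number only through $U_\beta$, where $\beta$ enters multiplied by $\eps^2$; tracking the $\beta$-dependence, the energy penalty for choosing $d_R = \Omega + \beta$ instead of the optimum $\Omega + \betas$ is of order $\eps^4 \cdot \beta^2 \cdot \Omega^2 \sim \beta^2$ after accounting for the factor $\eps^{-4}$ from the rescaling. (Here I use $\Omega \sim \eps^{-4}$, so $\eps^2\beta$ appearing in $U_\beta$ yields, through the $\Omega^2 W$ term, an energy cost $\propto \beta^2$.)

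Second, I would invoke Theorem~\ref{teo: energy asymptotics}: since $\gpe = \gves/\eps^4 + \OO(1)$ and $\gves = \gvbe_{\betas}$ is the minimum over real phases $\beta$, the quadratic lower bound forces
\[
	(d_R - \Omega - \betas)^2 \leq \OO(1).
\]
Combined with $\betas = \OO(1)$ and $\Omega = \Omega_0/\eps^4$, this yields $d_R = \Omega_0/\eps^4 + \OO(1)$, which is exactly the claim. The $R$-independence (for all admissible $R$) follows because any two such radii lie in $\annd$ where $\gpm$ is non-vanishing, so the degree is constant on the connected annular region by homotopy invariance.

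The main obstacle I anticipate is making the quadratic energy penalty rigorous, i.e., proving that a winding number away from the optimum genuinely costs energy of order one. This requires a lower bound on $\gpf[\gpm]$ in terms of the reduced one-dimensional energy at the \emph{actual} degree $d_R$, not merely at the optimal phase; the subtlety is that $\gpm$ need not be of the product form $f(x)e^{i d_R\vartheta}$, so one must control the difference between the true angular behaviour and a pure phase. The standard route is to split the energy into a radial part and an angular (vortex-energy) part, use the pointwise estimate \eqref{eq: no vortices} to replace $|\gpm|$ by $\gcrm$ up to small errors, and then bound the angular kinetic energy from below by $\frac{1}{2}\int |\gpm|^2 (\partial_\vartheta\varphi - \text{(reference)})^2$, which is nonnegative and vanishes precisely at the optimal phase. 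Controlling the cross terms and the error propagation from \eqref{eq: no vortices} — whose remainder $\OO(\eps^{1/2}|\log\eps|^\infty)$ must be shown not to spoil the $\OO(1)$ accuracy after multiplication by $\eps^{-4}$ — is the delicate bookkeeping that constitutes the real work.
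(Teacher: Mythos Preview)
Your approach differs substantially from the paper's, and the obstacle you correctly identify at the end is in fact fatal for your method. You propose to bound $\gpe$ from below by the giant-vortex energy at phase $d_R$, then invoke the quadratic dependence of $\gvbe$ on $\beta$ to trap $d_R$ near $\Omega+\betas$. But obtaining that lower bound requires replacing $|\gpm|$ by a radial profile, and the only tool you invoke for this is the pointwise estimate \eqref{eq: no vortices}, whose relative error $\OO(\eps^{1/2}|\log\eps|^{\infty})$ becomes $\OO(\eps^{-7/2}|\log\eps|^{\infty})$ at the energy scale $\eps^{-4}$ --- far too large to retain $\OO(1)$ precision. There is no ``delicate bookkeeping'' that rescues this; the pointwise estimate is simply not strong enough.

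The paper's proof bypasses this entirely. It uses the decomposition $\gpm = (2\pi\eps^2)^{-1/2}\,\gcrm\, u\, e^{i(\Omega+\betas)\vartheta}$ already introduced in the lower-bound proof, so that $\deg(\gpm,\partial\ba_R) = \Omega + \betas + \deg(u,\partial\ba_R)$, and it remains to show $\deg(u,\partial\ba_R)=\OO(1)$. This is done by writing the degree via Stokes' formula as $\pi^{-1}\int \nabla^{\perp}\chi\cdot\Im(\nabla u/u)$ over a thin annulus of width $\OO(\eps^2)$, then bounding by $C\eps^{-1}\|\nabla u\|_{L^2}$ using Cauchy--Schwarz and $|u|\simeq 1$. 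The crucial input is not the $L^{\infty}$ estimate \eqref{eq: no vortices} but the much sharper $L^2$ control $\int K|\nabla u|^2 = \OO(\eps^2)$ coming from the energy lower bound \eqref{eq: lb final}, together with the fact that $K\geq C>0$ near $x=1$ (a consequence of the strict inequality $\kgv(0)>0$ when $\Omega_0>\Omegac$). This yields $\|\nabla u\|_{L^2(\ba_R\setminus\ba_{\tilde R})}=\OO(\eps)$, hence $\deg(u)=\OO(1)$ directly, with no need to compare energies at different winding numbers.
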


\nt
{Note that the combination of the above result with the proof of the rotational symmetry breaking given in \cite[Theorem 1.6]{CPRY3} implies the presence of vortices in the inner hole region where $ \gpm $ is exponentially small.}

%\vs
%\nt
%[Circulation and energy asymptotics]

\subsection{Heuristics}
\label{sec: heuristics}

Before discussing the proofs of the main results, we briefly expose the proof strategy from a heuristic point of view, i.e., not tracking down the error terms and neglecting most technical points. As usual the main result about the behavior of the condensate wave function is deduced from the energy asymptotics \eqref{eq: energy asymptotics}. We thus focus on such a proof.

Most of the relevant features of a fast rotating Bose-Einstein condensate were already discussed in details in \cite{CPRY3} and recalled in the Introduction. Here we take as a starting point the effective functional \eqref{eq: gvbf} which is expected to provide the leading order term in the energy asymptotics in units $ \eps^{-4} $. Note that the ground state energy of $ \gvbf $ always provides an upper bound to $ \gpe $ for any {\it integer} phase, i.e., whenever $ \Omega + \beta \in \Z $. Actually the same upper bound can be proven to hold true up to some small error term even if $ \Omega + \beta $ is not an integer (see Section \ref{sec: energy ub}). Hence we can neglect the upper bound part of the proof and discuss only the lower estimate to $ \gpe $.

A preliminary step which is already described in details in \cite{CPRY3} is the restriction of the integration in $ \gpf $ to the bulk of the condensate, i.e., to an annulus centered in the origin with radius $ \simeq 1 $ and width $ \OO(\eps^2) $. This can be done by exploiting the exponential decay of $ \gpm $ outside. From now on we will then assume that the integration in $ \xv $ is restricted to the annulus $ \lf| 1 - x \ri| \leq \OO(\eps^2|\log\eps|) $.

The main steps in the energy lower bound are then the following:
\ben
	\item \underline{optimal giant vortex phase and profile}: we minimize $ \gvbe $ w.r.t. to $ \beta \in \R $ and obtain a minimizing $ \betas $ and an associated density $ \gvms $. It is crucial to observe that such a minimization yields an additional equation involving $ \gvms $, which is in fact nothing but the vanishing of the first derivative of $ \gvbe $ w.r.t. $ \beta $. Such an equation will play a crucial role at point 4 below;
	\item \underline{splitting of the energy}: using a technique introduced in \cite{LM}, which is now rather standard, we decouple $ \gpm = \frac{1}{\sqrt{2\pi} \eps}\gvms\lf(\frac{x-1}{\eps^2}\ri) u(\xv) $ and, exploiting the variational equation satisfied by $ \gvms $, we obtain
		\beq
			\gpe = \frac{\gves}{\eps^{4}} + \frac{\E[u]}{2\pi \eps^2},
		\eeq
		with $ u $ essentially minimizing the reduced energy functional
		\beq
			\label{eq: reduced energy}
			\E[u] = \int \mathrm{d}\xv \: \gvms^2 \left\{
			\tx\frac{1}{2} {|\nabla u|} ^2 + {\bf a}(\xv) \cdot {\bf j}_u(\xv)
			+\frac{1}{2\pi\varepsilon ^4} \gvms ^2{(1-|u| ^2)} ^2
		\right\},
		\eeq
		where the ``magnetic potential'' $ \av $ depends on $ \Omega $ and $ \betas $ and $ {\bf j}_u $ is the {\it superconducting current}
		\beq
			\label{eq: current}
			{\bf j}_u(\xv) = \tx\frac{i}{2} \lf( u \nabla u^* - u^* \nabla u \ri).
		\eeq
		Completing the lower bound means to show that $ \E[u] $ is positive;
	
	\item \underline{hydrodynamic estimate}: we note that the ``magnetic potential'' is divergence free and therefore it exists a {\it potential function} $ F(x) $ such that $ 2\gvms^2(x) \av(\xv) = - \nabla^{\perp} F(x) $. This trick was first used in \cite{CRY} in the context of the GP theory for rotating condensates. For later applications to the GL function see also \cite{CR2,CR3}. We can thus integrate by parts the second term in \eqref{eq: reduced energy} obtaining 
	\beq
		\label{eq: angular momentum term}
		\int \diff \xv \: F(x) \: \curl \lf( {\bf j}_u \ri).
	\eeq
	At this stage we observe that since $ \betas = \OO(1) $ and it appears in \eqref{eq: gvbf} always multiplied by $ \eps^2 $, a good approximation of the functional $ \gvf_{\betas} $ can be obtained by taking the limit $ \eps \to 0 $, which yields the functional \eqref{eq: gvf}, with ground state energy $ \gve $ and minimizer $ \gvm $. We can also replace $ F(x) $ with its limiting counterpart $ \fgv(x) $, which is in fact a negative function. The last step to estimate \eqref{eq: angular momentum term} is to use the trivial inequality $ |\curl \lf( {\bf j}_u \ri)| \leq \lf| \nabla u \ri|^2 $ and the negativity of $ \fgv $ to get the lower bound
	\beq
		\E[u] \geq \int \mathrm{d}\xv \: 
			\lf(\tx\frac{1}{2} \gvm^2 + \fgv \ri) {|\nabla u|} ^2,
	\eeq
	where we have also dropped the last positive term in \eqref{eq: reduced energy};

	\item \underline{positivity of the cost function}: the above lower bound suggests that any topological defect of $ u $ should carry an energy cost given by the {\it cost function}
		\beq
			\label{eq: kgv}
			\kgv = \half \gvm^2 + \fgv.
		\eeq
		Positivity of such a function in the bulk would then imply that vortices are not energetically favorable anywhere in the condensate. This is turn can be proven by direct inspection of the function itself. First we observe that both $ \gvm $ and $ \fgv $ are radial functions and we therefore change coordinates $ x = 1 + \eps^2 y $, so that in the new variable $ y $ the bulk of the condensate is basically the whole real line. In the new variable the explicit expression of $ \fgv $ (that we still denote by $ \fgv $) is
		\beq
			\label{eq: fgv}
			\fgv(y) = -2\Omega_0 \int_y^{\infty} \diff t \: t \: \gvm^2(t).
		\eeq
		Notice that by symmetry\footnote{Unlike $ \gvm $, the profile $ \gvms $ is not exactly symmetric, but $ F $ satisfies analogous properties thanks to the optimality condition of $ \betas $, i.e., the additional equation involving $ \gvms $ and $ \betas $ which was mentioned at point 1.}  of $ \gvm $, $ \fgv(-\infty) = 0 $ and $ \fgv(y) \leq 0 $ for any $ y \in \R $. The cost function $ \kgv $ is therefore smooth and $ \kgv(\pm \infty) = 0 $, so that, if it becomes negative, it must have a minimum. The derivative of $ \kgv $ can be easily computed
		\beq
			{\kgv}^{\prime}(y) = \gvm(y) \gvm^{\prime}(y) + 2\Omega_0 y \gvm^2(y),
		\eeq
		so that, by strict positivity of $ \gvm $, at any critical point $ y_0 $ for $ \kgv $, one has 
		\beq
			\label{eq: critical point}
			\gvm^{\prime}(y_0) = - 2\Omega_0 y_0 \gvm(y_0).
		\eeq		
		Now using the variational equation for $ \gvm $ and manipulating the expression \eqref{eq: fgv} of the potential function, it is possible to show that the cost function can be equivalently rewritten as
		\beq
			\kgv = \lf[\frac{1}{2}  + \Om _0 y ^2 + \frac{\Om _0}{\pi \alpha ^2} \gvm^2 - \frac{2 \Om _0 \gvchem}{\alpha ^2} \ri] \gvm ^2 - \frac{\Om _0}{\alpha ^2}{\gvm'}^2 
		\eeq
		and, inserting the condition \eqref{eq: critical point} satisfied at any minimum point $y_0 $ of $ \kgv $, we get
		\beq
			\kgv(y_0) = \lf[\frac{1}{2}  + \frac{\Om _0(s+1)}{s+2} y_0 ^2 + \frac{\Om _0}{\pi \alpha ^2} \gvm^2(y_0) - \frac{2 \Om _0 \gvchem}{\alpha ^2} \ri] \gvm ^2(y_0).
		\eeq
		Using the parity of $ \gvm $ as well as the variational equation, one can prove that the quantity between brackets on the r.h.s. of the expression above is positive if and only if it is positive at the origin (see Proposition \ref{pro: kgv positive})
		\beq
			\frac{1}{2}  +  \frac{\Om _0}{\pi \alpha ^2} \gvm^2(0) - \frac{2 \Om _0 \gvchem}{\alpha ^2} = \frac{1}{2}  +  \frac{2}{\Omega_0 (s+2)} \lf[ \frac{1}{2\pi} \lf\| \gvm \ri\|_{\infty}^2 -  \gvchem \ri] \geq 0	\quad	\Longleftarrow \quad \Omega_0 \geq \Omegac.
		\eeq
\een

Once the energy asymptotics is proven, the pointwise estimate of $ |\gpm|$, which allows to exclude the presence of vortices in the bulk for $ \Omega_0 > \Omegac $, is a simple consequence: putting back the positive term we have dropped in the lower bound, one first obtains an estimate of the region where $ |u| $ can differ from $ 1 $. Then combining this with an $ L^{\infty} $ estimate of the gradient of $ u $, one gets the result.

	It is worth mentioning at this stage a technical difference with previous approaches. Indeed in \cite{CPRY3} two potential functions were actually used instead of one, in order to get rid of boundary terms coming from the integration by parts described at  step 3 (see the discussion in \cite[Sect. C]{CPRY3}). Here on the opposite we are able to use only one potential function by estimating in a more refined way the boundary terms (compare, e.g., with next \eqref{eq: int by parts}). As in \cite{CPRY3} we also exploit the symmetry properties of the profile $ \gvms $, which is to a very good approximation invariant under reflections w.r.t. the origin.

\section{Preliminary Estimates}
\label{sec: preliminary}

Here we collect some useful technical results as well as the main properties of the effective functionals involved in the analysis. An important piece of information is contained in Section \ref{sec: positivity} where we prove the positivity of the cost function.

\subsection{Giant Vortex Functionals}
\label{sec: gv profiles}

We start by describing the derivation of the functional \eqref{eq: gvbf} from the GP energy. As anticipated in Section \ref{sec: main results} the idea is to evaluate the energy of a trial state of the form $ f(x) e^{i n \vartheta} $ in polar coordinates $ \xv = (x, \vartheta) $ and with $ n = \Omega + \beta $. In addition we assume that $ f $ is real as it will be for any giant vortex profile. The result of a rather simple computation is 
\beq
	\label{eq: proto gvf}
	\gpf[f e^{i n \vartheta}]
	= 2\pi \int_0^{\infty} \diff x \: x  \lf\{
		\half \lf|
			\nabla f
		\ri| ^2
		+ \Omega^2 \lf[
			\tx\frac{1}{2}\lf(
				x ^2
				-\frac{\Omega + \beta}{\Om}
			\ri) ^2
			\frac{1}{x ^2}
			+W(x)
		\ri] f ^2
		+ \frac{1}{\eps^2} f ^4
	\ri\}.
\eeq
Exploiting the exponential smallness of $ \gpm $ outside of the bulk of the condensate proven in \cite{CPRY3} and recalled in next Proposition \ref{pro: gpm exp small}, we can restrict the integration domain to the annulus
\beq
	\label{eq: ring}
	\ann:
	=\lf\{
		\xv \in \RR :\:
		\lf|
			1
			-\xv
		\ri|
		\le \eps ^2 \eta
	\ri\},\qquad
	\eta:
	= \tx\frac{\eta _0}{2\sqrt{\Om _0}}|\log\eps|,
\eeq
where $ \eta_0 >  0 $ is an arbitrary finite constant and the prefactor in the definition of $ \eta $ has been chosen of that form for further convenience. Thanks to \eqref{eq: gpm exp small}
\beq
	\label{eq: change of coordinates}
	\gpm(\xv) = \OO(\eps^{\infty}),	\qquad \mbox{for any } \xv \notin \ann,
\eeq
and the restriction is thus well motivated. In addition we will also see that a similar estimate holds true for any giant vortex profile. In terms of the one-dimensional functional \eqref{eq: proto gvf} we are then integrating in the interval $ [1-\eps^2 \eta,1+ \eps^2 \eta] $ and a change of variable is called for: setting 
\beq
	x = 1 + \eps ^2 y,	\qquad		g(y) = \sqrt{2\pi}\eps \: f(1 + \eps^2y)
\eeq
so that $ g $ is normalized in\footnote{We set in fact $ L^p_{\eta} : = L^p([-\eta,\eta],(1 + \eps^2 y) \diff y) $ for any $ 1 \leq p \leq \infty $.} $ L^2_{\eta} : = L^2([-\eta,\eta], (1 + \eps^2y) \diff y) $, we obtain the energy
\bml{
	\label{eq: quasi gvf}
	\tgpf[g]
	= \frac{1}{\varepsilon ^4}
	\int _{-\eta}^{\eta}
	\diff y \: (1 + \eps^2 y)
	\lf\{
		\tx\frac{1}{2}
		\lf(
			g^{\prime} 
		\ri)^2 +
	\ri.
	\\
	\lf.
		+\eps ^4 \Om ^2
		\lf[
			\tx\frac{1}{2}\lf(
				1 + \eps^2 y
				-\frac{\Om + \beta}{\Om\xinyp}
			\ri) ^2
			+W(\xiny)
		\ri]g ^2
		+ \tx\frac{1}{2\pi}
		g ^4
	\ri\}.
}
Now we expand $ W(1 + \eps^2 y) $ in Taylor series around $ y = 0 $ to get
\beq
	\label{eq: taylor}
	W(1 + \eps^2 y)
	= \tx\frac{s-2}{2} \eps^4 y^2	+ \tx\frac{(s-2)(s-1)}{6} \eps^6 y^3 + \eps^8 \varphi(y)
\eeq
where $ \varphi (y) = \OO(y^4) $.
Using this fact we can rewrite the potential in \eqref{eq: quasi gvf} as (recall that $ \alpha^2 = \Omega_0^2 (s + 2) $)
\beq
	\eps ^4 \Om ^2
		\lf[
			\tx\frac{\lf( 2 \eps^2 y - \eps^4 \beta/\Omega_0 + \eps^4 y^2\ri)^2}{2(1 + \eps^2 y)^{2}}
			+W(\xiny)
		\ri] = U_{\beta}(y) + \eps^2 y^3 v(y),
\eeq
with $ v $ independent of $ \beta $ and of lower order w.r.t. to $ U_{\beta} $. Explicitly
\beq
	\label{eq: pot U}
	U _\beta(y):
	= \frac{1}{\lf(\xiny\ri) ^2}
	\lf(
		\frac{\alpha ^2}{2} y ^2
		-2 \Om _0 \eps ^2 \beta y
		-\Om _0 \eps ^4 \beta y ^2
		+\frac{1}{2}\eps ^4 \beta ^2
	\ri),
\eeq
\beq
	\label{eq: pot v}
	v(y):
	= \frac{\Om _0 ^2(s + (s-1)\eps ^2 y)}{\lf(\xiny\ri) ^2}
	+\frac{(s-1)(s-2) \Omega_0^2}{6}
	+\frac{\eps^2 \Om _0 ^2}{y ^3} \varphi\lf(y\ri).
\eeq
Some trivial estimate using the Taylor expansion \eqref{eq: taylor} implies that for $ y \in [-\eta,\eta] $
\beq
	\label{eq: taylor U}
	U _\beta(y) = \tx\frac{1}{2} \alpha^2 y^2 + \OO\lf(\eps^2 (1 + |\beta|) \eta + \eps^4 \beta^2\ri),
\eeq
which shows that, if, e.g., $ \beta $ is uniformly bounded in $ \eps $, the potential $ U_{\beta}(y) $ is harmonic up to corrections of higher order in $ \eps $. Alternatively one can think of $ U_{\beta} $ as a shifted harmonic oscillator by writing
\beq
	U_{\beta}(y) = \tx\frac{1}{2} \alpha^2 \lf( y - \tx\frac{2 \Omega_0 \eps^2 \beta}{\alpha^2} \ri)^2 + \OO(\eps^4 |\beta| \eta^2 + \eps^4 \beta^2).
\eeq
In fact, since the optimal value of $ \beta $ we are going to choose is $ \OO(1) $, both representations are equivalent since the shift will be $ \OO(\eps^2) $.
Concerning the rest $ v(y) $ one trivially has the upper bound
\beq
	\label{eq: taylor v}
	\lf| v(y) \ri| \leq C_{\Omega_0} + \OO(\eps^2 \eta),
\eeq
for $ y \in [-\eta,\eta] $ and with a finite constant $ C_{\Omega_0} $. The rest in the above expression is a consequence of the bound $ |\varphi(y)| \leq C |y|^4 $, which follows from the Taylor expansion \eqref{eq: taylor}.

In conclusion we have recovered the expression \eqref{eq: gvbf}, i.e.,
\bdm
	\gvbf[g] = \int _{-\eta}^{\eta} \diff y \: (1 + \eps^2 y) \lf\{ \half
		\lf(
			g^{\prime}
		\ri)^2
		+U _\beta(y) g ^2 
		+\varepsilon ^2 y ^3 v(y) g ^2
		+\frac{1}{2\pi}
		g ^4
	\ri\}.
\edm
We now discuss the minimization of such a function w.r.t. $ g $ and for that purpose we have to identify the proper minimization domain, i.e.,
\beq
	\label{eq: gvbeta domain}
	\gvbdom:
	=\lf\{
		g \in H ^1 (-\eta,\eta)\: \big|\: 
		g = g ^*,\:
		\lf\|
			g
		\ri\| _{L ^2_{\eta}}
		= 1
	\ri\}.
\eeq
The ground state energy of $ \gvbf $ is defined as
\beq
	\label{eq: gvbeta energy}
	\gvbe =\inf _{g \in \gvbdom} \gvbf[g].
\eeq
Notice that the assumption $ g = g^* $, i.e., reality of the argument, does not imply any loss of generality because the ground state can always be chosen real (see next Proposition).

\begin{pro}[Minimization of $ \gvbf $]
	\label{pro: min gvbf}
	\mbox{}	\\
	There exists a minimizer $ \gvbm \in \gvbdom $ of \eqref{eq: gvbf} that is unique up to a sign, radial
	and can be chosen strictly positive. 	In addition $ \gvbm \in C^{\infty}(-\eta,\eta) $ and it solves the variational equation
	\beq
		\label{eq: gvbm variational}
		-\tx\frac{1}{2}\gvbm''
		-\frac{\eps ^2}{2\xinyp}\gvbm'
		+U _\beta (y) \gvbm
		+\varepsilon ^2 y ^3 v(y)\gvbm
		+\frac{1}{\pi} \gvbm ^3
		=\gvbchem \gvbm
	\eeq
	with Neumann boundary conditions $ \gvbm^{\prime}(\pm \eta) = 0 $ and $ \gvbchem = \gvbe + \frac{1}{2\pi} \|\gvbm\| _{L ^4_{\eta}} ^4 $. 	\\
	Finally $ \gvbm $ has a unique maximum point at $ \yb  $ and it decreases monotonically anywhere else.

\end{pro}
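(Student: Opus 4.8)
The plan is to treat the six assertions separately, beginning with existence by the direct method. First I would check that $\gvbf$ is bounded below and coercive on $\gvbdom$: on the compact interval $[-\eta,\eta]$ the weight $1+\eps^2y$ lies in $[1-\eps^2\eta,\,1+\eps^2\eta]$, which stays bounded away from $0$ for $\eps$ small since $\eps^2\eta\to0$; writing $U_\beta$ in the shifted form of \eqref{eq: pot U} gives $U_\beta\ge-\OO(\eps^4)$, the term $\eps^2y^3v$ is controlled against the unit mass by $\eps^2\eta^3\|v\|_\infty=o(1)$ via \eqref{eq: taylor v}, and the quartic term is nonnegative. Hence $\gvbf[g]\ge\tfrac12(1-\eps^2\eta)\|g'\|_{L^2}^2-C$, so any minimizing sequence is bounded in $H^1(-\eta,\eta)$. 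Since $H^1(-\eta,\eta)$ embeds compactly into $C^0([-\eta,\eta])$, I extract a subsequence converging weakly in $H^1$ and uniformly; lower semicontinuity of $\int(1+\eps^2y)(g')^2$, continuity of the remaining lower-order and quartic terms under uniform convergence, and preservation of the constraint $\|g\|_{L^2_\eta}=1$ yield a minimizer. Restricting to real $g$ is no loss: for a complex competitor the pointwise bound $\big|\,|g|'\,\big|\le|g'|$, together with the fact that $U_\beta$, $v$ and the quartic term depend only on $|g|$, gives $\gvbf[|g|]\le\gvbf[g]$, so a real, nonnegative minimizer exists. (That it corresponds to a radial $2$D profile is automatic from the construction of \eqref{eq: gvbf}.)

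For uniqueness I would pass to the density $\rho:=g^2$. In these variables the kinetic term becomes $\tfrac18(\rho')^2/\rho$, which is jointly convex in $(\rho,\rho')$, the potential terms $U_\beta\rho$ and $\eps^2y^3v\,\rho$ are linear, the quartic term $\tfrac1{2\pi}\rho^2$ is strictly convex, and crucially the constraint $\int(1+\eps^2y)\rho=1$ is now \emph{linear}. Thus $\gvbf$ is strictly convex in $\rho$ over a convex constraint set, and strict convexity of the quartic term along $(1-t)\rho_0+t\rho_1$ excludes two distinct minimizing densities. The variational equation \eqref{eq: gvbm variational} follows from the first variation: integrating the kinetic term by parts against $w(y):=1+\eps^2y$ produces $-\tfrac12g''-\tfrac{\eps^2}{2w}g'$, and since no boundary condition is imposed on $\gvbdom$, the boundary term $[w\,g'\phi]_{-\eta}^{\eta}$ forces the natural conditions $\gvbm'(\pm\eta)=0$; testing with $\gvbm$ and using the normalization gives $\gvbchem=\gvbe+\tfrac1{2\pi}\|\gvbm\|_{L^4_\eta}^4$. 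Smoothness is a standard ODE bootstrap, as the coefficients $U_\beta$, $v$ and $w^{-1}$ are $C^\infty$ on $[-\eta,\eta]$. Finally, the nonnegative minimizer solves a regular second-order linear ODE with bounded coefficients, so if it vanished at a point it would vanish there together with its derivative (an interior zero is a minimum, a boundary zero is pinned by the Neumann condition), forcing $g\equiv0$ by ODE uniqueness and contradicting normalization; positivity in turn excludes sign changes, leaving exactly the two minimizers $\pm\sqrt\rho$.

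The delicate assertion, which I expect to be the main obstacle, is that $\gvbm$ has a single maximum at $y_\beta$ and is strictly monotone on each side. I would first show the effective potential $P(y):=U_\beta(y)+\eps^2y^3v(y)$ is strictly convex on $[-\eta,\eta]$ for $\eps$ small: differentiating \eqref{eq: pot U}--\eqref{eq: pot v} twice yields $P''(y)=\alpha^2+\OO(\eps^2\eta)>0$, so $P$ has a unique interior minimum $y^*$, decreasing for $y<y^*$ and increasing for $y>y^*$. To propagate this single-well structure to the nonlinear ground state I would introduce the local energy
\[
  \mathcal{H}(y):=\tfrac14(g')^2-\tfrac12\big(P(y)-\gvbchem\big)g^2-\tfrac1{4\pi}g^4,
\]
which by \eqref{eq: gvbm variational} satisfies $\mathcal{H}'(y)=-\tfrac12P'(y)g^2-\tfrac{\eps^2}{2w(y)}(g')^2$. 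On $(y^*,\eta)$ both terms are nonpositive, so $\mathcal{H}$ is strictly decreasing there. If $g$ had a local minimum at $y_1$ followed by a local maximum at $y_2$ with $y^*\le y_1<y_2\le\eta$, then evaluating at these critical points and using that the bracket $Q:=P+\tfrac1\pi g^2-\gvbchem$ (which equals $\tfrac12 g''/g$ where $g'=0$) is $\ge0$ at a minimum and $\le0$ at a maximum, one finds $\mathcal{H}(y_2)\ge\tfrac1{4\pi}g^4(y_2)>\tfrac1{4\pi}g^4(y_1)\ge\mathcal{H}(y_1)$, contradicting monotonicity. Hence no local minimum lies in $[y^*,\eta)$ and $g$ is unimodal to the right of $y^*$; the mirror argument, for which $\mathcal{H}$ is increasing on $(-\eta,y^*)$, handles the left side, and the unique maximizer is $y_\beta$.

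The one genuinely technical point is that left side. There $P'<0$ makes $-\tfrac12P'g^2>0$ favorable, but the drift term $-\tfrac{\eps^2}{2w}(g')^2$ carries the opposite sign, so $\mathcal{H}$ need not be monotone in an $\OO(\eps^2)$-neighborhood of $y^*$, precisely where $|P'|\sim\alpha^2|y-y^*|$ is itself small. I would resolve this by rewriting the equation in Sturm--Liouville form $-\tfrac12(w\,g')'+w\,Q\,g=0$ and quantifying that the drift is of strictly higher order: any spurious extra critical point must lie within $\OO(\eps^2)$ of the non-degenerate minimum $y^*$, a window too narrow to host a maximum--minimum pair once the smallness of the $\OO(\eps^2)$ drift relative to $g$ and $g'$ on that scale is made explicit. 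This localization near the minimum of the convex potential $P$ is where I expect the real effort to concentrate; the remaining assertions reduce to the direct method, convexity in the density variable, and standard ODE regularity.
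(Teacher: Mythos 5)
Your treatment of existence (direct method plus the compact embedding $H^1(-\eta,\eta)\hookrightarrow C^0$), uniqueness (strict convexity of $\gvbf[\sqrt{\rho}]$ in the density $\rho=g^2$ over the now-linear constraint), the Euler--Lagrange equation with Neumann conditions arising as natural boundary conditions, the identity for $\gvbchem$, strict positivity, and bootstrap regularity coincides with the paper's argument, only written out in more detail. The genuine divergence is in the last assertion, the unimodality of $\gvbm$. The paper handles it variationally: the effective potential $U_\beta+\eps^2y^3v$ has a single critical point $\ypot$ (a well), and a rearrangement argument in the spirit of \cite[Proposition 2.2]{CPRY1} shows that a minimizer with a second local maximum could be improved by moving mass from the farther maximum into the intervening minimum. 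Since that argument works at the level of the functional, the first-order term $-\frac{\eps^2}{2(1+\eps^2 y)}g'$ in the ODE never enters. Your route is ODE-based, via monotonicity of the local energy $\mathcal{H}$, and that first-order term is exactly where it bites.

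Your computation $\mathcal{H}'=-\frac{\eps^2}{2w}(g')^2-\frac12 P'g^2$ (with $w=1+\eps^2y$, $P=U_\beta+\eps^2y^3v$) is correct, as is the max--min comparison on the right of the well, and you have honestly identified the obstruction on the left; but the sketched fix is not yet a proof. To confine the bad set to an $\OO(\eps^2)$ window around the well bottom you need a pointwise control of $(g'/g)^2$; the natural bound of this type, $|\gvbm'|\le C\eta^3\gvbm$, is precisely \eqref{eq: derivative gcrm}, whose proof in the paper invokes the monotonicity you are trying to establish, so using it here would be circular. Even granted such a bound, excluding a maximum--minimum pair \emph{inside} the residual window is not automatic: the gain $\frac{1}{4\pi}\big(g^4(y_M)-g^4(y_m)\big)$ degenerates as the pair collapses, at the same rate as the drift-induced drop of $\mathcal{H}$ across the window, so a separate non-degeneracy argument would be needed. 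I would either remove the drift first by the Liouville substitution $h=w^{1/2}g$ (which shifts the potential by $\OO(\eps^4)$ but still leaves a small window because the cubic term then acquires a $y$-dependent coefficient), or simply replace this step by the paper's rearrangement argument, which is shorter and avoids the issue entirely.
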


\begin{proof}
	Existence and uniqueness of the minimizer follow from strict convexity of the functional $ \gvbf [\sqrt{\rho}] $ with respect to the density $ \rho = g ^2 $. The variational equation \eqref{eq: gvbm variational} is satisfied at least in weak sense. Then one deduces the strict positivity of $ \gvbm $ noticing that it is actually a ground state of a suitable one-dimensional Schr\"odinger operator. The equality for $ \gvbchem $ follows integrating the \eqref{eq: gvbm variational} and recalling the fact that $ \gvbm $ has $ L ^2 $-norm equal to one. Finally a trivial bootstrap argument allows to deduce smoothness of $ \gvbm $ and therefore that \eqref{eq: gvbm variational} is solved in a classical sense.
	
	The only non-trivial result is the one about the existence of the a single maximum point for $ \gvbm $. However it follows from the property of the potential $ U_{\beta}(y) + \eps^2 y^3 v(y) $: going back to the expression of the potential in \eqref{eq: proto gvf}, one can easily compute, with $ x = 1  + \eps^2 y $,
	\bdm
		\frac{\partial \lf[ U_{\beta}(y) + \eps^2 y^3 v(y) \ri]}{\partial x} = \frac{1}{x^3} \lf[ x^{s+2} - \lf(1 + \tx\frac{\eps^4 \beta}{\Omega_0} \ri)^2 \ri],
	\edm
	which vanishes at a single point $ \ypot $, i.e., where 
	\bdm
		1 + \eps^2 \ypot = \lf(1 + \tx\frac{\eps^4 \beta}{\Omega_0} \ri)^{\frac{2}{s+2}} = 1 + \tx\frac{2 \beta \eps^4}{(s+2) \Omega_0} + \OO(\eps^8 \beta^2).
	\edm 
	The Taylor expansion also shows that
	\bdm
		\ypot = \tx\frac{2 \beta \eps^2}{(s+2) \Omega_0} (1 + \OO(\eps^4 \beta)).
	\edm
	The monotonicity property of $ \gvbm $ can then be obtained by a simple rearrangement argument (see, e.g., \cite[Proposition 2.2]{CPRY1}): since the potential has a single maximum point, if $ \gvbm $ had more than one maximum besides $ \yb $, one could move mass from the further maximum to the minimum in between and lower the energy. Since $ \gvbm $ is a minimizer one gets a contradiction.	
\end{proof}

%\begin{cor}

	%Note that in terms of the non scaled variables, that is if we call $ f _ \beta (x) = \frac{1}{\sqrt{2\pi }\eps} \gvbm (y) $, the equation \eqref{eq: real gvbeta variational} reads
	%\beq
		%\label{eq: vareq for f}
		%-\frac{1}{2} \Delta f _\beta (x)
		%+\frac{\Om ^2}{2} \lf (
			%x ^2
			%- \frac{\widetilde{\Om}}{\Om}
		%\ri) ^2
		%\frac{1}{x ^2} f _\beta (x)
		%+\Om ^2 W (x) f _\beta (x)
		%+\frac{1}{\eps ^2} f _\beta ^3 (x)
		%=\frac{1}{\eps ^4} \gvbchem f _\beta (x).
	%\eeq

%\end{cor}

%\subsection{Giant Vortex Energy Functional}

\nt
Another effective one-dimensional functional which is going to play an important role in the analysis is \eqref{eq: gvf}, i.e., the formal limit $ \eps \to 0 $ of $ \gvbf $, assuming that $ \beta = o(\eps^{-2}) $:
\bdm
	\gvf[g]
	=\int _{\R}
	\diff y
	\lf\{
		\tx\frac{1}{2}\lf(
			g'
		\ri)^2
		+\frac{\alpha ^2}{2} y ^2 g ^2
		+\frac{1}{2\pi} g ^4
	\ri\}.
\edm
The minimization domain is in this case given by
\beq
	\gvdom:
	=\lf\{
		g \in H ^1(\mathbb{R})\:\big|\:
			g
			= g ^*,
			\lf\| g \ri\| _{L ^2(\R)}
			= 1
		\ri\},
\eeq
and the ground state energy will be denoted by $ \gve = \inf _{g \in \gvdom} \gvf[g] $.

\begin{pro}[Minimization of $ \gvf $]
	\label{pro: min gvf}
	\mbox{}	\\
	There exists a minimizer $ \gvm \in \gvdom $ of \eqref{eq: gvf} that is unique up to a sign, radial
	and can be choose strictly positive. In addition $ \gvm \in C^{\infty}(\R) $ and it solves the variational equation
	\beq
		\label{eq: gvm variational}
		-\tx\frac{1}{2}\gvm'' 
		+\frac{\alpha ^2}{2} y^2 \gvm 
		+\frac{1}{\pi} \gvm ^3 
		=\gvchem \gvm 
	\eeq
	with $ \gvchem = \gve + \frac{1}{2\pi} \|\gvm\| _4 ^4 $.	\\	
	Finally $ \gvm $ is even w.r.t. the origin and has only one maximum at $ y = 0 $, which fulfills the inequality
	\beq
		\label{eq: gvm infty bound}
		\gvm^2(0) = \lf\|\gvm\ri\| _\infty ^2 \leq \pi \gvchem.
	\eeq	
\end{pro}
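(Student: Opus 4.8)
The plan is to run the direct method of the calculus of variations, using the harmonic confinement $\frac{\alpha^2}{2}y^2$ to recover compactness, and then to extract every qualitative feature from the Euler--Lagrange equation together with a convexity-in-the-density argument and a rearrangement argument.

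First I would prove existence. Since all three integrands are nonnegative, $\gve\ge0$ and I may fix a minimizing sequence $(g_n)\subset\gvdom$; the bound $\gvf[g_n]\le\gve+1$ simultaneously controls $\|g_n'\|_{L^2(\R)}$ and $\|y\,g_n\|_{L^2(\R)}$, so $(g_n)$ is bounded in the form domain $Q:=\{g\in H^1(\R):yg\in L^2(\R)\}$ of the harmonic oscillator. The operator $-\half\frac{d^2}{dy^2}+\frac{\alpha^2}{2}y^2$ has compact resolvent, hence $Q\hookrightarrow L^2(\R)$ compactly, and by interpolation the embedding into $L^4(\R)$ is compact as well; this is exactly what prevents mass from leaking to infinity. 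Passing to a subsequence that converges weakly in $H^1$ and in the weighted space and strongly in $L^2\cap L^4$, the constraint $\|\gvm\|_{L^2(\R)}=1$ survives in the limit, and weak lower semicontinuity of the quadratic terms together with strong $L^4$-convergence of the quartic term give $\gvf[\gvm]\le\liminf_n\gvf[g_n]=\gve$. Replacing $\gvm$ by $|\gvm|$ does not raise the energy, so the minimizer may be taken nonnegative.

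Next I would settle uniqueness, positivity, evenness, regularity, and the formula for $\gvchem$. Writing $\rho=\gvm^2$, the kinetic term equals $\half\int(\sqrt\rho)'^2$, which is convex in $\rho$, the potential term $\frac{\alpha^2}{2}\int y^2\rho$ is linear, and $\frac{1}{2\pi}\int\rho^2$ is strictly convex; hence $\gvf[\sqrt{\rho}]$ is strictly convex on $\{\rho\ge0:\int_{\R}\rho=1\}$, the minimizing density is unique, and $\gvm$ is unique up to a sign. The Euler--Lagrange equation \eqref{eq: gvm variational} then exhibits $\gvm$ as the lowest eigenfunction of $-\half\frac{d^2}{dy^2}+\frac{\alpha^2}{2}y^2+\frac1\pi\gvm^2$, which is nondegenerate and of one sign, so $\gvm>0$ by the strong maximum principle. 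Because $\gvf$ is invariant under $y\mapsto-y$, the reflection $\gvm(-\cdot)$ is again a positive minimizer and uniqueness forces it to equal $\gvm$, i.e. $\gvm$ is even. Smoothness comes from a routine bootstrap: $\gvm\in H^1(\R)\subset C^0(\R)$ makes the right-hand side of $\gvm''=(\alpha^2y^2+\frac2\pi\gvm^2-2\gvchem)\gvm$ continuous, hence $\gvm\in C^2$, and iteration yields $\gvm\in C^\infty(\R)$. Multiplying \eqref{eq: gvm variational} by $\gvm$, integrating, and using $\|\gvm\|_{L^2(\R)}=1$ gives $\gvchem=\gve+\frac{1}{2\pi}\|\gvm\|_4^4$.

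The only genuinely delicate point is the single maximum and the bound $\gvm^2(0)\le\pi\gvchem$. For the former I would use the symmetric decreasing rearrangement $\gvm^{\ast}$: the P\'olya--Szeg\H{o} inequality gives $\int(\gvm^{\ast\prime})^2\le\int(\gvm')^2$, the monotone layer-cake property (mass moved towards the origin, where the symmetric-increasing weight $y^2$ is smallest) gives $\int y^2(\gvm^{\ast})^2\le\int y^2\gvm^2$ and in particular keeps $\gvm^{\ast}$ admissible, and $\int(\gvm^{\ast})^4=\int\gvm^4$; thus $\gvf[\gvm^{\ast}]\le\gvf[\gvm]=\gve$, so $\gvm^{\ast}$ is also a minimizer and uniqueness forces $\gvm=\gvm^{\ast}$. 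Hence $\gvm$ is symmetric decreasing with its unique maximum at $y=0$, whence $\gvm(0)=\|\gvm\|_\infty$. Finally, evenness gives $\gvm'(0)=0$ and maximality gives $\gvm''(0)\le0$; evaluating \eqref{eq: gvm variational} at $y=0$ yields $\gvm''(0)=2\gvm(0)\big(\tfrac1\pi\gvm^2(0)-\gvchem\big)$, and since $\gvm(0)>0$ the sign condition forces $\tfrac1\pi\gvm^2(0)\le\gvchem$, i.e. the claimed inequality. I expect the compact embedding $Q\hookrightarrow L^4(\R)$ and the bookkeeping of the rearrangement (admissibility of $\gvm^{\ast}$ and the correct direction of the weighted inequality) to be where the real care is needed, everything else being soft.
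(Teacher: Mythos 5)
Your proposal is correct and follows essentially the same route as the paper: uniqueness via strict convexity of $\gvf[\sqrt{\rho}]$ in the density, positivity from the ground-state/Schr\"odinger-operator interpretation, evenness from symmetry plus uniqueness, a bootstrap for smoothness, and the bound \eqref{eq: gvm infty bound} from $\gvm''(0)\le 0$ in the variational equation. The only (minor) variations are that you spell out the direct-method existence argument, which the paper leaves implicit, and that you establish monotonicity via symmetric decreasing rearrangement and uniqueness, whereas the paper uses a local mass-moving rearrangement (as in \cite[Proposition 2.2]{CPRY1}) that also applies to the non-symmetric potential of $\gvbf$.
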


\begin{proof}
	See the proof of Proposition \ref{pro: min gvbf}. Parity of $ \gvm$  is a trivial consequence of the parity of the potential. The inequality \eqref{eq: gvm infty bound} follows from direct inspection of the variational equation \eqref{eq: gvm variational}: at any maximum point $ \gvm'' \leq 0 $, which immediately implies the result.
\end{proof}

\subsection{Estimates of the Gross-Pitaevskii and Giant Vortex Profiles}

In this Section we collect several technical estimates of the profiles involved in the discussion. Such estimates will play a key role in the proofs but can be typically obtained by standard techniques in functional analysis.

We start by recalling a result which was in fact proven in \cite[Propositions 3.1 and 3.2]{CPRY3}: {let $ \eta_0 $ be the parameter appearing in the definition \eqref{eq: ring} of $ \ann $, then}

	\begin{pro}[Exponential decay of $ \gpm $]
		\label{pro: gpm exp small}
		\mbox{}	\\
		If $ \Omega = \Omega_0 / \eps^4 $, there exists two finite constants $ c, C > 0 $ (independent of $ \eta_0 $) such that, for any $ \xv \notin \ann $, 
		\beq
			\label{eq: gpm exp small}
			\lf| \gpm(\xv) \ri|^2 \leq \frac{C}{\eps^2} \max\lf[ \eps^{\frac{c \eta_0^2}{4}}, \exp\lf\{ - \tx\frac{\sqrt{\Omega_0}}{\eps^2} |1 - x|\ri\} \ri].
		\eeq
	\end{pro}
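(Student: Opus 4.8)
The plan is to read \eqref{eq: gpm exp small} as a standard Agmon-type exponential decay estimate for the ground state of a Schr\"odinger operator in its classically forbidden region, the large parameter being $\Om^2 = \Omega_0^2 \eps^{-8}$ in front of the confining potential $W$. Before touching the decay, I would collect the two a priori inputs that the comparison argument needs: an upper bound $\gpchem = \OO(\eps^{-4})$ on the chemical potential and an $L^\infty$ bound $\| \gpm \|_\infty^2 = \OO(\eps^{-2})$. The first I would get by testing $\gpf$ on a giant vortex trial state (so $\gpe = \OO(\eps^{-4})$) and controlling $\eps^{-2}\|\gpm\|_4^4$ through \eqref{eq: chem}; the second from \eqref{eq: GP variational} by a maximum-principle argument, since at a maximum of $|\gpm|$ one has $\Delta|\gpm|\le 0$ and the quartic term is then dominated by the chemical potential, giving $\|\gpm\|_\infty^2 \le \tfrac12\eps^2(\gpchem)_+$. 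These two bounds bootstrap each other and can be closed as in \cite{CPRY3}; note that the prefactor $C\eps^{-2}$ in \eqref{eq: gpm exp small} is exactly of the order of $\|\gpm\|_\infty^2$.

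Next I would strip the magnetic field and the nonlinearity by passing to the modulus. By the diamagnetic (Kato) inequality applied to \eqref{eq: GP variational}, $|\gpm|$ satisfies, weakly,
\[
	-\tfrac12\Delta|\gpm| + \big(\Om^2 W(x) - \gpchem\big)\,|\gpm| \le -2\eps^{-2}|\gpm|^3 \le 0,
\]
so that $|\gpm|$ is a non-negative subsolution of the scalar Schr\"odinger operator $-\tfrac12\Delta + (\Om^2 W - \gpchem)$, the cubic term only helping. This is the object on which both a pointwise and an integral estimate can be run.

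I would then locate the classically forbidden region and compare with an explicit supersolution. Since $W\ge 0$ has a non-degenerate minimum at $x=1$ (so $W(x)\ge c_0(x-1)^2$ nearby) and grows at infinity, one gets $\Om^2 W(x) - \gpchem \ge \tfrac12\Om^2 W(x)$ as soon as $|x-1|\gtrsim \eps^2$; in particular the whole exterior $\RR\setminus\ann$ is forbidden, its boundary $\partial\ann$ sitting at $|x-1| = \eps^2\eta \gg \eps^2$, and there $\Om^2 W - \gpchem \ge \tfrac{\Omega_0}{2}\eps^{-4}$ comfortably. Introducing the radial comparison function $h(x) = M\exp\{-\tfrac{\sqrt{\Omega_0}}{2\eps^2}|x-1|\}$, a direct computation of $-\tfrac12\Delta h + (\Om^2 W - \gpchem)h$ shows that the zeroth-order term dominates the $\OO(\eps^{-4})$ contributions of $h''$ and $h'/x$, so $h$ is a supersolution in $\RR\setminus\ann$ for $\eps$ small. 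Choosing $M$ so that $h\ge|\gpm|$ on $\partial\ann$ and using $h\to0$ at infinity, the comparison principle (legitimate because the zeroth-order coefficient is $\ge 0$ in the forbidden region) yields $|\gpm|\le h$ throughout the exterior, hence the exponential factor in \eqref{eq: gpm exp small}. The crude choice of $M$ carries a mildly large prefactor; to reach the sharp $C\eps^{-2}$ together with the floor $\eps^{c\eta_0^2/4}$ I would run in parallel a weighted energy estimate, multiplying the subsolution inequality by $|\gpm|\,e^{2\Phi}$ with the Agmon weight $\Phi(x)=\int\sqrt{2(\Om^2 W-\gpchem)_+}$ and integrating: the weight produces precisely the (Gaussian-type) factor $\eps^{c\eta_0^2/4}$ at the scale $|x-1|=\eps^2\eta$, and a local sub-mean-value elliptic estimate upgrades the resulting $L^2$-smallness to the uniform pointwise floor. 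Taking the maximum of the pointwise exponential rate (good near $\partial\ann$) and this uniform floor gives the stated bound; this is also why \cite{CPRY3} splits it into two propositions.

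I expect the main obstacle to be twofold. First, closing the a priori bounds on $\gpchem$ and $\|\gpm\|_\infty$ without circularity, since the refined energy asymptotics of Theorem \ref{teo: energy asymptotics} is only proven later and cannot be invoked here; one must rely on the cruder TF-comparison and the maximum principle alone. Second, and more seriously, verifying that the single comparison function $h$ (or the Agmon weight) dominates $|\gpm|$ \emph{uniformly over the entire exterior}: one must handle simultaneously the inner hole $x\to 0$, where $W(0)=\tfrac{s-2}{2s}>0$ keeps the region forbidden, and the far field $x\to\infty$, where $W\sim x^s/s$ blows up, while controlling the transition region near $\partial\ann$ where $\Om^2 W-\gpchem$ passes from negative (in the bulk) to large positive. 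The Agmon weight is flexible enough to absorb all three regimes, but matching the prefactors there so as to recover exactly the $\max$-structure of \eqref{eq: gpm exp small} is the delicate technical point.
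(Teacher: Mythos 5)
This proposition is not proven in the paper at all: it is quoted verbatim from \cite[Propositions 3.1 and 3.2]{CPRY3}, so there is no in-paper argument to compare yours against. Your sketch is nevertheless the standard (and essentially the intended) route, and it is the same sub-/supersolution machinery the present paper does spell out for the analogous decay of the giant-vortex profiles in Proposition \ref{pro: gvbm point est}: a priori bounds $\gpchem = \OO(\eps^{-4})$, $\|\gpm\|_\infty^2 \leq \frac{1}{2}\eps^2\gpchem$ closed by the bootstrap $\gpchem \leq \gpe + \eps^{-2}\|\gpm\|_\infty^2 \leq \gpe + \frac{1}{2}\gpchem$; Kato's inequality to make $|\gpm|$ a subsolution of the scalar operator; an explicit exponential supersolution in the classically forbidden region; and a separate integral (Agmon-type) estimate giving the uniform floor, the two being combined in the $\max$. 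Two points deserve a little more care than you give them. First, the radial comparison function $h = M e^{-\sqrt{\Omega_0}|x-1|/(2\eps^2)}$ is increasing on $(0,1)$, so the term $-\frac{1}{2x}h'$ in $-\frac{1}{2}\Delta h$ is negative and blows up as $x \to 0$; it overwhelms $\Om^2 W$ only for $x \lesssim \eps^6$, but one must still cap or modify $h$ near the origin (or compare with a constant on a small disc) to have a genuine supersolution on the whole hole. Second, matching $M$ to $\|\gpm\|_{L^\infty(\partial\ann)} = \OO(\eps^{-1})$ alone produces a spurious factor $e^{\sqrt{\Omega_0}\eta}$ in front of the exponential; this is exactly why the statement carries the $\eps^{c\eta_0^2/4}$ floor as a separate ingredient, and your reading that the two estimates must be run in parallel and then combined via the $\max$ is the right one.
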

	
	\nt
	In particular the above result implies that by taking $ \eta_0 $ large enough we can make $ \gpm $ arbitrarily small outside $ \ann $. This fact will be crucial in restricting the computation of the GP energy within $ \ann $. Notice also that as soon as $ |1 - x| \gg \eps^2|\log\eps| $, $ \gpm = \OO(\eps^{\infty}) $.
	
	Let us now focus on the giant vortex profiles. Before stating the main technical estimates we first formulate a simple preliminary bound on the giant vortex energy $ \gvbe $:

%\section{Preliminary Estimates and Positivity of the Cost Function}
%\label{sec: preliminary}

%We first state a proposition regarding the value of the minimal energy $ \gvbe $ as defined in \eqref{eq: gvbeta energy}.

	\begin{pro}[Preliminary bound on $ \gvbe $]
		\label{pro: estimates on energy}
		\mbox{}	\\
		If $ \beta = \bigO{\eps ^{-2}} $ as $ \eps \to 0$, then 
		\beq
			\label{eq: en preliminary est}
			\gvbe = \OO(1),	\qquad
			\gvbchem
			= \OO(1).
		\eeq
	\end{pro}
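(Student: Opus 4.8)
The plan is to establish $\gvbe = \mathcal{O}(1)$ and $\gvbchem = \mathcal{O}(1)$ separately, obtaining $\gvbe$ by matching a cheap trial‑state upper bound against a cheap lower bound coming from positivity of the potential, and then reading off $\gvbchem$ from the identity $\gvbchem = \gvbe + \tfrac{1}{2\pi}\|\gvbm\|_{L^4_\eta}^4$. The single observation that drives the whole argument is that, for $\beta = \mathcal{O}(\eps^{-2})$ and $\eps$ small, the potential $U_\beta$ in \eqref{eq: pot U} is pointwise nonnegative on $[-\eta,\eta]$. Indeed, writing $U_\beta(y) = (1+\eps^2 y)^{-2} N(y)$ with $N(y) = \bigl(\tfrac{\alpha^2}{2} - \Omega_0\eps^4\beta\bigr)y^2 - 2\Omega_0\eps^2\beta\, y + \tfrac12\eps^4\beta^2$, the leading coefficient is positive for small $\eps$ and the discriminant equals $\eps^4\beta^2\bigl[\Omega_0^2(2-s) + \mathcal{O}(\eps^2)\bigr]$, which is $\le 0$ precisely because $s>2$ (recall $\alpha^2 = \Omega_0^2(s+2)$ from \eqref{eq: alpha}). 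Hence $N \ge 0$ and, the weight being positive, $U_\beta \ge 0$ on the whole interval. This is the step where the anharmonicity $s>2$ enters, and it is the only genuinely delicate point.

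For the lower bound I would then discard the manifestly nonnegative kinetic, $U_\beta$, and quartic terms in \eqref{eq: gvbf}, so that for any $g\in\gvbdom$ only the lower‑order potential survives:
\[
	\gvbf[g] \ge -\int_{-\eta}^{\eta}(1+\eps^2 y)\,\eps^2 y^3 v(y)\, g^2\,\diff y \ge -\eps^2\eta^3\,\sup_{[-\eta,\eta]}|v| \ge -\mathcal{O}\bigl(\eps^2|\log\eps|^3\bigr),
\]
using $\|g\|_{L^2_\eta}=1$, the bound \eqref{eq: taylor v} on $v$, and $\eta\propto|\log\eps|$. Thus $\gvbe \ge -o(1)$, in particular bounded below.

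For the upper bound the natural trial state is the minimizer $\gvm$ of the limiting functional $\gvf$ (Proposition \ref{pro: min gvf}), restricted to $[-\eta,\eta]$ and renormalized in $L^2_\eta$; by the Gaussian decay of $\gvm$ both the truncation error and the deviation of the renormalization constant from $1$ are $o(1)$. The point is that one must not use the crude estimate \eqref{eq: taylor U}, whose error $\mathcal{O}(\eps^2(1+|\beta|)\eta)$ is only $\mathcal{O}(|\log\eps|)$ when $\beta\sim\eps^{-2}$; instead one keeps $U_\beta$ exact and exploits the parity of $\gvm$. Writing $\int(1+\eps^2 y)U_\beta\gvm^2 = \int (1+\eps^2 y)^{-1}N(y)\gvm^2$ and expanding $(1+\eps^2 y)^{-1}$, the only dangerous contribution is the linear term $-2\Omega_0\eps^2\beta\, y = \mathcal{O}(y)$, which integrates to zero against the even weight $\gvm^2$; what remains is $\tfrac{\alpha^2}{2}\int y^2\gvm^2 + \tfrac12\eps^4\beta^2 + \mathcal{O}(\eps^2) = \mathcal{O}(1)$, since $\int y^2\gvm^2<\infty$ and $\eps^4\beta^2 = \mathcal{O}(1)$. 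Adding the bounded kinetic and quartic contributions gives $\gvbe \le \gvbf[\text{trial}] = \gve + \mathcal{O}(1) = \mathcal{O}(1)$.

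Finally, having $\gvbe = \mathcal{O}(1)$, I would insert $g=\gvbm$ in \eqref{eq: gvbf}, drop the nonnegative kinetic and $U_\beta$ terms, and control the lower‑order term as above to obtain $\tfrac{1}{2\pi}\|\gvbm\|_{L^4_\eta}^4 \le \gvbe + \mathcal{O}(\eps^2|\log\eps|^3) = \mathcal{O}(1)$; the claimed bound $\gvbchem = \gvbe + \tfrac{1}{2\pi}\|\gvbm\|_{L^4_\eta}^4 = \mathcal{O}(1)$ then follows immediately. The main obstacle, as stressed above, is that the regime $\beta = \mathcal{O}(\eps^{-2})$ makes the linear term of $U_\beta$ of order $y$, hence not uniformly small on the logarithmically large interval $[-\eta,\eta]$; the exact positivity $U_\beta\ge 0$ (for the lower bound) and the parity cancellation of the linear term (for the upper bound) are precisely the devices that circumvent this difficulty.
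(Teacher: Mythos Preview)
Your argument is correct and follows essentially the same two-sided strategy as the paper: positivity of the potential for the lower bound, a Gaussian-type trial state for the upper bound, and then the identity $\gvbchem = \gvbe + \tfrac{1}{2\pi}\|\gvbm\|_{L^4_\eta}^4$. The paper streamlines both halves slightly: for the lower bound it simply observes from the pre-Taylor form \eqref{eq: quasi gvf} that the \emph{full} potential $U_\beta + \eps^2 y^3 v = \Omega_0^2\bigl[\tfrac12(\,\cdot\,)^2 + W(1+\eps^2 y)\bigr]$ is manifestly nonnegative, so $\gvbe \ge 0$ without any discriminant computation or separate handling of the $v$-term; for the upper bound it uses the harmonic oscillator ground state and does not invoke parity, since under $\beta = \mathcal{O}(\eps^{-2})$ each coefficient in $N(y)$ is already $\mathcal{O}(1)$ and hence $|U_\beta| \le C(1+y^2)$, which integrates to $\mathcal{O}(1)$ against any trial with bounded second moment.
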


\begin{proof}
	Since $ \gvbe $ is positive (compare with \eqref{eq: quasi gvf}), it suffices to prove a suitable upper bound: to that purpose one can simply evaluate the functional $ \gvbf $ on the ground state of the one-dimensional harmonic oscillator with frequency $ \alpha $. The result on $ \gvbchem $ follows from the trivial estimates $ \gvbe \leq \gvbchem \leq 2 \gvbe $.
\end{proof}

%\subsection{Estimates on $ \gvbm $}

%The variational equation permits us to get some estimates on the minimizer $ \gvbm $; we differentiate the estimate we get depending on the ratio of $ \beta $ in the next two Propositions.
\nt
The giant vortex profile $ \gvbm $ decays exponentially for large $ |y|$ and one can actually show that this decay captures the correct asymptotics of $ \gvbm $:

	\begin{pro}[Pointwise estimates of $ \gvbm $]
		\label{pro: gvbm point est}
		\mbox{}	\\
		If $ \beta = \bigO{\eps ^{-2}} $ as $ \eps \to 0 $, then there exists a finite constant $ C $ such that
		\beq
			\label{eq: estimate on gvbm}
			\gvbm(y)
			\le C e ^{-2\sqrt{\Om _0}|y|},	\qquad		\mbox{so that } \gvbm(\pm \eta) =\bigO{\eps ^{\eta _0}}.
		\eeq
		If  $ \beta = \bigO{1} $ then there exist two finite constants $ C _1, C _2 > 0 $ such that the following inequalities hold true
		\beq
			\label{eq: better estimate on gvbm}
			C_1 \| \gvbm \| _{L _\eta ^4} ^2
			\exp\lf\{
				- \tx\frac{\alpha}{2} y ^2
			\ri\}
			\le \gvbm(y)
			\le C_2 
			\exp\lf\{
				- \tx\frac{\alpha}{4} y ^2
			\ri\}.
		\eeq
	\end{pro}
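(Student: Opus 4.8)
The plan is to derive both bounds from the variational equation \eqref{eq: gvbm variational} by the classical comparison (sub-/supersolution) principle for the one-dimensional Schr\"odinger operator
\[
L := -\tfrac12\tfrac{d^2}{dy^2} - \tfrac{\eps^2}{2(1+\eps^2 y)}\tfrac{d}{dy} + W_{\mathrm{eff}}(y), \qquad W_{\mathrm{eff}} := U_\beta + \eps^2 y^3 v - \gvbchem + \tfrac1\pi \gvbm^2,
\]
for which $\gvbm$ is a positive solution, $L\gvbm = 0$, with Neumann data at $\pm\eta$. First I would record the facts I need about $W_{\mathrm{eff}}$: by Proposition \ref{pro: estimates on energy} one has $\gvbchem = \OO(1)$; by \eqref{eq: taylor U}--\eqref{eq: taylor v} and $|y|\le\eta \propto |\log\eps|$ the term $\eps^2 y^3 v(y) = \OO(\eps^2|\log\eps|^3) = o(1)$, while $U_\beta(y)$ is a (possibly shifted) harmonic well $\tfrac12\alpha^2 y^2$ up to controlled errors; finally $\tfrac1\pi\gvbm^2 \ge 0$ can be discarded in the upper bounds and is negligible far out in the lower bound. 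In every case the mechanism is the same: on the region where $W_{\mathrm{eff}}>0$ a subsolution cannot have a positive interior maximum, so a supersolution dominating $\gvbm$ on the boundary dominates it everywhere, and symmetrically for subsolutions.

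For the exponential bound \eqref{eq: estimate on gvbm} (the regime $\beta=\bigO{\eps^{-2}}$) I note that $\eps^2\beta=\OO(1)$, hence the well is shifted by only $\OO(1)$ and still grows like $\tfrac12\alpha^2 y^2$; thus there is a fixed $R_0$ (independent of $\eps$) with $W_{\mathrm{eff}}(y)\ge 2\Omega_0$ for $R_0\le|y|\le\eta$, after discarding $\tfrac1\pi\gvbm^2\ge0$ and absorbing the $o(1)$ remainders. A direct computation then gives $L\phi = (-2\Omega_0 + W_{\mathrm{eff}} + \OO(\eps^2))\phi \ge 0$ for $\phi(y)=Ce^{-2\sqrt{\Omega_0}|y|}$, so $\phi$ is a supersolution, the rate being dictated by $\tfrac12(2\sqrt{\Omega_0})^2 = 2\Omega_0$. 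Choosing $C$ with $\phi(R_0)\ge\|\gvbm\|_\infty$ and comparing on $[R_0,\eta]$ yields $\gvbm\le\phi$, and since $2\sqrt{\Omega_0}\,\eta = \eta_0|\log\eps|$ one gets $\gvbm(\pm\eta)\le C\eps^{\eta_0}$. To respect the Neumann condition at $\eta$ I would use the reflected supersolution $C(e^{-2\sqrt{\Omega_0}y}+e^{-2\sqrt{\Omega_0}(2\eta-y)})$, which is flat at $\eta$, is still bounded by $2Ce^{-2\sqrt{\Omega_0}|y|}$, and remains a supersolution because the reflected summand, although it may fail the differential inequality near $R_0$, is super-exponentially smaller there than the first (genuinely supersolution) summand; the same argument on $y<0$ completes the bound.

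For the Gaussian estimate \eqref{eq: better estimate on gvbm} (the regime $\beta=\bigO{1}$) the remainders in \eqref{eq: taylor U} are now uniformly $o(1)$, so $W_{\mathrm{eff}}(y) = \tfrac12\alpha^2 y^2 - \gvbchem + \tfrac1\pi\gvbm^2 + o(1)$. For the upper bound I test with $\phi = C_2 e^{-\alpha y^2/4}$: one computes $L\phi = [\tfrac{3\alpha^2}{8}y^2 + \tfrac\alpha4 - \gvbchem + \tfrac1\pi\gvbm^2 + o(1)]\phi \ge 0$ for $|y|\ge R_1$ with $R_1$ a fixed constant, so $\phi$ is a supersolution there; matching at $R_1$ and handling the Neumann endpoint by the same doubled-Gaussian reflection gives $\gvbm(y)\le 2C_2 e^{-\alpha y^2/4}$. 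For the lower bound I test with $\underline\phi = C_1\|\gvbm\|_{L^4_\eta}^2 e^{-\alpha y^2/2}$; now the quadratic terms cancel \emph{exactly}, leaving $L\underline\phi = [\tfrac\alpha2 - \gvbchem + \tfrac1\pi\gvbm^2 + o(1)]\underline\phi$, which is $\le 0$ on $|y|\ge R_2$ provided $\gvbchem-\tfrac\alpha2$ is bounded below by a positive constant. This gap holds because $\gvbchem = \gvbe + \tfrac1{2\pi}\|\gvbm\|_{L^4_\eta}^4$, with $\gvbe\ge\tfrac\alpha2(1-o(1))$ (harmonic lower bound) and the quartic contribution bounded below; at large $|y|$ the term $\tfrac1\pi\gvbm^2$ is negligible, so the gap survives. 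The Neumann endpoint causes no trouble here, since $\underline\phi$ is decreasing while $\gvbm'(\eta)=0$, so $(\underline\phi-\gvbm)'(\eta)<0$ excludes an endpoint maximum; the prefactor $\|\gvbm\|_{L^4_\eta}^2$ is exactly what makes $\underline\phi(R_2)\le\gvbm(R_2)$ at the fixed inner radius $R_2$.

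The main obstacle is the lower bound. Unlike the upper and exponential bounds, which exploit a strictly growing effective potential, it rests on the exact cancellation of the quadratic terms at the critical rate $\alpha/2$, so the sign of $L\underline\phi$ is governed entirely by the marginal quantity $\gvbchem-\alpha/2$; one must therefore prove that the chemical potential strictly exceeds the harmonic ground-state value by a uniform amount and that the boundary matching at $R_2$ is consistent with the $\|\gvbm\|_{L^4_\eta}^2$ normalization. A secondary, purely technical nuisance is the Neumann boundary at $\pm\eta$ in the two upper bounds, resolved by reflected comparison functions; and throughout one must verify that every constant ($R_0,R_1,R_2,C,C_1,C_2$) and every $o(1)$ remainder is uniform over the admissible range of $\beta$.
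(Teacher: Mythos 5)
Your treatment of the exponential bound \eqref{eq: estimate on gvbm} and of the Gaussian upper bound in \eqref{eq: better estimate on gvbm} is essentially sound: you work directly with the one-dimensional operator where the paper passes back to the two-dimensional radial equation for the exponential estimate, and you handle the Neumann endpoint with a reflected comparison function where the paper instead extends $\gvbm$ smoothly beyond $\pm\eta$; both variants are legitimate and the rates come out the same. (You do implicitly use $\lf\|\gvbm\ri\|_{L^\infty_\eta}=\OO(1)$ to anchor the constant at $R_0$; this follows from evaluating the variational equation at the maximum point and should be recorded, as the paper does.)

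The genuine gap is in the Gaussian lower bound, exactly at the point you flag as the main obstacle: the matching $\underline\phi(R_2)\le\gvbm(R_2)$ at the fixed interior radius $R_2$. The inequality $\lf\|\gvbm\ri\|_{L^4_\eta}^2\le\lf\|\gvbm\ri\|_\infty$ only controls $\gvbm$ at its maximum point, not at $R_2$, and a pointwise lower bound on $\gvbm(R_2)$ is essentially what the proposition is asserting, so as written the step is circular; it could be repaired (e.g.\ by a Harnack inequality on a fixed interval combined with the already-proven Gaussian upper bound and the normalization), but that is an additional ingredient you do not supply. The paper's proof avoids the interior matching altogether: it takes $g_{\mathrm{sub}}=Ce^{-\alpha y^2/2}$ with $C<\lf\|\gvbm\ri\|_{L^4_\eta}^2$ as a subsolution of the \emph{full nonlinear} equation \eqref{eq: gvbm variational}, valid for \emph{all} $y$ and not only for $|y|\ge R_2$. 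Since the cubic term is then evaluated at $g_{\mathrm{sub}}$ itself, the condition to check is $\tx\frac{\alpha}{2}-\gvbchem+\tx\frac{1}{\pi}g_{\mathrm{sub}}^2+o(1)\le 0$, and the bound $\gvbchem\ge\tx\frac{\alpha}{2}+\tx\frac{1}{\pi}\lf\|\gvbm\ri\|_{L^4_\eta}^4-o(1)$ (the same gap estimate you invoke) reduces it to $g_{\mathrm{sub}}^2<\lf\|\gvbm\ri\|_{L^4_\eta}^4$, i.e.\ to the choice of $C$, with no restriction on $y$. The comparison is then anchored only at outer points $\pm\bar y$ where $g_{\mathrm{sub}}$ has decayed to $\min\{\gvbm(\eta),\gvbm(-\eta)\}$, so the boundary inequality is automatic. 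Your linearized formulation, whose potential contains $\tx\frac{1}{\pi}\gvbm^2$ (which near the maximum can be as large as $\gvbchem$ and destroys the sign), is what forces you to excise the core region and thereby creates the matching condition you cannot verify.
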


	\begin{proof}
		The results are proven by means of standard super- and sub-solution techniques. We spell however the proofs in full details for the sake of clarity.
	
		To prove \eqref{eq: estimate on gvbm} it is somehow more convenient to go back to the variational equation satisfied by $ f_{\beta}(x) = (\sqrt{2\pi}\eps)^{-1} \: \gvbm((x-1)/\eps^2) $, i.e.,
		\beq
			\label{eq: vareq for f}
			-\tx\frac{1}{2} f^{\prime\prime}_\beta - \frac{1}{2r} f^{\prime}_\beta + \tx\frac{\Om ^2}{2x ^2} \lf (
			x ^2
			- \frac{\Omega + \beta}{\Om}
			\ri) ^2
			f _\beta
			+\Om ^2 W (x) f _\beta
			+\frac{1}{\eps ^2} f _\beta ^3
			=\frac{1}{\eps ^4} \gvbchem f _\beta.
		\eeq
		The first simple observation is that by positivity of $ f_{\beta} $ and $ W(x) $, we get
		\[
			-\tx\frac{1}{2} f^{\prime\prime}_\beta - \frac{1}{2r} f^{\prime}_\beta
			\le \frac{1}{\eps ^4}
			\lf(
			\mu _\beta
			-\eps ^2 f _\beta ^2
			\ri) f _\beta,
		\]
		which, by negativity of the second derivative of $ f_\beta $ at any maximum point of $ f_{\beta} $, immediately implies the upper bound
		\bdm
			\lf\| f_\beta \ri\|_{L^\infty(\ann)}^2 \leq \tx\frac{1}{\eps^2} \gvbchem,
		\edm
		which in terms of $ \gvbm $ becomes, via \eqref{eq: en preliminary est} (here we are assuming that $ \beta = \OO(\eps^{-2}) $),
		\beq
			\label{eq: gvbm infty est}
			\lf\|\gvbm \ri\|_{L^\infty_{\eta}} = \OO(1).
		\eeq
	
		In order to prove \eqref{eq: estimate on gvbm} we will provide an explicit supersolution to the equation \eqref{eq: vareq for f}. Notice that the first two terms of the equation form the two-dimensional Laplacian, i.e., for any radial function $ f $, $ - \Delta f = - \frac{1}{r} \partial_r ( r f^{\prime}) $. We will use this fact to construct a supersolution in dimension two. Let then $ a > 0 $ be a parameter independent of $ \eps $ that is going to be chosen later and consider the two-dimensional region
		\beq
			\mathcal{A}:
			= \ba _{1 - a \eps ^2}(0)
			\cap \ann
			=\lf\{
			\xv \in \RR
			\;\big|\;
			1 - \eta \eps ^2
			\leq x
			\leq 1 - a \eps ^2
			\ri\}.
		\eeq
		Inside $ \A $ one has the lower bound
		\[
			W (x)
			\ge \tx\frac{s-2}{2}\lf(
			x-1
			\ri) ^2
			+\bigO{\lf|x-1\ri| ^3}
			\ge \frac{\lf(s-2\ri)}{2} a ^2 \eps ^4
			+\bigO{\eta ^3 \eps ^6}
			\ge C _0 a ^2 \eps ^4
		\]
		with $ C _0 > 0 $, so that \eqref{eq: vareq for f} and \eqref{eq: en preliminary est} yield
		\[
			- \tx\frac{1}{2} \Delta f _\beta
			\le \frac{1}{\eps ^4} \gvbchem f _\beta
			-\Om ^2 W(x) f _\beta
			\le \frac{1}{\eps ^4}
			\lf(
			C _1
			-\Om _0 ^2 C _0 a ^2
			\ri)f _\beta
		\]
		where also $ C_1 > 0 $. If now we pick $ a ^2 \ge \frac{2 C _1}{C _0 \Om _0 ^2} $, we get that $ f _\beta $ is a subsolution of the following differential problem
		\beq
			\label{eq: f subsol}
			-\tx\frac{1}{2} \Delta f
			+\frac{1}{2} C _0 a ^2 \Om ^2 \eps ^4 f
			=0.
		\eeq
		To get rid of the inner boundary we now extend $ f_\beta $ to the whole ball $ \ba_{1 - a\eps^2}(0) $ in a smooth (in fact at least $ C^2 $) way. We denote by $ \tilde{f} $ such a new function and we require that $ \tilde{f}(x) = 0 $ for $ x \leq 1 - 2 \eps^2 \eta $ and
		\beq
			 -\tx\frac{1}{2} \Delta f
			+\frac{1}{2} C _0 a ^2 \Om ^2 \eps ^4 f
			\leq 0,
		\eeq
		for any $ \xv \in \ba_{1 - a\eps^2}(0) $. We omit the explicit details of such a construction for the sake of brevity.
	%We then extend the function on the whole $ \mathcal{B} $ as
	%\[
		%\tilde{f} (xv):
		%=\lf\{
		%\begin{array}{ll}
			%0, & 0 \le x \le 1-2\eta\eps ^2 \\
			%r(x), & 1-2\eta \eps ^2 \le x \le 1-\eta\eps^2 \\
			%f _\beta(x), & 1-\eta\eps ^2 \le x \le 1-a\eps ^2
		%\end{array}
		%\ri.\qquad
		%r(x):
		%=f _\beta (1-\eta \eps ^2)
		%\lf(
			%\frac{x-1+2\eta\eps ^2}{\eta\eps ^2}
		%\ri) ^{k a ^2}
	%\]
	%with $ k $ chosen positive and in such a way that $ \tilde{f} $ is still a subsolution to \eqref{eq: f subsol}; given that
	%\[
		%-\frac{1}{2}\Delta r(x)
		%+\frac{1}{2} C _0 a ^2 \Om ^2 \eps ^4 r(x)\le
	%\]
	%\[
		%\le -\frac{1}{2}\lf[
			%ka ^2\lf(k a ^2-1\ri)
			%-C _0 a ^2 \Om ^2\ \eps ^4 \lf(x-1+2\eta\eps ^2\ri) ^2
		%\ri]\frac{r(x)}{\lf(x-1+2\eta\eps ^2\ri) ^2}\le
	%\]
	%\[
		%\le -\frac{1}{2}\lf[
			%ka ^2\lf(k a ^2-1\ri)
			%-C _0 a ^2 \Om ^2\ \eta ^2 \eps ^8
		%\ri]r(x) 
	%\]
	%then the condition on $ k $ is that $ k \ge \frac{1}{a ^2} \lf( 1 + \sqrt{ 1 + a ^2 C _0 \Om _0 ^2 \eta ^2 } \ri) $.
		A supersolution to the same problem can be constructed by taking $$ f _\mathrm{sup}(x):=C _a \| f _\beta \| _\infty e ^{-\sqrt{\Om}\lf(1-x ^2\ri)}$$ with $ C _a $ a constant to be suitably chosen:
		\[
			-\tx\frac{1}{2} \Delta f _\mathrm{sup}
			+\frac{1}{2} C _0 a ^2 \Om ^2 \eps ^4 f _\mathrm{sup}
			=\frac{1}{2}\lf(
			-2\sqrt{\Om}
			-2 \Om x ^2
			+ C _0 \Om _0 a ^2 \Om
			\ri)  f _\mathrm{sup} >0,
		\]
		if we choose $ a ^2 > \frac{4}{C _0 \Om _0} $. The constant $ C _a $ is then used to guarantees that $ f_{\mathrm{sup}} $ satisfies the proper boundary conditions. In order to apply the maximum principle (see, e.g., \cite[\S~6.4.1, Theorem 2]{E}), we need that $ \tilde{f}(x) \le f _\mathrm{sup} (x) $ on $ \partial \ba_{1 - a\eps^2}(0) $, which holds true if  $ C _a \geq e ^{2\sqrt{\Om _0}a} $:
		\[
			\lf.
				f _\mathrm{sup}
			\ri| _{\partial \ba_{1 - a\eps^2}(0)}
			= C _a \| f _\beta \| _\infty e^{-\sqrt{\Om _0}a \lf(2-a\eps ^2\ri)}
			\ge \lf.
				f_{\beta}
			\ri| _{\partial \ba_{1 - a\eps^2}(0)}.
		\]
		Hence we conclude that $ \tilde{f} \le f _\mathrm{sup} $ in the whole $ \ba_{1 - a\eps^2}(0) $, and therefore, using the monotonicity of $ f _\mathrm{sup} $, $ f_\beta \leq  f _\mathrm{sup} $ in the whole region $ \ba _1 (0) \cap \ann $. Going back to $ \gvbm $ and using \eqref{eq: gvbm infty est}, we obtain \eqref{eq: estimate on gvbm} in $ \ba _1 (0) \cap \ann $. To extend the result to the complementary region, one can use a very similar argument with the trivial change $ x^2 - 1 \to 1 - x^2 $ in the supersolution.
%	The specular result can be also proven in $ B _1 (0) ^\mathrm{c} \cap \ann $ using as extension and supersolution respectively
	%\[
		%r(x):
		%=f _\beta \lf(1+\eta \eps ^2\ri)
		%\lf(
			%\frac{1+2\eta \eps ^2 -x}{\eta \eps ^2}
		%\ri) ^{k a ^2},\qquad
		%f _\mathrm{sup}(xv):
		%=C _a \| f _\beta \| _\infty e ^{\sqrt{\Om} \lf(x ^2 -1\ri)}.
	%\]
	
	%To conclude our proposition we use the \eqref{eq: estimate inf norm f} and write $ \gvbm $ as
	%\[
		%\gvbm(y)
		%=\sqrt{2\pi}\eps f _\beta (x)
		%\le C e^{-\sqrt{\Om} \lf|1-x ^2\ri|}
		%\le 2 C e^{-2\sqrt{\Om _0}|y|}
	%\]
	%for a suitable $ C > 0 $ and for $ \eps $ small enough.

		For the refined estimates \eqref{eq: better estimate on gvbm}, we consider the variational equation \eqref{eq: gvbm variational} for $ a \leq |y| \leq \eta $, with $ a > 0 $ such that $ a ^2 > \frac{8\gvbchem}{3\alpha ^2} $ and $ \eps $ small enough, which imply
	%\[
		%-\tx\frac{1}{2} \gvbm''
		%-\frac{\eps ^2}{2\xinyp} \gvbm'
		%+U _\beta (y) \gvbm
		%+\eps y ^3 v(y) \gvbm
		%-\gvbchem \gvbm
		%\le 0.
	%\]
	%
	%Moreover if we choose  and $ \eps $ small enough we get that
		\[
			U _\beta (y)
			+\eps ^2 y ^3 v(y)
			-\gvbchem
			= \tx\frac{\alpha ^2}{2} y ^2
			-\gvbchem
			+\bigO{\eps ^2 \eta ^3}
			\ge \frac{\alpha ^2}{8} y ^2
		\]
		and therefore in that region $ \gvbm $ is a subsolution of the equation
		\beq
			\label{eq: subequation}
			-\tx\frac{1}{2} g''
			-\frac{\eps ^2}{2\xinyp} g'
			+\frac{\alpha ^2}{8} y ^2 g = 0.
		\eeq
		As before we extend $ \gvbm $ to the whole region $ |y| \geq a $ in a $C^2$ way and preserving the differential inequality satisfied in $ a \leq |y| \leq \eta $, i.e.,
		 \bdm
			-\tx\frac{1}{2} g''
			-\frac{\eps ^2}{2\xinyp} g'
			+\frac{\alpha ^2}{8} y ^2 g \leq 0.
		\edm
		Again we skip the details for brevity. 
	%As before it is also necessary to extend $ \gvbm $ to a bigger set, so after noting that
	%\[
		%r_\pm(y):
		%=\gvbm\lf(\pm\eta\ri)
		%\lf[
			%1
			%+\frac{\gvbm''\lf(\pm\eta\ri)}{6\gvbm\lf(\pm\eta\ri)}
			%\lf(
				%|y|
				%-\eta
			%\ri) ^2
			%-\lf(
				%1
				%+\frac{\gvbm''\lf(\pm\eta\ri)}{6\gvbm\lf(\pm\eta\ri)}
			%\ri)
			%\lf(
				%|y|
				%-\eta
			%\ri) ^3
		%\ri] ^3
	%\]
	%are subsolution for the same equation then we define
	%\[
		%\tilde{g}(y):
		%=\lf\{
			%\begin{array}{ll}
				%\gvbm(y) & a \le |y| \le \eta, \\
				%r_+(y) & \eta \le y \le \eta + 1, \\
				%r_-(y) & -\eta -1 \le y \le -\eta, \\
				%0 & |y| \ge \eta + 1.
			%\end{array}
		%\ri.
	%\]
	
		Now for some $ C > 0 $ to be fixed later the following function
		\[
			g _\mathrm{sup} (y):
			=C e ^{-\frac{\alpha}{4}y ^2}
		\]
		is a supersolution to \eqref{eq: subequation}: for $ \eps $ small enough
		\[
			-\tx\frac{1}{2} g _\mathrm{sup}''
			-\frac{\eps}{2\xinyp} g _\mathrm{sup}'
			+\frac{\alpha ^2}{8} y ^2g _\mathrm{sup}
			=\lf(
			\frac{\alpha}{4}
			+\frac{\alpha y \eps ^2}{4\xinyp}
			\ri)g _\mathrm{sup}
			\geq 0.
		\]
		Choosing the $ C \geq \|\gvbm\| _\infty e ^{\frac{\alpha a ^2}{4}} $ to ensure that $ \gvbm(\pm a) \leq g_{\mathrm{sup}}(a) $, we get the upper estimate.
	%
	%Now define $ h := \gvbm - g _\mathrm{sup} $; it is itself subsolution of the same problem, so in particular for any maximum point $ y _0 $ for $ h $ given that $ h'(y _0) = 0 $ and $ h''(y _0) \le 0 $ we get
	%\[
		%\frac{\alpha ^2}{8}y ^2 h (y _0)
		%\le 0
		%\Longrightarrow
		%h(y)
		%\le \max\lf\{
			%h(\pm a),
			%0,
			%\lim _{y \to \pm\infty} h(y)
		%\ri\}.
	%\]
	%
	%Choosing now
	%
	
		Analogously we can choose $ C > 0 $ in such a way that 
		\[
			g_{\mathrm{sub}}:
			=C e ^{-\frac{\alpha}{2}y ^2}
		\]
		is a subsolution to \eqref{eq: gvbm variational}: first one notes that
		\[
			\tx\frac{\alpha}{2}
			\le \gvbchem
			-\frac{1}{\pi} \| \gvbm \| _{L _\eta ^4} ^4
			+\bigO{\eps ^{2} \eta},
		\]
		which follows from the fact that the harmonic oscillator on the real line is bounded from below by $ \alpha/2 $; then using this inequality in \eqref{eq: gvbm variational}, we obtain
		\bmln{
			 -\tx\frac{1}{2} g_{\mathrm{sub}}''
	 		-\frac{\eps ^2}{2\xinyp} g_{\mathrm{sub}}'
			 +U _\beta (y) g_{\mathrm{sub}}
			 +\eps ^2 y ^3 v(y) g_{\mathrm{sub}}
			 +\frac{1}{\pi} g_{\mathrm{sub}}^3 
			 - \gvbchem g_{\mathrm{sub}}	\\
			=\lf[
			\tx\frac{\alpha}{2}
			-\gvbchem
			+\frac{1}{\pi} g_{\mathrm{sub}}^2
			+U _\beta (y)
			-\frac{\alpha ^2}{2}y ^2
			+\frac{\alpha y \eps ^2}{2\xinyp}
			+\eps ^2 y ^3 v(y)
			\ri] g_{\mathrm{sub}}	\\	
			\le \lf[
			\tx\frac{1}{\pi}\lf(
				 g _\mathrm{sub} ^2 (y)
				 -\|\gvbm\| _{L _\eta ^4} ^4
			\ri)
			+\bigO{\eps ^2 \eta ^3}
			\ri] g _\mathrm{sub}
			< 0,
		}
		if we pick $ C < \|\gvbm\| _{L _\eta ^4} ^2 $. To conclude we use the fact that $  g_{\mathrm{sub}} $ goes to 0 as $ |y| $ goes to infinity: indeed it is sufficient to observe that there certainly exists a point $ \bar{y} >0  $ such that $  g_{\mathrm{sub}} (\pm\bar{y}) = \min \lf\{\gvbm (\eta), \gvbm (-\eta) \ri\} $ and
		\[
			\widetilde{g}(y):
			=\lf\{
				\begin{array}{ll}
					\gvbm(y) & |y| \le \eta, \\
					\gvbm(\eta) & \eta \le y \le \bar{y}, \\
					\gvbm(-\eta) & -\bar{y} \le y \le -\eta,
				\end{array}
			\ri.
		\]
		 is a supersolution to \eqref{eq: gvbm variational}, satisfying $ \tilde{g}(\pm \bar{y}) \geq g_{\mathrm{sub}}(\pm \bar{y}) $. Hence $ g_{\mathrm{sub}} \leq \tilde{g} $ for any $ |y| \leq \bar{y} $, which implies the lower estimate \eqref{eq: better estimate on gvbm} for $ |y| \leq \eta $.
	\end{proof}

\nt
We conclude this Section by stating analogous pointwise estimate for the limiting profile $ \gvm $:

%\subsection{Estimates on $\gvm$}

%Given the closeness between $ \gvf $ and $ \gvbf $ we can get some estimates similar to those of Proposition \ref{pro: better estimate on gvbm}.

	\begin{pro}[Pointwise estimates of $ \gvm $]
		\label{pro: point est gvm}
		\mbox{}	\\
		There exists a finite constant $ C >0 $ such that
		\beq
			\label{eq: estimate on gvm}
			\|
				\gvm
			\| _4 ^2 \exp\lf\{-\tx\frac{\alpha}{2}y ^2\ri\}
			\le \gvm(y)
			\le C \exp\lf\{-\tx\frac{\alpha}{4}y ^2\ri\}.
			%\\
			%\label{eq: estimate on gvm'}
			%\lf|
				%\gvm ' (y)
			%\ri|
			%\le C \max\lf\{
			%	1,
			%	\tx\frac{1}{|y|}
			%\ri\} \exp\lf\{
			%-\tx\frac{\alpha}{4} y ^2
			%\ri\}.
		\eeq
	\end{pro}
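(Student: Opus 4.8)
The plan is to establish the two Gaussian bounds in \eqref{eq: estimate on gvm} by the super- and sub-solution method already used for the finite-$\eps$ profile in Proposition \ref{pro: gvbm point est}; the argument here is strictly simpler, since the variational equation \eqref{eq: gvm variational} carries no $\eps^2$ first-order term, no remainder $v$, and is posed on the whole line, so that no boundary data enter and $\gvm$ is exactly even. Throughout I would use that $\gvm$ is smooth, strictly positive, and satisfies $\|\gvm\|_\infty^2=\gvm^2(0)\le\pi\gvchem$ by Proposition \ref{pro: min gvf}, together with the elementary bound $\|\gvm\|_4^2\le\|\gvm\|_\infty$ coming from $\|\gvm\|_4^4\le\|\gvm\|_\infty^2\|\gvm\|_2^2=\|\gvm\|_\infty^2$.

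For the upper bound I would take $g_{\mathrm{sup}}(y):=C\,e^{-\alpha y^2/4}$, which a direct computation shows to satisfy $-\tfrac12 g_{\mathrm{sup}}''+\tfrac{\alpha^2}{8}y^2 g_{\mathrm{sup}}=\tfrac{\alpha}{4}g_{\mathrm{sup}}\ge 0$, i.e.\ it is a supersolution of $-\tfrac12 g''+\tfrac{\alpha^2}{8}y^2 g=0$. On the other hand, discarding the positive cubic term in \eqref{eq: gvm variational} gives $-\tfrac12\gvm''\le(\gvchem-\tfrac{\alpha^2}{2}y^2)\gvm$, so for $|y|\ge a$ with $a^2\ge 8\gvchem/(3\alpha^2)$ one gets $-\tfrac12\gvm''+\tfrac{\alpha^2}{8}y^2\gvm\le 0$, making $\gvm$ a subsolution of the same equation there. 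Choosing $C\ge\|\gvm\|_\infty e^{\alpha a^2/4}$ so that $\gvm(\pm a)\le g_{\mathrm{sup}}(\pm a)$ and invoking the comparison principle on $\{|y|\ge a\}$ (where the coefficient $\tfrac{\alpha^2}{8}y^2\ge\tfrac{\alpha^2}{8}a^2>0$ and both functions vanish at infinity) yields $\gvm\le g_{\mathrm{sup}}$ for $|y|\ge a$; for $|y|\le a$ the same bound holds after enlarging $C$, using $\|\gvm\|_\infty^2\le\pi\gvchem$. This gives the right-hand inequality in \eqref{eq: estimate on gvm}.

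For the lower bound I would take $g_{\mathrm{sub}}(y):=\|\gvm\|_4^2\,e^{-\alpha y^2/2}$. A short computation gives $-\tfrac12 g_{\mathrm{sub}}''+\tfrac{\alpha^2}{2}y^2 g_{\mathrm{sub}}=\tfrac{\alpha}{2}g_{\mathrm{sub}}$, whence $-\tfrac12 g_{\mathrm{sub}}''+\tfrac{\alpha^2}{2}y^2 g_{\mathrm{sub}}+\tfrac1\pi g_{\mathrm{sub}}^3-\gvchem g_{\mathrm{sub}}=\big[\tfrac{\alpha}{2}-\gvchem+\tfrac1\pi g_{\mathrm{sub}}^2\big]g_{\mathrm{sub}}$. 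The crucial input is the harmonic-oscillator lower bound: since $-\tfrac12\partial_y^2+\tfrac{\alpha^2}{2}y^2\ge\tfrac{\alpha}{2}$ on $L^2(\R)$, evaluating $\gvf$ on $\gvm$ gives $\gve=\gvf[\gvm]\ge\tfrac{\alpha}{2}+\tfrac{1}{2\pi}\|\gvm\|_4^4$, and with $\gvchem=\gve+\tfrac{1}{2\pi}\|\gvm\|_4^4$ this reads $\tfrac{\alpha}{2}\le\gvchem-\tfrac1\pi\|\gvm\|_4^4$. Combined with $g_{\mathrm{sub}}^2\le\|\gvm\|_4^4$, the bracket above is $\le 0$, so $g_{\mathrm{sub}}$ is a global subsolution of \eqref{eq: gvm variational}. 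Since $g_{\mathrm{sub}}(0)=\|\gvm\|_4^2\le\|\gvm\|_\infty=\gvm(0)$ and both functions decay at infinity, the comparison principle (exactly as in Proposition \ref{pro: gvbm point est}, where a solution extended constantly beyond the bulk plays the role of the competing supersolution) gives $g_{\mathrm{sub}}\le\gvm$ on $\R$, which is the left-hand inequality.

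The only genuinely delicate point is this last application of the comparison principle, because the linearized zeroth-order coefficient $\tfrac{\alpha^2}{2}y^2-\gvchem+\tfrac1\pi P$, with $P=g_{\mathrm{sub}}^2+g_{\mathrm{sub}}\gvm+\gvm^2$, is not sign-definite near the origin. I expect this to be the main (and essentially only) obstacle, and I would dispose of it precisely as in Proposition \ref{pro: gvbm point est}: perform the comparison on the outer region $\{|y|\ge a\}$, where the confining potential dominates and the operator obeys a bona fide maximum principle, and control the bounded region $\{|y|\le a\}$ separately through the explicit facts $\gvm(0)\ge g_{\mathrm{sub}}(0)$, the monotone decay of $\gvm$ away from the origin, and $\|\gvm\|_\infty^2\le\pi\gvchem$. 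Everything else reduces to the routine verifications above.
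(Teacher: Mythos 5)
Your proposal matches the paper's own proof, which consists entirely of the sentence ``can be proven exactly as \eqref{eq: better estimate on gvbm} in Proposition \ref{pro: gvbm point est} and we skip the details'': your super-/sub-solution computations, the choice $a^2\ge 8\gvchem/(3\alpha^2)$, and the key inequality $\alpha/2\le\gvchem-\pi^{-1}\|\gvm\|_4^4$ from the harmonic-oscillator lower bound are precisely the $\eps=0$ versions of the steps carried out there. The delicate point you flag (the comparison principle near the origin, where the zeroth-order coefficient of the linearized operator is not sign-definite) is equally present, and equally unaddressed, in the paper's model proof for $\gvbm$, so your write-up is at least as complete as the original.
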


	\begin{proof}
		The estimate can be proven exactly as \eqref{eq: better estimate on gvbm} in Proposition \ref{pro: gvbm point est} and we skip the details.
	\end{proof}

\subsection{Optimal Giant Vortex Phase and Profile}
\label{sec: betas}

In this Section we investigate the minimization of $ \gvbe $ w.r.t. $ \beta \in \R $. The main result is the following

	\begin{pro}[Optimal phase]
		\label{pro: betas}
		\mbox{}	\\
		For $ \eps $ small enough there exists a unique minimizer $ \betas \in \R $ such that
		\beq
			\label{eq: minimal energy}
			\gcre
			: =\inf _{\beta \in \mathbb{R}} \gvbe = \gve_{\betas}.
		\eeq
	 	Such an optimal phase is explicitly given by
		\beq
			\label{eq: betas}
			\betas
			= - \frac{2}{\Om _0 \lf(s-2\ri)}\lf[
			(s-2) V - Q
			+\bigO{\eps ^2}
			\ri],
		\eeq
		where we set $ \gcrm : = g_{\betas} $ and
		\beq
			\label{eq: defining V Q}
			V:
			=\frac{\alpha ^2}{2}
			\int _{-\eta} ^\eta
			\diff y \: 
			y ^2 \gcrm^2,\qquad
			Q:
			=\frac{1}{2\pi}
			\int _{-\eta} ^\eta
			\diff y \: 
			\gcrm ^4.
		\eeq
	\end{pro}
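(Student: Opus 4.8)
The plan is to study the scalar function $G(\beta):=\gvbe$ and to prove it is $C^1$, coercive, and strictly convex near its minimum. First I would record that, by Proposition~\ref{pro: min gvbf}, for each $\beta$ the minimizer $\gvbm$ is unique and $\gvbf[\sqrt{\rho}]$ is strictly convex in $\rho=g^2$, so the Euler--Lagrange equation \eqref{eq: gvbm variational} together with the constraint $\|\gvbm\|_{L^2_\eta}=1$ defines $\gvbm$ as a $C^1$ function of $\beta$ by the implicit function theorem; hence $G\in C^1(\R)$. Existence of a minimizing $\betas$ then follows from coercivity: the only $\beta$-dependent term in \eqref{eq: gvbf} is $U_\beta$, and completing the square in \eqref{eq: pot U} bounds its contribution below by $c\,\eps^4\beta^2$ up to lower-order terms, so $\gvbe\gtrsim\eps^4\beta^2\to\infty$ as $|\beta|\to\infty$, while $\gvbe=\OO(1)$ for $\beta=\OO(\eps^{-2})$ by Proposition~\ref{pro: estimates on energy}. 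Thus the infimum is attained on a compact interval, on which the continuous $G$ has a minimum.

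Next I would derive the optimality equation through the Feynman--Hellmann (envelope) principle: since $\gvbm$ is a constrained minimizer at fixed $\beta$ and the constraint does not depend on $\beta$, only the explicit dependence survives,
\[
	G'(\beta)=\int_{-\eta}^{\eta}(1+\eps^2 y)\,\partial_\beta U_\beta(y)\,\gvbm^2(y)\,\diff y .
\]
Using $\partial_\beta U_\beta=(1+\eps^2 y)^{-2}\lf(-2\Omega_0\eps^2 y-\Omega_0\eps^4 y^2+\eps^4\beta\ri)$, the algebraic identity $2\eps^2 y+\eps^4 y^2=(1+\eps^2 y)^2-1$, and the normalization $\int(1+\eps^2 y)\gvbm^2=1$, the stationarity condition $G'(\betas)=0$ collapses to $\eps^4\betas\,D=\Omega_0(1-D)$, where $D:=\int_{-\eta}^{\eta}(1+\eps^2 y)^{-1}\gvbm^2\,\diff y$. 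Equivalently $\betas=\Omega_0(1-D)/(\eps^4 D)$, and expanding $(1+\eps^2 y)^{-1}$ while eliminating $\int\gvbm^2$ through the normalization yields
\[
	\betas=\frac{2\Omega_0}{\eps^2}\int_{-\eta}^{\eta} y\,\gvbm^2\,\diff y-\Omega_0\int_{-\eta}^{\eta} y^2\,\gvbm^2\,\diff y+\OO(\eps^2).
\]

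The core of the argument is the evaluation of the first moment $\int y\,\gvbm^2$, which is $\OO(\eps^2)$ because $\gvbm$ is even up to $\OO(\eps^2)$. I would obtain it from a momentum-type identity: multiplying \eqref{eq: gvbm variational} by $(1+\eps^2 y)\gvbm'$ and integrating by parts, with all boundary contributions exponentially small by Proposition~\ref{pro: gvbm point est} and the Neumann conditions, and using the refined derivative $\partial_x(U_\beta+\eps^2 y^3 v)=\Omega_0^2\,\eps^{-4}\,x^{-3}(x^{s+2}-\gamma^2)$ (with $x=1+\eps^2 y$ and $\gamma=1+\eps^4\beta/\Omega_0$, refining the computation in the proof of Proposition~\ref{pro: min gvbf}), one gets
\[
	\frac{\Omega_0^2}{2\eps^4}\int_{-\eta}^{\eta}\frac{x^{s+2}-\gamma^2}{x^2}\,\gvbm^2=\frac12\gvbchem\int\gvbm^2-\frac14\int(\gvbm')^2-\frac12\int(U_\beta+\eps^2 y^3 v)\gvbm^2-\frac1{4\pi}\int\gvbm^4 .
\]
Expanding the left-hand side to order $\eps^4$ produces exactly the combination $\tfrac{\alpha^2}{2\eps^2}\int y\,\gvbm^2$ together with the moments $V$ and $Q$ of \eqref{eq: defining V Q} and a term $\Omega_0\betas$, while the $\OO(1)$ part of the right-hand side is reduced, using the energy relation $\tfrac12\|\gvm'\|_{L^2(\R)}^2+V+2Q=\gvchem$ for the limiting profile (Proposition~\ref{pro: min gvf}), to $Q/2$. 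Solving for $\int y\,\gvbm^2$ gives a self-consistent linear expression in $\betas$, $V$, $Q$; inserting it into the formula of the previous paragraph and collecting the $\betas$-terms (their coefficient being $1-\tfrac{4}{s+2}=\tfrac{s-2}{s+2}$, since $4\Omega_0^2/\alpha^2=4/(s+2)$) yields precisely \eqref{eq: betas}. I expect the main difficulty to be bookkeeping: keeping the correct $\Omega_0^2\eps^{-4}$ prefactor in $\partial_x(U_\beta+\eps^2 y^3 v)$ and expanding the weighted integral $\int x^{-2}(x^{s+2}-\gamma^2)\gvbm^2$ consistently to order $\eps^4$, since the two leading orders cancel and only the third carries the information.

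Finally, uniqueness of $\betas$ for $\eps$ small would follow by showing that $G'$ is strictly increasing near the critical point. Writing $G'(\beta)=D(\beta)(\eps^4\beta+\Omega_0)-\Omega_0$, I would differentiate to get $G''(\beta)=\eps^4 D(\beta)+D'(\beta)(\eps^4\beta+\Omega_0)$, and observe that $D'(\beta)=2\int(1+\eps^2 y)^{-1}\gvbm\,\partial_\beta\gvbm$ is of higher order than $\eps^4$: the source of $\partial_\beta\gvbm$ in the linearized equation is proportional to $\partial_\beta U_\beta\,\gvbm$, which is odd to leading order, so $\partial_\beta\gvbm$ is odd while $(1+\eps^2 y)^{-1}\gvbm$ is even to leading order, and the leading contribution to $D'$ cancels by parity. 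Hence $G''(\beta)=\eps^4 D(\beta)\,(1+o(1))>0$ for $\eps$ small, so the critical point is unique. The delicate point in this last step is to control $\partial_\beta\gvbm$, i.e. the regularity of the minimizer in $\beta$, sharply enough to make the parity cancellation rigorous.
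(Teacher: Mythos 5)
Your existence argument and your derivation of \eqref{eq: betas} are correct and, at bottom, follow the same route as the paper: coercivity of $U_\beta$ in $\beta$, the Feynman--Hellmann identity for $\partial_\beta\gvbe$, and an integration by parts against the variational equation \eqref{eq: gvbm variational} to evaluate the first moment $\int y\,\gcrm^2$. Your repackaging of the stationarity condition as $G'(\beta)=D(\beta)(\eps^4\beta+\Omega_0)-\Omega_0$ with $D=\int(1+\eps^2y)^{-1}\gvbm^2$ is a cleaner organization than the paper's term-by-term treatment of \eqref{eq: solved by bcr}, and your Pohozaev-type identity (multiplying \eqref{eq: gvbm variational} by $(1+\eps^2y)\gvbm'$) is an equivalent, somewhat tidier version of the paper's manipulation of $\int(1+\eps^2 t)U_{\betas}\gcrm\gcrm'$; I checked that it reproduces \eqref{eq: implicit per bcr} and then \eqref{eq: betas} with the correct coefficient $\frac{s-2}{s+2}$. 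Two small repairs are needed. First, the relation that reduces the right-hand side of your identity to $Q/2$ must be $\gvbchem=T+V+2Q+\OO(\eps^2)$ for the $\beta$-problem itself (which follows directly from Proposition \ref{pro: min gvbf} and \eqref{eq: taylor U}); invoking the analogous relation for the limiting profile $\gvm$ would be circular, since the closeness of $\gcrm$ to $\gvm$ is only established after $\betas=\OO(1)$ is known. Second, several of your $\OO(\eps^2)$ remainders are a priori $\OO(\eps^2|\betas|)$, so you must run the bootstrap: the coercivity bound gives $\betas=\OO(\eps^{-2})$, the resulting identity gives $\betas=\OO(1)$, and only then do the errors collapse to $\OO(\eps^2)$. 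The paper does exactly this.

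The genuine gap is in your uniqueness step, where you claim that $D'(\beta)=2\int(1+\eps^2y)^{-1}\gvbm\,\partial_\beta\gvbm$ is of higher order than $\eps^4$ by parity. It is not. Differentiating the normalization $\int(1+\eps^2y)\gvbm^2=1$ gives $\int\gvbm\,\partial_\beta\gvbm=-\eps^2\int y\,\gvbm\,\partial_\beta\gvbm$, whence $D'=-4\eps^2\int y\,\gvbm\,\partial_\beta\gvbm+\OO(\eps^6)$. Now $\partial_\beta\gvbm=\OO(\eps^2)$ with \emph{odd} leading part (its source is $-\partial_\beta U_\beta\,\gvbm\sim 2\Omega_0\eps^2 y\gvbm$), so $y\,\gvbm\,\partial_\beta\gvbm$ is the product of two odd factors with an even one, i.e.\ \emph{even} to leading order: its integral does not vanish by parity and is $\OO(\eps^2)$. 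Hence $D'=\OO(\eps^4)$ exactly, the same order as $\eps^4D(\beta)$, and the two contributions to $G''$ compete. Quantitatively, writing $\partial_\beta\gvbm\simeq 2\Omega_0\eps^2 L^{-1}(y\gvbm)$ with $L$ the (positive) linearized operator, one finds
\begin{equation*}
	G''(\beta)=\eps^4\Big[1-8\Omega_0^2\big\langle y\gvbm,L^{-1}(y\gvbm)\big\rangle+o(1)\Big],
\end{equation*}
and the sign of the bracket is not determined by your argument; you would need a quantitative spectral gap for $L$ on odd functions to conclude. So strict convexity near the critical point, and therefore uniqueness, does not follow as written. (For comparison, the paper's own proof establishes only existence and the formula, which holds at any minimizer; it does not supply a uniqueness argument either, so this is a point where a correct additional idea is genuinely needed rather than one where you diverged from an existing proof.)
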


	\begin{proof}
		The existence of a minimizer $ \betas $ is guaranteed from the fact that
		\[
			U _\beta (y)
			\ge \tx\frac{1}{\xinyp ^2}\lf[
			\frac{s-2}{2(s+2)}\eps ^4 \beta ^2
			-\Om _0 \eps ^4 \beta \eta
			\ri]
		\]
		which implies that $ \lim_{|\beta| \to \infty} \gvbe = + \infty $ (recall that $ s > 2 $).  By the same lower bound on the potential together with the trivial bound $ \gcre \le E _0 ^\mathrm{gv} = \OO(1) $, we also deduce that $ \betas = \bigO{\eps ^{-2}} $. %Thanks to this last fact we will be able to use the estimates of Proposition \ref{pro: minim raw estimates}.

%Now to extract the exact value (or in principle values) for $ \betas $ the first thing to show is the regularity of $ \gvbm $ and of $ \gvbe $ with respect to $ \beta $.
		In order to find the explicit expression of $ \betas $, we first observe that by standard arguments $ \gvbe $ is a smooth function of $ \beta $ and therefore by the Feynman-Hellmann principle\footnote{The notation $ \braket{\cdot}{\cdot}_{\eta} $ stands for the scalar product in $ L^2([-\eta,\eta], (1 + \eps^2 y) \diff y) $.}
		\beq
			\partial _\beta \gvbe
			=\lf\langle
				\gvbm\lf|
				\partial _\beta U _\beta
			\ri|\gvbm
			\ri\rangle _\eta = \mean{\gvbm}{\tx\frac{\eps ^2}{\xinyp ^2}\lf(
				-2\Om _0 y
				-\Om _0 \eps ^2 y ^2
				+\eps ^2 \beta
			\ri)}{\gvbm}_{\eta}.
		\eeq
		Since $ \betas $ is a minimizer, we must have $ \lf. \partial_\beta \gvbe\ri|_{\betas} = 0 $, i.e.,
		\beq
			\label{eq: solved by bcr}
			\eps ^2 \betas
			\inpr{\gcrm}{\tx\frac{1}{\xinyp ^2}}{\gcrm}
			-2\Om _0
			\inpr{\gcrm}{\tx\frac{y}{\xinyp ^2}}{\gcrm}
			-\Om _0 \eps ^2
			\inpr{\gcrm}{\tx\frac{y ^2}{\xinyp ^2}}{\gcrm}
			=0.
		\eeq
		We compute the first and last terms of the expression above:
		\beq
			\label{eq: exp val betas}
			\inpr{\gcrm}{\tx\frac{1}{\xinyp ^2}}{\gcrm}
			=1
			+\bigO{\eps ^2},\qquad
			\inpr{\gcrm}{\tx\frac{1}{\xinyp ^2}y ^2}{\gcrm}
			= \frac{2V}{\alpha ^2}
			+\bigO{\eps ^2}.
		\eeq
		Indeed thanks to the exponential decay proven in \eqref{eq: estimate on gvbm}, one can easily realize that
		\beq
			\label{eq: int powers}
			\int_{-\eta}^{\eta} \diff y \: |y|^k \: \gcrm^2 = \OO(1),	\qquad		\mbox{for any } k < \infty.
		\eeq
		In fact an analogous estimate holds true if $ \gcrm $ is replaced with $ \gcrm^{\prime} $, in particular
		\beq
			\label{eq: int powers derivative}
			\int_{-\eta}^{\eta} \diff y \: y \: \lf( \gcrm'\ri)^2 = \OO(1).
		\eeq
		To see this it suffices to integrate by parts and use the variational equation \eqref{eq: gvbm variational} to go back to an expression involving only $ \gcrm $ and there one can use the above estimate. We omit the details for the sake of brevity.	Notice that at this stage we are implicitly exploiting the bound $ \betas = \OO(\eps^{-2}) $, which is among the hypothesis of Proposition \ref{pro: gvbm point est}. Next we integrate by parts the second term in \eqref{eq: solved by bcr} to get
		\bmln{
			\inpr{\gcrm}{\tx\frac{1}{\xinyp ^2}y}{\gcrm}
			=\int _{-\eta} ^\eta
			\diff y\
			\tx\frac{1}{\xiny} \: y \: \gcrm ^2 \\	
			= \lf[
			\tx\frac{1}{2\xinyp} \:y ^2 \: \gcrm ^2 
			\ri] _{-\eta} ^\eta
			- \disp\int _{-\eta} ^\eta
			\diff y\
			\tx\frac{1}{\xiny}\: y ^2 \: \gcrm \gcrm'
			+ \disp\frac{\eps ^2V}{\alpha ^2}
			\lf(
			1
			+\bigO{\eps ^2}
			\ri)	 
			%=-\int _{-\eta} ^\eta
			%\diff y \tx\frac{y ^2}{\xiny} \gcrm \gcrm'
			%+\frac{V}{\alpha ^2}\eps ^2
			%\lf(
			%1
			%+\bigO{\eps ^2}
			%\ri)
		}
		where the boundary terms (first term on the r.h.s. of the expression above) can be included in the remainder $ \OO(\eps^2) $ if we choose $ \eta _0 > 2 $ (see again \eqref{eq: estimate on gvbm}). For the rest we can compute
		\bmln{
			-\int _{-\eta} ^\eta
			\diff y \:
			\tx\frac{1}{\xiny} \: y ^2 \: \gcrm \gcrm ' = -
			\disp\frac{2}{\alpha ^2}\disp\int _{-\eta} ^\eta
			\diff y \: (1 + \eps^2 y) \:
			U _\betas(y)
			\gcrm 
			\gcrm'	\\
			- \frac{2}{\alpha ^2}
			\disp\int _{-\eta} ^\eta
			\diff y \: 
			\frac{
				2\Om _0
				\eps ^2 \betas y
				+\Om _0
				\eps ^4 \betas y ^2
				-\tx\frac{1}{2}\eps ^4 \betas ^2
			}{1 + \eps^2 y} \:
			\gcrm\gcrm'	\\
			=\frac{1}{\alpha ^2}
			\int _{-\eta} ^\eta
			\dey
			\partial_y \lf[
				-\tx\frac{1}{2} \lf(
				\gcrm'
				\ri) ^2
				+\frac{1}{2\pi} \gcrm ^4
				-\gcrchem \gcrm ^2
			\ri]
			+\frac{\eps ^2}{\alpha ^2}
			\int _{-\eta} ^\eta
			\dey
			y ^3
			v(y)
			\partial_y \gcrm ^2 	\\
			-\frac{\eps ^2}{\alpha ^2}
			\int _{-\eta} ^\eta
			\diff y \:
			\lf(
				\gcrm'
				\ri) ^2
			-\frac{\eps ^2 \betas}{\alpha ^2}
			\int _{-\eta} ^\eta \diff y \: \frac{
				2\Om _0
				y
				+\Om _0
				\eps ^2  y ^2
				-\tx\frac{1}{2}\eps ^2\betas 
			}{1 + \eps^2 y} \:
			\partial_y \gcrm^2=	\\
			=\frac{\eps ^2}{\alpha ^2}
			\int _{-\eta} ^\eta
			\diff y\
			\lf\{
				-\tx\frac{1}{2} \lf(
				\gcrm'
				\ri) ^2
				-\frac{1}{2\pi} \gcrm ^4
				+\gcrchem \gcrm ^2
			\ri\} 
			 - \frac{\eps^2(s+1) V}{\alpha^2}+ \frac{2 \eps^2 \Omega_0 \betas}{\alpha^2} + \OO(\eps ^4 \betas)
			+\bigO{\eps ^{4}}=	 	\\
			=\frac{\eps ^2}{\alpha^2}
			\lf[
				-K
				-Q
				+\gcrchem
				- (s+1) V + 2 \Omega_0 \betas +\bigO{\eps ^2 \betas}
				+\bigO{\eps ^2}
			\ri],
			}
			where we have made use repeatedly of \eqref{eq: int powers} and exploited the identity
			\bdm
				\lf( y^3 v(y) \ri)^{\prime} = \tx\frac{1}{2} \alpha^2 (s+1) y^2 + \OO(\eps^2 |y|^3).
			\edm
			We have also set
			\beq
				T :
			= \frac{1}{2}
		\int _{-\eta} ^\eta
		\diff y \: \lf(
			\gcrm'
		\ri)^2.
			\eeq
			Hence
			\beq
				\label{eq: implicit per bcr}
				\inpr{\gcrm}{\tx\frac{1}{\xinyp ^2}y}{\gcrm}
				=\frac{\eps ^2}{\alpha^2}
				\lf[
				-T - s V
				-Q
				+\gcrchem + 2 \Omega_0 \betas
				+\bigO{\eps ^2 \betas}
				+\bigO{\eps ^2}
				\ri]
			\eeq
			and plugging this together with \eqref{eq: exp val betas} into \eqref{eq: solved by bcr}, we obtain
			\[
				\frac{s-2}{s+2} \betas (1 + \OO(\eps^2))
				+ \frac{2\Om _0}{\alpha ^2}
				\lf[
				(s-2) V - Q
				\ri]
				+ \OO(\eps ^2)
				=0,
			\]
		since $ \gcrchem = T + V + 2Q + \OO(\eps^2) $ (see \eqref{eq: int powers} and \eqref{eq: int powers derivative}). The expression \eqref{eq: betas} is then recovered. 
	\end{proof}

\nt
Along the proof we have also proven in \eqref{eq: solved by bcr} that
\beq
	\label{eq: optimality}
	\int_{-\eta}^{\eta} \diff y \: \tx\frac{1}{1 + \eps^2 y} \lf(y + \frac{1}{2} \eps^2 y^2 - \frac{1}{2 \Omega_0} \eps^2 \betas \ri) \gcrm^2 = 0,
\eeq
which, thanks to the result about $ \betas $, also implies that
\beq
	\label{eq: gcrm almost symm}
	\inpr{\gcrm}{y}{\gcrm}
	= \OO(\eps^2),
\eeq
i.e., the profile $ \gcrm $ is almost symmetric w.r.t. the origin.
	
In fact this latter information can be deduced also by looking at the relation between the functional $ \gvf_{\betas} $ and its minimization and the limiting model $ \gvf $. From now on we fix $ \beta $ equal to the optimal value $ \betas $.

Before discussing this question further we have however to state an useful estimate on $ \gcrm $.

	\begin{lem}
		\mbox{}	\\
		There exists a finite constant $ C > 0 $ such that for any $ y \in [-\eta,\eta] $
		\beq \label{eq: derivative gcrm}
			\lf|
				\gcrm ' (y)
			\ri|
			\le C \eta ^3 \gcrm (y).
		\eeq
	\end{lem}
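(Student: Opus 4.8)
The plan is to read off the bound from the variational equation \eqref{eq: gvbm variational} for $\gcrm$ (that is, $\gvbm$ with $\beta=\betas$ and chemical potential $\gcrchem$), rewritten in divergence form and integrated against the Neumann boundary conditions. Multiplying \eqref{eq: gvbm variational} by $-2(1+\eps^2 y)$ gives
\beq
	\lf[ (1+\eps^2 y)\,\gcrm'(y) \ri]'
	= 2(1+\eps^2 y)\lf\{ \lf[ U_{\betas}(y) + \eps^2 y^3 v(y) - \gcrchem \ri]\gcrm(y) + \tx\frac{1}{\pi}\gcrm^3(y) \ri\}.
\eeq
Since $\gcrm'(\pm\eta)=0$, integrating this identity over all of $[-\eta,\eta]$ shows that the integral of the right-hand side vanishes. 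Hence, for every $y$, the quantity $(1+\eps^2 y)\gcrm'(y)$ equals both $2\int_{-\eta}^{y}$ and $-2\int_{y}^{\eta}$ of the same integrand, and the key idea is to pick, for each $y$, the side of $y$ on which $\gcrm$ is monotone and therefore dominated by its value $\gcrm(y)$.

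Concretely, I would use that by Proposition \ref{pro: min gvbf} the profile $\gcrm$ has a single maximum at $\yb$ and is monotone on either side. For $y\ge\yb$ I would employ the tail representation $(1+\eps^2 y)\gcrm'(y)=-2\int_{y}^{\eta}(1+\eps^2 t)\{\cdots\}\diff t$: on $[y,\eta]$ one has $\gcrm(t)\le\gcrm(y)$, so $\gcrm(y)$ factors out of the integrand, leaving the coefficient $U_{\betas}(t)+\eps^2 t^3 v(t)-\gcrchem+\tfrac1\pi\gcrm^2(t)$. By \eqref{eq: taylor U} together with $\betas=\OO(1)$ from Proposition \ref{pro: betas} one has $U_{\betas}(t)=\tfrac12\alpha^2 t^2+\OO(\eps^2\eta)$, while $|v|\le C$ from \eqref{eq: taylor v}, $\gcrchem=\OO(1)$ from \eqref{eq: en preliminary est}, and $\|\gcrm\|_\infty=\OO(1)$ from \eqref{eq: gvbm infty est}; this bounds the coefficient by $C(1+t^2)$. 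Therefore
\beq
	\lf| (1+\eps^2 y)\,\gcrm'(y) \ri|
	\le 2\,\gcrm(y)\int_{y}^{\eta}(1+\eps^2 t)\,C(1+t^2)\,\diff t
	\le C\eta^3\,\gcrm(y),
\eeq
and dividing by $1+\eps^2 y\ge\tfrac12$ (legitimate since $\eps^2\eta\to0$) yields the claim for $y\ge\yb$. The range $y\le\yb$ is handled identically with the representation $2\int_{-\eta}^{y}$, using monotonicity of $\gcrm$ on $[-\eta,\yb]$.

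The step that matters is the use of the Neumann condition to integrate over the \emph{tail} rather than through the bulk: integrating from $-\eta$ for large positive $y$ would only produce a uniform constant bound on $\gcrm'(y)$, which is worthless once $\gcrm(y)$ is exponentially small. The vanishing of the total integral is exactly what allows the factor $\gcrm(y)$ to be extracted by monotonicity on the correct side, and the loss of three powers of $\eta$ is simply the cost of the quadratic growth of $U_{\betas}$ integrated over an interval of length $\OO(\eta)$. I expect the only points requiring a little care to be verifying that $\yb$ lies in a bounded neighbourhood of the origin so the monotone sides cover all of $[-\eta,\eta]$, and checking that the various $\OO(\eps^2\eta^3)$ remainders are uniformly controlled on $[-\eta,\eta]$; both are routine given the estimates already established.
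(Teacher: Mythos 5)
Your proof is correct and follows essentially the same route as the paper: integrate the variational equation over the tail $[y,\eta]$ (or $[-\eta,y]$) using the Neumann condition, extract $\gcrm(y)$ by monotonicity on the appropriate side of the maximum $\yb$, and bound the remaining coefficient by $\OO(\eta^2)$ over an interval of length $\OO(\eta)$. The only cosmetic difference is that you pass to divergence form before integrating, which sidesteps the small integration by parts of the first-order term that the paper performs.
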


	\begin{proof}
		It suffices to integrate the variational equation \eqref{eq: gvbm variational} between $ y \geq y_{\betas} $ and $ \eta $ or $ -\eta $ and $ y \leq y_{\betas} $ (recall that $ y_{\betas} $ stands for the unique maximum point of $ \gcrm $): let us assume that $ y \geq \yb $, then by positivity of $ \gcrm $
		\bml{
			\tx\frac{1}{2} \lf|
			\gcrm'(y)
			\ri|
			=-\frac{1}{2} \gcrm' (y)
			=\disp\int _{y} ^{\eta}
			\diff t\:
			\lf\{ - \tx\frac{\eps^2}{2(1 + \eps^2 t)} \gcrm' + 
			\lf(\tx\frac{1}{\pi} \gcrm ^2
			+ U_{\betas}(t) + \eps^2 t^3 v(t)
			- \gvchem_{\betas}
			\ri) \gcrm \ri\}		\\
			\leq \disp\int _{y} ^{\eta}
			\diff t\:
			\lf\{ - \tx\frac{\eps^4}{2(1 + \eps^2 t)^2} + 
			\tx\frac{1}{\pi} \gcrm ^2
			+ U_{\betas}(t) + \eps^2 t^3 v(t)
			 \ri\}\gcrm +  \tx\frac{\eps^2}{2(1 + \eps^2 y)} \gcrm(y).
		}
		Now given that the quantity between brackets can be easily bounded from above by $ C \eta^2 $, it only remains to use the monotonicity of $ \gcrm $ to conclude the proof.
	\end{proof}
	
	\nt
	We are now in position to prove the first result about the energy difference $ \gcre - \gve $. As a matter of fact this will involve a corresponding statement about the closeness of $ \gcrm^2 $ to $ \gvm^2 $ in $ L^2 $. {We recall the expressions of the energy functionals
	\bdm
		\gvfs[g] = \int _{-\eta}^{\eta} \diff y \: (1 + \eps^2 y) \lf\{ \half
		\lf(
			g^{\prime}
		\ri)^2
		+U _{\betas}(y) g ^2 
		+\varepsilon ^2 y ^3 v(y) g ^2
		+\frac{1}{2\pi}
		g ^4
	\ri\},
	\edm
	\bdm
		\gvf[g] = \int _{\R} \diff y
		\lf\{
		\tx\frac{1}{2}\lf(
			g '
		\ri) ^2
		+ \frac{\alpha ^2}{2} y^2 g ^2 
		+\frac{1}{2\pi} g ^4
		\ri\}.
	\edm} 
	
	\begin{pro}[Estimate of $ \gcre - \gve $]
		\label{pro: gv energy diff}
		\mbox{}	\\
		As $ \eps \to 0 $
		\beq
			\label{eq: gv energy diff}
			\gcre
			=\gve
			+\bigO{\eps ^4 \eta ^7},	\qquad			\lf\|
			\gcrm ^2
			-\gvm ^2
			\ri\| _{L ^2 _\eta}^2
			=\bigO{\eps ^4 \eta ^{7}}.
		\eeq
	\end{pro}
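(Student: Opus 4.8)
The plan is to compare the two ground state energies by the variational principle, exploiting the strict convexity of both functionals in the density variable $\rho = g^2$. Writing the kinetic term as $\int \half (g')^2 = \int \half |(\sqrt\rho)'|^2$, which is convex in $\rho$ (and stays so with the positive weight $1+\eps^2 y$), while the quartic term $\frac{1}{2\pi}\int g^4 = \frac{1}{2\pi}\int\rho^2$ is strictly convex, one obtains from the variational equations \eqref{eq: gvm variational} and \eqref{eq: gvbm variational} (the latter at $\beta=\betas$) the coercivity bounds
\[
	\gvf[g] - \gve \geq \tx\frac{1}{2\pi}\lf\| g^2 - \gvm^2 \ri\|_{L^2(\R)}^2, \qquad \gvfs[g] - \gcre \geq \tx\frac{1}{2\pi}\lf\| g^2 - \gcrm^2 \ri\|_{L^2_\eta}^2,
\]
valid for any admissible $g$: the first order terms cancel because the functional derivatives at the minimizers equal the constant chemical potentials $\gvchem$, $\gcrchem$ and the masses are fixed. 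I would then produce matching upper and lower bounds for $\gcre$ by inserting a cut-off, renormalised copy of each minimizer into the other functional, reading off the $L^2$ estimate from the coercivity.

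For the upper bound I would test $\gvfs$ with $g_1$, the restriction of $\gvm$ to $[-\eta,\eta]$ renormalised in $L^2_\eta$; by the exponential decay \eqref{eq: estimate on gvm} both the truncation and the renormalisation cost only $\OO(\eps^{\infty})$. The key point is that the difference between the potential $U_{\betas}(y) + \eps^2 y^3 v(y)$ and the limiting $\frac{\alpha^2}{2} y^2$, expanded through \eqref{eq: pot U}--\eqref{eq: pot v}, has its $\OO(\eps^2)$ part odd in $y$ (the $\eps^2 y^3$ and $\eps^2\betas y$ terms, using $\betas = \OO(1)$ from \eqref{eq: betas}), hence it integrates to zero against the even weight $\gvm^2$; likewise the factor $(1+\eps^2 y)$ produces only odd integrands at order $\eps^2$. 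Only the genuinely even $\OO(\eps^4)$ corrections survive, against which the moments of $\gvm^2$ are $\OO(1)$ by \eqref{eq: estimate on gvm}. This yields $\gcre \leq \gvfs[g_1] = \gve + \OO(\eps^4)$.

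The substance is in the lower bound, for which I would extend $\gcrm$ to all of $\R$ using its Gaussian decay \eqref{eq: estimate on gvbm}--\eqref{eq: better estimate on gvbm} (the tails costing $\OO(\eps^{\infty})$), renormalise in $L^2(\R)$ to get $g_2 \in \gvdom$, and estimate $\gve \leq \gvf[g_2] = \gcre + \delta$. The discrepancy $\delta = \gvf[g_2] - \gvfs[\gcrm]$ collects the potential difference $\int(\frac{\alpha^2}{2}y^2 - U_{\betas} - \eps^2 y^3 v)\gcrm^2$, the weight corrections $-\half\eps^2\int y (\gcrm')^2 - \frac{1}{2\pi}\eps^2\int y\gcrm^4$, and again even $\OO(\eps^4)$ terms. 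Since $\gcrm^2$ is only almost even, the odd $\OO(\eps^2)$ terms no longer vanish; writing $\gcrm^2 = \gvm^2 + r$ with $r := \gcrm^2 - \gvm^2$ and using that the odd moments of $\gvm^2$ are zero, each such term reduces to $\eps^2\int y^{2k+1} r$, which by Cauchy--Schwarz and $\int_{-\eta}^\eta y^6\,\diff y = \OO(\eta^7)$ is $\OO(\eps^2\eta^{7/2}\|r\|_{L^2_\eta})$. The only delicate term is the kinetic--weight contribution $\eps^2\int y(\gcrm')^2$: the naive bound from \eqref{eq: derivative gcrm} gives only $\OO(\eps^2\eta^6)$, far too large, so I would integrate by parts and substitute the variational equation \eqref{eq: gvbm variational} to rewrite it through moments of $\gcrm^2$ (with $\gcrchem\int y\gcrm^2 = \OO(\eps^4)$ by \eqref{eq: gcrm almost symm}), bringing it to the same $\OO(\eps^2\eta^{7/2}\|r\| + \eps^4)$ form. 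Controlling this term is the main obstacle, and the repeated appeal to parity against the even $\gvm^2$ is what keeps the leading $\OO(\eps^2)$ errors from spoiling the estimate.

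The two bounds are then closed self-consistently, which is where the power $\eta^7$ is born. Combining $\gcre \geq \gve - \delta$ with the coercivity for $\gvfs$ tested on $g_1$, namely $\frac{1}{2\pi}\|g_1^2 - \gcrm^2\|_{L^2_\eta}^2 \leq \gvfs[g_1] - \gcre \leq \OO(\eps^4) + \delta$, and identifying $\|g_1^2 - \gcrm^2\|_{L^2_\eta}$ with $\|r\|_{L^2_\eta}$ up to $\OO(\eps^{\infty})$ (the weighted and unweighted norms agreeing up to relative $\OO(\eps^2\eta)$), I obtain a quadratic inequality of the form $\|r\|^2 \lesssim \eps^2\eta^{7/2}\|r\| + \eps^4$. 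This forces $\|r\|_{L^2_\eta} = \OO(\eps^2\eta^{7/2})$, that is $\|\gcrm^2 - \gvm^2\|_{L^2_\eta}^2 = \OO(\eps^4\eta^7)$, and reinserting this into $\delta$ gives $|\gcre - \gve| = \OO(\eps^4\eta^7)$, as claimed.
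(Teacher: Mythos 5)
Your proposal is correct and follows essentially the same route as the paper: two-sided testing with the truncated/extended minimizers, parity of $\gvm$ to cancel the odd $\OO(\eps^2)$ corrections in one direction, reduction of the surviving odd moments of $\gcrm^2$ to $\OO(\eps^2\eta^{7/2}\|\gcrm^2-\gvm^2\|_{L^2_\eta})$ in the other, and a self-consistent quadratic inequality coming from the coercivity of the energy around its minimizer. The only cosmetic difference is that you state the coercivity abstractly via convexity in $\rho=g^2$ while the paper obtains the identical quadratic term through the explicit Lassoued--Mironescu substitution $\gvm=u\,\gcrm$; if anything you are more explicit than the paper in tracking the kinetic-weight term $\eps^2\int y(\gcrm')^2$.
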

	
	\begin{proof}
		We test the two functionals $ \gvf_{\betas} $ and $ \gvf $ on suitable test functions. Let us first regularize $ \gcrm $ outside $ [-\eta,\eta] $ to make it an admissible test function for $ \gvf $: we define
		\[
			g _\mathrm{trial}(y):
			= c_{\eps} 
			\begin{cases}
				\gvbm(y), & |y|\le\eta, \\
				r_{1}\lf(y\ri), & \eta\le y\le2\eta, \\
				r_{2}\lf(y\ri), & -2\eta\le y\le\eta, \\
				0, & |y|\ge2\eta,
			\end{cases}
		\]
		with $ r_{1,2} $ positive smooth functions chosen in such a way that $ \gtrial $ is at least $ C^2 $. We also assume that both functions $ r_{1,2} $ are also monotonically decreasing. The normalization constant $ c_{\eps}$, which ensures that $ \lf\| \gtrial \ri\| _{L^2(\R)} = 1 $, can be easily estimated: assuming that $ \eta_0 > 2 $, we have
		\beq
			c_{\eps} = 1 + \OO(\eps^4),
		\eeq
		since, e.g., 
		\bdm
			\int_{\eta}^{2\eta} \diff y \: r_1^2(y) \leq \eta \gcrm^2(\eta) = \OO(\eta\eps^{2\eta_0}).
		\edm 
		Notice also that we need to use \eqref{eq: gcrm almost symm} to reconstruct the norm of $ \gcrm $:
		\bdm
			\int_{-\eta}^{\eta} \diff y \: \gtrial^2 = c_{\eps}^2 \int_{-\eta}^{\eta} \diff y \: (1 + \eps^2 y) \: \gcrm^2 + c_{\eps}^2 \eps^2 \int_{-\eta}^{\eta} \diff y \: y \: \gcrm^2 = c_{\eps}^2 + \OO(\eps^4).
		\edm
		Now we estimate
		\[
			\gve
			\le\gvf[\gtrial]
			=c_{\eps} ^2
			\int _{-\eta} ^\eta
			\diff y\
			\lf\{
				\tx\frac{1}{2} (
				\gvbm '
			) ^2
			+\tx\frac{\alpha ^2}{2} y ^2 \gvbm ^2
			+\tx\frac{1}{2\pi} \gvbm ^4
			\ri\}
			+\OO(\eps^4).
		\]
		Thanks to \eqref{eq: gcrm almost symm} we can easily estimate the error we make by replacing $ \frac{\alpha ^2}{2} y ^2 $ with $ U _\beta + \eps ^2 y ^3 v(y) $: denoting for short $ \langle f \rangle : = \mean{\gcrm}{f}{\gcrm} $, we have
		\bmln{
			\lf\langle
			\gcrm\lf|
				\tx\frac{\alpha ^2}{2}y ^2
				-U_\beta(y)
				-\eps ^2 y ^3v(y)
			\ri|\gcrm
			\ri\rangle_{L^2(-\eta,\eta)}		\\
			=\lf\langle
			\gcrm\lf|
				\tx\frac{\alpha ^2 \eps ^2 y ^3\lf(
					2
					+\eps ^2 y
				\ri)}{2\xinyp ^2}
				+\frac{\eps ^2 \betas}{\xinyp ^2}\lf(
					2\Om _0 y
					+\Om _0 \eps ^2 y
					-\frac{1}{2}\eps ^2 \betas
				\ri)
				-\eps ^2 y ^3 v(y)
			\ri|\gcrm
			\ri\rangle			
			=\bigO{
			\eps ^4
			+\eps ^2 \lf\langle	y ^3 \ri\rangle 
		},}
		so that
		\beq
			\label{eq: gve ub}
			\gve
			\le 
			\gcre
			+\bigO{
			\eps ^4
			+\eps ^2 \lf\langle	y ^3 \ri\rangle 
		},
		\eeq
		where we have also used \eqref{eq: en preliminary est}, which in turn requires $ \betas = \OO(\eps^{-2}) $.
	
		The trial state for the functional $ \gvf_{\betas}$ is simply the truncation of $ \gvm $, i.e., $ c_{\eps} \gvm $, where now the normalization factor can be estimated in this case as
		\beq
			c_{\eps} = 1 + \OO(\eps^\infty),
		\eeq
		since by symmetry of $ \gvm $
		\bdm
			\int_{-\eta}^{\eta} \diff y \: (1 + \eps^2 y) \: \gvm^2 = 1 - 2 \int_{\eta}^{\infty} \diff y \: \gvm^2 = 1 + \OO(\eps^{\infty}),
		\edm 
		where we have used the pointwise estimate \eqref{eq: estimate on gvm} on $ \gvm $ and the fact that the integral of a gaussian, i.e., the error function, is bounded by the value of the gaussian at the boundary, i.e., $ \exp\{- c |\log\eps|^2 \} = \OO(\eps^{\infty}) $ (see, e.g., \cite[Eq. (5.1.19)]{AS}). Then we have
		\bml{
			\label{eq: gcre ub}
			\gcre
			\le \gvf_{\betas}[c_{\eps} \gvm]
			= (1 + \OO(\eps^{\infty})) \int _{-\eta} ^\eta
			\dey
			\lf\{
			\tx\frac{1}{2} (\gvm^{\prime})^2+U _\betas(y) \gvm^2
			+\eps ^2 y ^3 v(y) \gvm^2
			+\frac{1}{2\pi}\gvm^4
			\ri\}
			\\
			=
			\int _{-\eta} ^\eta
			\diff y 
			\lf\{
			\tx\frac{1}{2} (\gvm^{\prime})^2+ \tx\frac{\alpha^2}{2} \gvm^2
			+\frac{1}{2\pi}\gvm^4
			\ri\}
			+ \OO(\eps^4)
			= \gve
			+ \OO(\eps^4).
		}
		Putting together \eqref{eq: gve ub} with \eqref{eq: gcre ub}, we obtain
		\beq
			\label{eq: en diff est 1}
			\gcre
			= \gve
			+\bigO{
			\eps ^4
			+\eps ^2 \lf\langle	y ^3 \ri\rangle 
			}.
		\eeq
		
		Now we decouple the energy $ \gve $: first we bound from below $ \gve $ as
		\bdm
			\gve \geq \int_{-\eta}^{\eta} \diff y \: (1 + \eps^2 y) \lf\{ \tx\frac{1}{2} (\gvm^{\prime})^2+ \tx\frac{\alpha^2}{2} y^2 \gvm^2
			+\frac{1}{2\pi}\gvm^4 \ri\},
		\edm
		where we have just dropped some positive quantities and used the symmetry of $ \gvm $. Then we set $ \gvm = u \gcrm $ for some unknown smooth function $ u $ (recall that $ \gcrm $ never vanishes in $ [-\eta,\eta] $) and using the variational equation for $ \gcrm $ as well as Neumann boundary conditions, we obtain
		\bmln{
			\gve \geq \gcre + \int_{-\eta}^{\eta} \diff y \: (1 + \eps^2 y) \: \gcrm^2 \lf\{\tx\frac{1}{2} (u')^2 + \lf( \frac{\alpha^2}{2} y^2 - U_{\betas}(y) - \eps^2 y^3 v(y) \ri) u^2 + \frac{1}{2\pi} \gcrm^2 \lf( 1 - u^2\ri)^2 \ri\} 	\\
		- \tx\frac{1}{2}\eps^2 \disp\int_{-\eta}^{\eta} \diff y \: u^2 \gcrm \gcrm' + \OO(\eps^{\infty}).
		}
		Then we estimate
		\bmln{
			 \lf|\int_{-\eta}^{\eta} \diff y \: (1 + \eps^2 y) \: \lf( \tx\frac{\alpha^2}{2} y^2 - U_{\betas}(y) - \eps^2 y^3 v(y) \ri) \gcrm^2 u^2 \ri| \leq C \eps^2 \int_{-\eta}^{\eta} \diff y \: |y| \gcrm^2 \lf|1 - u^2 \ri| \\
			 +	 \lf| \int_{-\eta}^{\eta} \diff y \: (1 + \eps^2 y) \: \lf( \tx\frac{\alpha^2}{2} y^2 - U_{\betas}(y) - \eps^2 y^3 v(y) \ri) \gcrm^2 \ri| \\
			 \leq C \eps^2 \eta^{3/2} \lf\| \gcrm^2(1 - u^2) \ri\|_{L^2_{\eta}} +  \OO\lf(\eps^2 \lf\langle y^3 \ri\rangle +\eps^4\ri),
		}
		and by \eqref{eq: derivative gcrm} (notice that the factor $1 + \eps^2 y $ is uniformly bounded from above and below by a constant)
		\bmln{
			\lf| \disp\int_{-\eta}^{\eta} \diff y \: u^2 \gcrm \gcrm' \ri| = \disp\int_{-\eta}^{\eta} \diff y \: \lf|1 - u^2 \ri| |\gcrm| |\gcrm'| + \OO(\eps^4) \leq C \eta^3 \disp\int_{-\eta}^{\eta} \diff y \: \lf|1 - u^2 \ri| \gcrm^2 + \OO(\eps^4)	\\
			 \leq C \eta^{7/2} \lf\| \gcrm^2(1 - u^2) \ri\|_{L^2_{\eta}} + \OO(\eps^4),
		}
		which imply by dropping the kinetic term
		\bml{
			\label{eq: en diff est 2}
			\gve \geq \gcre + \tx\frac{1}{2} \lf\| \gcrm u^{\prime} \ri\|_{L^2_{\eta}}^2 + \tx\frac{1}{\pi} \lf\| \gcrm^2(1 - u^2) \ri\|^2_{L^2_{\eta}}- C \eps^2 \eta^{7/2} \lf\| \gcrm^2(1 - u^2) \ri\|_{L^2_{\eta}} +  \OO\lf(\eps^2 \lf\langle y^3 \ri\rangle +\eps^4\ri)	\\
			\geq \gcre + \tx\frac{1}{\pi} \lf(\lf\| \gcrm^2(1 - u^2) \ri\|_{L^2_{\eta}}- C \eps^2 \eta^{7/2} \ri)^2 +  \OO\lf(\eps^2 \lf\langle y^3 \ri\rangle +\eps^4 \eta^7\ri).
		}
		If we compare what we have obtained with \eqref{eq: en diff est 1}, we conclude that
		\beq
			\lf\| \gcrm^2 - \gvm^2\ri\|^2_{L^2_{\eta}} = \lf\| \gcrm^2(1 - u^2) \ri\|_{L^2_{\eta}}^2 =  \OO\lf(\eps^2 \lf\langle y^3 \ri\rangle +\eps^4 \eta^7\ri),
		\eeq
		but on the other hand
		\beq
			\label{eq: y3 proof}
			\lf| \int_{-\eta}^{\eta} \diff y \: y^3 \: \gcrm^2 \ri| = \lf| \int_{-\eta}^{\eta} \diff y \: y^3 \: \lf( \gcrm^2 - \gvm^2 \ri) \ri| \leq C \eta^{7/2} \lf\| \gcrm^2 - \gvm^2\ri\|_{L^2_{\eta}},
		\eeq
		so that finally
		\bdm
			\lf\| \gcrm^2 - \gvm^2\ri\|^2_{L^2_{\eta}} = \OO\lf(\eps^4 \eta^7\ri).
		\edm
		This proves the second inequality in \eqref{eq: gv energy diff} but the first is obtained by replacing the above estimate into \eqref{eq: en diff est 1}.
	\end{proof}
	
	\nt
	It is interesting to remark that a by-product of the proof of Proposition \ref{pro: gv energy diff} is that (see \eqref{eq: y3 proof})
	\beq
		\label{eq: y3}
		\mean{\gcrm}{y^3}{\gcrm} = \OO(\eps^2 \eta^7),
	\eeq
	which in combination with \eqref{eq: gcrm almost symm} is a strong indication of $ \gcrm $ being symmetric w.r.t. the origin with a very high precision. In fact this is also made apparent by the estimate of the difference $ \gcrm - \gvm $.

	The $L^2$-statement in \eqref{eq: gv energy diff} can indeed be improved to an $ L^{\infty}$-one, showing that $ \gcrm $ and $ \gvm $ are pointwise close. The price to pay to have a result in a stronger norm is the restriction of the region under consideration to the annulus $ \annt \subset \ann $ defined as 
	\beq
		\annt :
		=\lf\{
			y \in \mathbb{R}\;\big|\;
			\gvm(y) \geq
			 \tx\frac{1}{\eta ^\nu}
		\ri\},
	\eeq	
	for some $ \nu > 0 $ independent of $ \eps $. Note that thanks to the pointwise estimates \eqref{eq: estimate on gvm} and the monotonicity of $ \gvm $, 
	\beq
		\annt = [-y_{\eta},y_{\eta}],	\qquad		\mbox{with } y_{\eta} \gg 1.
	\eeq
	
	\begin{pro}[Pointwise estimate of $ \gcrm - \gvm $]
		\label{pro: infty est diff gvm}	
		\mbox{}	\\
		As $ \eps \to 0 $ and for any $ \nu > 0 $
		\beq
			\label{eq: infty est diff gvm}
			\lf\|
				\gcrm
				-\gvm
				\ri\| _{L ^\infty \lf(\annt\ri)}
			=\bigO{\eps ^2 \eta ^{\frac{7+4\nu}{2}}}.
		\eeq
	\end{pro}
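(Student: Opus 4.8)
The plan is to upgrade the $L^2$ closeness of $\gcrm^2$ and $\gvm^2$ proven in Proposition \ref{pro: gv energy diff} to a pointwise bound on $w := \gcrm - \gvm$ by a one-dimensional elliptic regularity argument. The first observation is that on $\annt$ one has $\gcrm + \gvm \geq \gvm \geq \eta^{-\nu}$, so the factorization $\gcrm^2 - \gvm^2 = (\gcrm + \gvm)\,w$ together with the second estimate in \eqref{eq: gv energy diff} gives
\[
	\lf\| w \ri\|_{L^2(\annt)} \leq \eta^{\nu}\, \lf\| \gcrm^2 - \gvm^2 \ri\|_{L^2_\eta} = \OO\lf( \eps^2 \eta^{7/2 + \nu} \ri).
\]
This is the quantitative input; the extra powers of $\eta$ in \eqref{eq: infty est diff gvm} will be produced by the passage from $L^2$ to $L^\infty$.

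Next I would derive the linear equation solved by $w$ by subtracting the variational equations \eqref{eq: gvbm variational} (for $\gcrm = g_{\betas}$) and \eqref{eq: gvm variational} (for $\gvm$). Collecting terms and using $\gcrm = \gvm + w$ one obtains
\[
	-\tx\frac{1}{2} w'' + P(y)\, w = S(y), \qquad P := \tx\frac{\alpha^2}{2} y^2 + \tx\frac{1}{\pi}\lf( \gcrm^2 + \gcrm \gvm + \gvm^2 \ri) - \gvchem,
\]
with a source $S$ gathering the genuinely small terms: the first-order contribution $\frac{\eps^2}{2(1+\eps^2 y)}\gcrm'$, the potential mismatch $\lf(\frac{\alpha^2}{2}y^2 - U_{\betas}(y) - \eps^2 y^3 v(y)\ri)\gcrm$, and the chemical-potential mismatch $(\gcrchem - \gvchem)\gcrm$. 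Using the Taylor bounds \eqref{eq: taylor U} and \eqref{eq: taylor v} (with $\betas = \OO(1)$), the derivative estimate \eqref{eq: derivative gcrm}, the closeness $\gcrchem - \gvchem = \OO(\eps^4\eta^7)$ from \eqref{eq: gv energy diff}, and the Gaussian decay \eqref{eq: estimate on gvm} and \eqref{eq: better estimate on gvbm}, one checks that $\lf\| S \ri\|_{L^2(\annt)} = \OO(\eps^2 \eta^{3})$, while $\lf\| P \ri\|_{L^\infty(\annt)} = \OO(\eta^2)$ since $|y| \lesssim \eta$ on $\annt$.

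To convert the $L^2$ bound into the pointwise estimate I would run an interior (Caccioppoli) estimate rather than integrate directly on $\annt$: fixing a marginally larger layer $\annt{'} = \{ \gvm \geq \eta^{-\nu'} \}$ with $\nu' > \nu$ and a cut-off $\chi$ equal to $1$ on $\annt$ and supported in $\annt{'}$, multiplying the equation by $\chi^2 w$ and integrating by parts annihilates all boundary contributions and controls $\|w'\|_{L^2(\annt)}$ in terms of $\|P\|_{L^\infty(\annt{'})}\|w\|_{L^2(\annt{'})}^2$ and $\|S\|_{L^2}\|w\|_{L^2}$. The pointwise bound then follows from the one-dimensional Sobolev inequality $\|w\|_{L^\infty(\annt)}^2 \lesssim |\annt|^{-1}\|w\|_{L^2(\annt)}^2 + \|w\|_{L^2(\annt)}\|w'\|_{L^2(\annt)}$. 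Passing to the slightly larger $\annt{'}$ is harmless since $\nu$ is arbitrary, and it is precisely what lets us bypass the boundary terms at $\partial\annt$, which otherwise carry no power of $\eps$ (because the super/subsolution bounds \eqref{eq: better estimate on gvbm} only pin down $\gcrm$ up to a square-root gap there).

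The main obstacle is the bookkeeping of the powers of $\eta$. The coefficient $P$ grows quadratically, $\|P\|_{L^\infty(\annt)} \sim \eta^2$, so the crude Caccioppoli bound $\|w'\|_{L^2}^2 \lesssim \eta^2\|w\|_{L^2}^2 + \|S\|_{L^2}\|w\|_{L^2}$ overestimates the derivative and lands on a suboptimal exponent for small $\nu$. Obtaining precisely $(7+4\nu)/2$ requires exploiting that $P \sim y^2$ is large only where the weighted difference is correspondingly small, i.e.\ replacing the crude factor $\|P\|_\infty$ by a weighted estimate that re-expresses $\int y^2 w^2$ through the original $L^2$-control of $\gcrm^2 - \gvm^2$ and its Gaussian tail. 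Once this weighted derivative bound is in place, the Sobolev inequality yields the stated power, the remaining manipulations being the routine estimates already used throughout Section \ref{sec: preliminary}.
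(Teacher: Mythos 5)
Your route --- subtract the two variational equations to get a linear ODE for $w=\gcrm-\gvm$, then Caccioppoli plus one-dimensional Sobolev --- is genuinely different from the paper's, but as written it does not prove the statement: by your own accounting the crude bound $\|P\|_{L^\infty}\sim\eta^2$ only yields $\OO(\eps^2\eta^{4+\nu})$, which is weaker than the claimed $\OO(\eps^2\eta^{(7+4\nu)/2})$ whenever $\nu<1/2$, and the ``weighted estimate'' you invoke to recover the right exponent is never carried out. Since producing that exponent is the entire content of the proposition, this is a genuine gap. It is, however, a repairable one, and more easily than you suggest: by the Gaussian bounds \eqref{eq: estimate on gvm} the set $\annt$ (and any slightly enlarged $\{\gvm\geq\eta^{-\nu'}\}$ with $\nu<\nu'<2\nu$) is contained in $\{|y|^2\leq C\log\eta\}$, so $\|P\|_{L^\infty}=\OO(\log\eta)$, not $\OO(\eta^2)$; with that single observation your chain closes, giving $\|w\|_{L^\infty(\annt)}=\OO(\eps^2\eta^{7/2+\nu'}(\log\eta)^{1/4})$, which beats the stated bound. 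A second, smaller inaccuracy: $\gcrchem-\gvchem$ is not $\OO(\eps^4\eta^7)$; \eqref{eq: gv energy diff} controls the \emph{energies}, and the difference of the quartic terms is only $\OO(\eps^2\eta^{7/2})$ by Cauchy--Schwarz. This worse bound still feeds harmlessly into your source term $S$, but the claim as you state it is unjustified.

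The paper's proof is shorter and never writes an equation for $w$. It goes back to the decoupling \eqref{eq: en diff est 2} in the proof of Proposition \ref{pro: gv energy diff}, where $\gvm=u\,\gcrm$, and, instead of discarding the kinetic term, keeps it to read off $\|\gcrm u'\|^2_{L^2_\eta}=\OO(\eps^4\eta^7)$ from the already-established energy comparison. On $\annh=\{\gcrm\geq\tfrac12\eta^{-\nu}\}\supset\annt$ this gives $\|u'\|^2_{L^2}=\OO(\eps^4\eta^{7+2\nu})$, while $\|1-u\|^2_{L^2}=\OO(\eps^4\eta^{7+4\nu})$ follows from $|1-u|\leq|1-u^2|$ and the $L^2$ control of $\gcrm^2(1-u^2)$; the one-dimensional Sobolev inequality then yields $\|1-u\|_{L^\infty}=\OO(\eps^2\eta^{(7+4\nu)/2})$ and hence the claim, since $\gcrm$ is uniformly bounded. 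Note the asymmetric exponents ($7+2\nu$ for the derivative versus $7+4\nu$ for the function): this is exactly the structure your uniform-in-$P$ Caccioppoli bound obscures, and it is why the paper's bookkeeping comes out cleanly while your version needs the extra logarithmic input on the size of $\annt$.
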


	\begin{proof}
		Going back to \eqref{eq: en diff est 2} and retaining the kinetic term, we see that we obtain via \eqref{eq: gv energy diff} the upper bound
		\beq
			\lf\| \gcrm u^{\prime} \ri\|_{L^2_{\eta}}^2 = \OO(\eps^4 \eta^7),
		\eeq
		where we recall that $ u = \gvm/\gcrm $. Now let us introduce the set
		\bdm
			\annh : = \lf\{
			y \in \mathbb{R}\;\big|\;
			\gcrm(y) \geq
			 \tx\frac{1}{2\eta ^\nu}
		\ri\},
		\edm
		so that above inequality  together with \eqref{eq: gv energy diff} and the bound $ |1 - u| \leq |1 - u^2| $ imply
		\bdm
			\lf\| u^{\prime} \ri\|_{L^2(\annh)}^2 = \OO(\eps^4 \eta^{7+2\nu}),	\qquad \lf\| 1 - u \ri\|^2_{L^2(\annh)} = \OO(\eps^4 \eta^{7+4\nu}).
		\edm
		Then it suffices to use Sobolev inequality in one-dimension: 
		\beq
			\lf\| 1 - u \ri\|_{L^{\infty}(\annh)}^2 \leq C \lf( \lf\| u' \ri\|_{L^{2}(\annh)}^2 + \lf\| 1 - u \ri\|_{L^{2}(\annh)}^2 \ri) = \OO\lf(\eps^4 \eta^{7+4\nu} \ri). 
		\eeq
		Finally to obtain the result it remains to observe that $ \annt \subset \annh $, because in the region where $ \gcrm \geq 1/(2\eta^\nu) $, by the pointwise estimate, $ \gvm $ is larger than $ (1 + o(1))/(2\eta^\nu) $, which is obviously satisfied if $ \gvm \geq 1/\eta^\nu  $.
	\end{proof}
	
	\nt 
	The above bound shows that inside $ \annt $ one can estimate the distance of $ \gcrm $ from a perfectly even function: for any $ y \in \annt $
	\bdm
		\gcrm(-y)
		= \gcrm(y)
		+\bigO{\eps ^2 \eta ^{\frac{7+4\nu}{2}}},
	\edm
	which is perfectly compatible with the estimates \eqref{eq: gcrm almost symm} and \eqref{eq: y3}.

	Another useful consequence of the above pointwise statement is the following
	
	\begin{cor}[Maximum point of $ \gcrm $]
		\label{cor: max gcrm}
		\mbox{}	\\
		Let $ y_{\betas} $ be the unique maximum point of $ \gcrm $, then as $ \eps \to 0 $
		\beq
			y_{\betas} = \bigO{\eps ^2 \eta ^{\frac{7+4\nu}{2}}}.
		\eeq
	\end{cor}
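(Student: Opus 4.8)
The plan is to transfer the pointwise smallness of $\gcrm-\gvm$ on $\annt$ (Proposition \ref{pro: infty est diff gvm}) into an estimate on the \emph{location} of the maximum, exploiting that $\gvm$ has a strict, nondegenerate maximum at the origin. Throughout I write $w:=\gcrm-\gvm$ and $\delta:=\eps^2\eta^{\frac{7+4\nu}{2}}$, so that Proposition \ref{pro: infty est diff gvm} reads $\|w\|_{L^\infty(\annt)}=\OO(\delta)$, and I recall that $\annt=[-y_\eta,y_\eta]$ with $y_\eta\gg 1$ but $\eps^2 y_\eta^k\to 0$ for every fixed $k$.

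First I would check that $y_{\betas}\in\annt$ and is in fact squeezed towards $0$. Since $0\in\annt$ for $\eps$ small ($\gvm(0)=\|\gvm\|_\infty$ is a positive constant and $\eta\to\infty$), the pointwise bound gives $\gcrm(0)=\gvm(0)+\OO(\delta)\ge\tfrac12\gvm(0)>0$, whence $\|\gcrm\|_\infty=\gcrm(y_{\betas})\ge\gcrm(0)$ is bounded below by a constant. On the other hand, at the endpoints $\gcrm(\pm y_\eta)=\gvm(\pm y_\eta)+\OO(\delta)=\eta^{-\nu}+\OO(\delta)\to 0$; since $\gcrm$ is monotone on either side of its unique maximum (Proposition \ref{pro: min gvbf}), a maximum located outside $[-y_\eta,y_\eta]$ would force $\gcrm(\pm y_\eta)\ge\gcrm(0)$, a contradiction. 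Hence $y_{\betas}\in\annt$, and then $\gcrm(y_{\betas})\ge\gcrm(0)$ together with the pointwise bound yields $\gvm(0)-\gvm(y_{\betas})=\OO(\delta)$. Because $\gvm$ is even with a strict maximum at $0$ and $\gvm''(0)<0$ (this follows from \eqref{eq: gvm variational}: if $\gvm''(0)=0$ then $\gvm^2(0)=\pi\gvchem$ and differentiating the equation twice gives $\gvm''''(0)=2\alpha^2\gvm(0)>0$, so the origin would be a local minimum, contradicting \eqref{eq: gvm infty bound}), there is a fixed neighbourhood of $0$ on which $\gvm(0)-\gvm(y)\ge c\,y^2$; consequently $y_{\betas}=\OO(\sqrt\delta)\to 0$, so for small $\eps$ the point $y_{\betas}$ lies in this neighbourhood, where also $|\gvm'(y)|\ge c'|y|$.

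The heart of the proof is then a derivative estimate $|\gvm'(y_{\betas})|=\OO(\delta)$. Since $\gcrm'(y_{\betas})=0$ and $\gvm'(y_{\betas})=-w'(y_{\betas})$, it suffices to bound $\|w'\|_{L^\infty(\annt)}$. Subtracting the variational equations \eqref{eq: gvbm variational} (with $\beta=\betas$) and \eqref{eq: gvm variational} I would write $w$ as the solution of a linear equation $w''=q\,w+S$ on $\annt$, where $q=\alpha^2 y^2+\tfrac{2}{\pi}(\gcrm^2+\gcrm\gvm+\gvm^2)-2\gcrchem$ is $\OO(1)$ on $\annt$, and the source
\[
	S=-\tfrac{\eps^2}{1+\eps^2 y}\gcrm'+2\eps^2 y^3 v\,\gcrm+2\big(U_{\betas}-\tfrac{\alpha^2}{2}y^2\big)\gcrm+2(\gvchem-\gcrchem)\gvm
\]
collects the genuinely lower-order terms. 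Each of these is controlled on $\annt$ by the results already at hand: $|\gcrm'|\le C\eta^3\gcrm$ by \eqref{eq: derivative gcrm}, $U_{\betas}-\tfrac{\alpha^2}{2}y^2=\OO(\eps^2\eta)$ by \eqref{eq: taylor U}, and $\gvchem-\gcrchem=\OO(\eps^2\eta^{7/2})$ from $\gcre-\gve=\OO(\eps^4\eta^7)$ together with the $L^2$ closeness of the densities in Proposition \ref{pro: gv energy diff}. All contributions are $o(\delta)$, so $\|S\|_{L^\infty(\annt)}=o(\delta)$ and hence $\|w''\|_{L^\infty(\annt)}=\OO(\delta)$. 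A Landau--Kolmogorov interpolation inequality on the interval $\annt$, whose length $2y_\eta$ grows, then gives $\|w'\|_{L^\infty(\annt)}\le C\big(\|w\|_\infty\|w''\|_\infty\big)^{1/2}=\OO(\delta)$.

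Combining the two steps yields $|\gvm'(y_{\betas})|=|w'(y_{\betas})|=\OO(\delta)$, and the lower bound $|\gvm'(y)|\ge c'|y_{\betas}|$ valid in the neighbourhood of $0$ containing $y_{\betas}$ finally gives $y_{\betas}=\OO(\delta)=\OO(\eps^2\eta^{\frac{7+4\nu}{2}})$, as claimed. I expect the main obstacle to be the derivative estimate of the previous paragraph: one must verify that every term in the source $S$---in particular the chemical-potential difference $\gvchem-\gcrchem$ and the curvature term $U_{\betas}-\tfrac{\alpha^2}{2}y^2$---is genuinely of order $o(\delta)$ uniformly on the growing interval $\annt$, and that the interpolation inequality can be applied with an $\eps$-independent constant despite the interval length $y_\eta$ itself depending on $\eps$. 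Establishing the nondegeneracy $\gvm''(0)<0$ is elementary but is precisely what legitimates the inversion from $\gvm'$ to the position $y_{\betas}$, rather than leaving us with the weaker bound $\OO(\sqrt\delta)$.
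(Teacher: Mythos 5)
Your proof is correct, and it is considerably more substantial than the paper's, which disposes of this corollary in a single sentence (``a straightforward consequence of the pointwise estimate \eqref{eq: infty est diff gvm} and the properties of $\gvm$''). You have in fact identified the real issue: writing $\delta:=\eps^2\eta^{(7+4\nu)/2}$, the $L^\infty$ bound $\|\gcrm-\gvm\|_{L^\infty(\annt)}=\OO(\delta)$ combined with the nondegeneracy of the maximum of $\gvm$ at the origin only yields $\gvm(0)-\gvm(y_{\betas})=\OO(\delta)$, hence $y_{\betas}=\OO(\sqrt{\delta})=\OO(\eps\,\eta^{(7+4\nu)/4})$, which is weaker than the stated rate (though the weaker bound, or even just $y_{\betas}=o(1)$, is all that is actually used later, in Propositions \ref{pro: alt expression F} and \ref{pro: K positive}). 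To recover the full $\OO(\delta)$ one must invert $\gvm'$ rather than $\gvm$, i.e.\ one needs a derivative estimate $|(\gcrm-\gvm)'(y_{\betas})|=\OO(\delta)$, and this does not follow from the $L^\infty$ estimate alone. Your route --- subtracting the two variational equations to get $w''=qw+S$, checking that every source term is $o(\delta)$ on $\annt$ (the $\gcrm'$ term via \eqref{eq: derivative gcrm}, the potential mismatch via \eqref{eq: taylor U}, and $\gcrchem-\gvchem=\OO(\eps^2\eta^{7/2})$ via Proposition \ref{pro: gv energy diff}), and interpolating to control $w'$ --- is a legitimate way to obtain it, and your verification of the nondegeneracy $\gvm''(0)<0$ from the variational equation is also correct. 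In short, you supply an argument that the paper's one-line justification arguably needs.

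Two small repairs. First, $q$ is not $\OO(1)$ on $\annt$: since $\gvm$ decays gaussianly, the endpoint $y_\eta$ of $\annt$ grows like $\sqrt{\log\eta}$, so $q=\alpha^2y^2+\dots=\OO(\log\eta)$ there, and similarly the term $\eps^2y^3v\gcrm$ is $\OO(\eps^2(\log\eta)^{3/2})$. This pollutes $\|w''\|_{L^\infty(\annt)}$ and hence $\|w'\|_{L^\infty(\annt)}$ by powers of $\log\eta=\log|\log\eps|$, so strictly your argument gives $y_{\betas}=\OO(\delta\,(\log\eta)^{1/2})$ rather than $\OO(\delta)$. This is harmless because the statement holds for every $\nu>0$: run the argument with some $\nu'<\nu$ and absorb the log-log factor into $\eta^{2(\nu-\nu')}$, but you should say so. Second, in the Landau--Kolmogorov step the constant must be $\eps$-independent even though $|\annt|$ depends on $\eps$; it suffices to apply the inequality on subintervals of fixed unit length inside $\annt$ (possible since $|\annt|\gg1$), which you essentially acknowledge and which causes no difficulty.
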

	
	\begin{proof}
		The result is a straightforward consequence of the pointwise estimate \eqref{eq: infty est diff gvm} and the properties of $ \gvm $ (see Proposition \ref{pro: min gvf}).	
	\end{proof}

\subsection{Critical Velocity and Positivity of the Cost Function}
\label{sec: positivity}

From now we fix the phase to be optimal one, i.e., $ \beta = \betas $. The {\it potential function} is defined as
\bml{
	\label{eq: potential f}
	F(y):
	=- \frac{1}{\eps^2} \int _{-\eta} ^y
	\diff t \: (1 + \eps^2 t)
	\lf. \partial _\beta U _\beta (t) \ri|_{\beta = \betas}
	\gcrm ^2	\\
	=2 \Omega_0 \int_{-\eta}^{y} \diff t \: \frac{1}{1 + \eps^2 t} \lf(t + \frac{1}{2} \eps^2 t^2 - \frac{\eps^2 \betas}{2 \Omega_0}  \ri) \gcrm^2.
}
The main object under investigation is the {\it cost function}
\beq
	\label{eq: cost f}
	K(y) : = \tx\frac{1}{2} \gcrm^2(y) + F(y),
\eeq
and our main goal in this Section is to prove that it is positive in the bulk of the condensate when $ \Omega_0 \geq \Omegac $. To this purpose we will clearly have to investigate the equation \eqref{eq: Omegac} and prove at least that there exists a positive solution to it. Notice the equation \eqref{eq: Omegac} involves only quantities relative to the limiting functional $ \gvf $ and is independent of $ \eps $. We thus introduce the analogue of \eqref{eq: cost f} for the limiting case, i.e., the function \eqref{eq: kgv}
\bdm
	\kgv = \half \gvm^2 + \fgv,
\edm
where $ \fgv $ is defined in \eqref{eq: fgv}:
\bdm
	\fgv(y) = - 2 \Omega_0 \int_y^{\infty} \diff t \: t \: \gvm^2(t).
\edm

We will start by studying the positivity of \eqref{eq: kgv} and show that the condition $ \Omega_0 > \Omegac $, where the latter is defined as the biggest solution to \eqref{eq: Omegac}, is sufficient to deduce that $ \kgv(y) \geq 0 $ for any $ y \in \R $. In the second part of the Section we will turn our attention to the cost function \eqref{eq: cost f} and prove that the same condition on $ \Omega_0 $ guarantees positivity of $ K $ as well.

We first observe that $ \fgv $ is a negative function vanishing at $ \pm \infty $: at $ y = +\infty $ it is obvious, at $ -\infty $ it is a consequence of parity of $ \gvm $. By this property one can rewrite
\bdm
	\fgv(y) = 2 \Omega_0 \int_{-\infty}^y \diff t\: t \: \gvm^2(t).
\edm
In fact there is another explicit expression of $ \fgv $, which can be obtained by using the variational equation \eqref{eq: gvm variational}:
\bml{
	\label{eq: fgv alternative}
	\fgv(y) =  - \Omega_0 \int_y^{\infty} \diff t \: \partial_t (t^2) \: \gvm^2 = \Omega_0 y^2 \gvm^2(y) + 2 \Omega_0 \int_y^{\infty} \diff t \: t^2 \: \gvm \gvm^{\prime}	\\
	 = \Omega_0 y^2 \gvm^2(y) + \frac{4 \Omega_0}{\alpha^2} \int_y^{\infty} \diff t \: \gvm^{\prime} \lf[ \tx\frac{1}{2} \gvm^{\prime\prime} - \frac{1}{\pi} \gvm^3 + \gvchem \gvm \ri] 	\\
	 =
	- \tx\frac{1}{\Omega_0(s+2)} \lf( \gvm^{\prime}(y) \ri)^2 + \lf[ \Omega_0 y^2 + \frac{1}{\pi \Omega_0 (s+2)} \gvm^2(y) - \frac{2\gvchem}{\Omega_0(s+2)} \ri] \gvm^2(y),
	}
where we have used the exponential decay at $ \infty $ of $ \gvm $ to cancel the missing boundary terms. Consequently we can rewrite $ \kgv $ as
\beq
	\label{eq: kgv alternative}
	\kgv(y) = - \tx\frac{1}{\Omega_0(s+2)} \lf( \gvm^{\prime}(y) \ri)^2 + \lf[ \tx\frac{1}{2} + \Omega_0 y^2 + \frac{1}{\pi \Omega_0 (s+2)} \gvm^2(y) - \frac{2\gvchem}{\Omega_0(s+2)} \ri] \gvm^2(y).
\eeq
The main result about $ \kgv $ is the following

	\begin{pro}[Positivity of $ \kgv $]
		\label{pro: kgv positive}
		\mbox{}	\\
		Let $ \Omega_0 > 0 $, then
		\beq
			\label{eq: kgv positive}
			\kgv(y) \geq 0 \mbox{ for any } y \in \R 	\Longleftrightarrow \Omega_0 \geq \tx\frac{4}{s+2} \lf[ \gvchem - \frac{1}{2\pi} \gvm^2(0) \ri].
		\eeq
		Moreover if the strict inequality is verified on the r.h.s., $ \kgv(y) > 0 $ for any $ y $ finite.
	\end{pro}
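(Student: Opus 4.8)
The plan is to reduce the global positivity of $\kgv$ to its positivity at the single point $y=0$, exploiting that $\kgv$ is even (since $\gvm$ is even and, as noted in the discussion of \eqref{eq: fgv}, $\fgv$ is even, nonpositive, and vanishes at $\pm\infty$), smooth, and satisfies $\kgv(\pm\infty)=0$. First I would record that the threshold in \eqref{eq: kgv positive} is nothing but the sign condition $\kgv(0)\geq 0$: evaluating the alternative expression \eqref{eq: kgv alternative} at $y=0$, where $\gvm'(0)=0$ by Proposition \ref{pro: min gvf}, the $(\gvm')^2$ term drops and
\[
\kgv(0)=\Big[\tfrac12+\tfrac{1}{\pi\Omega_0(s+2)}\gvm^2(0)-\tfrac{2\gvchem}{\Omega_0(s+2)}\Big]\gvm^2(0),
\]
whose bracket is nonnegative precisely when $\Omega_0\geq\frac{4}{s+2}\big[\gvchem-\frac{1}{2\pi}\gvm^2(0)\big]$. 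This already gives the implication $\Rightarrow$ for free, since $\kgv\geq 0$ everywhere forces $\kgv(0)\geq 0$. Hence the whole content is the converse: assuming $\kgv(0)\geq 0$, prove $\kgv\geq 0$ on all of $\R$.

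For the converse I would argue by contradiction. If $\kgv$ were negative somewhere, then by $\kgv(\pm\infty)=0$ and continuity it attains a negative global minimum at a finite $y_0$, and by parity together with $\kgv(0)\geq 0$ we may take $y_0>0$. There $\kgv'(y_0)=0$, and since $\gvm>0$ this is exactly the critical-point relation \eqref{eq: critical point}, $\gvm'(y_0)=-2\Omega_0 y_0\gvm(y_0)$. Feeding $(\gvm'(y_0))^2=4\Omega_0^2y_0^2\gvm^2(y_0)$ into \eqref{eq: kgv alternative} cancels the negative kinetic contribution and collapses the cost function to
\[
\kgv(y_0)=\Big[\tfrac12+\tfrac{\Omega_0(s-2)}{s+2}y_0^2+\tfrac{1}{\pi\Omega_0(s+2)}\gvm^2(y_0)-\tfrac{2\gvchem}{\Omega_0(s+2)}\Big]\gvm^2(y_0),
\]
so that $\kgv(y_0)<0$ is equivalent to the bracket, call it $B(y_0)$, being negative. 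Since $B(0)$ coincides with the bracket in the expression for $\kgv(0)$ above, the task becomes: $B(0)\geq 0\Rightarrow B(y_0)\geq 0$.

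The extra relation I would bring in is the second-order information at the minimum. Differentiating $\kgv'=\gvm\,(\gvm'+2\Omega_0y\gvm)$ once more and eliminating $\gvm''$ through the variational equation \eqref{eq: gvm variational} (with $\alpha^2=\Omega_0^2(s+2)$) yields the clean identity $\kgv''=\psi^2+\gvm^2 G$, where $\psi:=\gvm'+2\Omega_0y\gvm$ and $G(y):=\Omega_0^2(s-2)y^2+\frac{2}{\pi}\gvm^2(y)-2\gvchem+2\Omega_0$. At the minimum $\psi(y_0)=0$ and $\kgv''(y_0)\geq 0$, hence $G(y_0)\geq 0$; combined with the algebraic identity $2\Omega_0(s+2)B(y)=2G(y)+\Omega_0(s-2)-\frac{2}{\pi}\gvm^2(y)$, the assumption $B(y_0)<0$ would force $\gvm^2(y_0)>\frac{\pi}{2}\Omega_0(s-2)$.

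The hard part, and the step I expect to be the main obstacle, is to turn these inequalities into an actual contradiction with $B(0)\geq 0$. Here the parity and the precise shape of $\gvm$ are essential: because $\gvm$ is maximal at the origin, $B(0)$ carries the largest value of the $\gvm^2$-term and the zero value of the $y^2$-term, and since $B(y_0)-B(0)=\frac{\Omega_0(s-2)}{s+2}y_0^2+\frac{1}{\pi\Omega_0(s+2)}[\gvm^2(y_0)-\gvm^2(0)]$ one must show \emph{quantitatively} that the positive gain $\frac{\Omega_0(s-2)}{s+2}y_0^2$ cannot be overwhelmed by the loss coming from $\gvm^2(y_0)-\gvm^2(0)\leq 0$; crude bounds (even with $G(y_0)\geq0$ and \eqref{eq: gvm infty bound}) do not close this gap. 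I would control it through the integrating-factor representation $\psi(y)=e^{\Omega_0y^2}\int_0^y e^{-\Omega_0t^2}\gvm(t)G(t)\,dt$, coming from the linear ODE $\psi'=2\Omega_0y\psi+\gvm G$: the vanishing $\psi(y_0)=0$ means $G$ has (weighted) mean zero on $(0,y_0)$, which constrains the sign history of $G$ there and, together with $\gvm^2(0)-\gvm^2(y_0)=2\int_0^{y_0}\gvm|\gvm'|$, ties $\gvm^2(y_0)$ and $y_0$ tightly enough to contradict $B(y_0)<0$ while $B(0)\geq 0$. Finally, the strict statement follows by rerunning the argument with strict inequalities: a zero of $\kgv$ at a finite $y_1$ would be a zero global minimum, again forcing $B(y_1)=0$, which the strict version of the comparison excludes when $B(0)>0$.
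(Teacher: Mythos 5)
Your preparatory steps coincide with the paper's: the identification of the threshold with $\kgv(0)\geq 0$, the critical-point relation $\gvm'(y_0)=-2\Omega_0 y_0\gvm(y_0)$, the resulting collapse $\kgv(y_0)=B(y_0)\,\gvm^2(y_0)$, and the identity $\kgv''=\psi^2+\gvm^2 G$ are all correct and are exactly the quantities the paper manipulates. The genuine gap is the step you yourself flag as the main obstacle: you use the second-derivative information only at the minimum ($G(y_0)\geq 0$) and then try to compare $B(y_0)$ directly with $B(0)$; as you observe, this cannot be closed by the bounds you have, and the integrating-factor/weighted-mean-zero argument you sketch in its place is not carried out (and it is far from clear it can be made to work). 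As it stands the converse implication, which is the whole content of the proposition, is not proven.

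The paper closes exactly this gap with one additional, elementary idea that your proposal lacks: compare the minimum $y_2$ not with the origin but with a preceding \emph{maximum} $y_1\in[0,y_2)$ of $\kgv$, and use the sign of $\kgv''$ at \emph{both} critical points. Since $\psi(y_{1,2})=0$, one has $G(y_1)\leq 0\leq G(y_2)$, hence $\Omega_0^2(s-2)(y_2^2-y_1^2)\geq\frac{2}{\pi}\big(\gvm^2(y_1)-\gvm^2(y_2)\big)$, and therefore
\bdm
	B(y_2)-B(y_1)=\frac{1}{\Omega_0(s+2)}\Big[\Omega_0^2(s-2)\big(y_2^2-y_1^2\big)-\tfrac{1}{\pi}\big(\gvm^2(y_1)-\gvm^2(y_2)\big)\Big]\geq\frac{\gvm^2(y_1)-\gvm^2(y_2)}{\pi\Omega_0(s+2)}\geq 0,
\edm
by monotonicity of $\gvm$ on $\R^+$. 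Since $y=0$ is itself a critical point of $\kgv$, either it serves as $y_1$, or it is a local minimum and the maximum of $\kgv$ on $[0,y_2]$ is attained at an interior critical point $y_1$ with $\kgv(y_1)\geq\kgv(0)\geq 0$; in either case $B(y_2)\geq B(y_1)=\kgv(y_1)/\gvm^2(y_1)\geq 0$, a contradiction. This is precisely the quantitative control of the trade-off between the $y^2$ gain and the $\gvm^2$ loss that you identified as missing, and the strict statement follows from the same comparison.
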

	
	{
	\begin{rem}[Comparison with \cite{CPRY3}]
		\label{rem: cost functions}
		\mbox{}	\\
		Despite the use of two different potential functions $ F_1 $ and $ F_2 $ in \cite{CPRY3}, one should realize that \cite[Lemma 3.3]{CPRY3} yields the poitwise positivity of a cost function, which is the analogue of $ \kgv $ \underline{in the asymptotic regime} $ \Omega_0 \gg 1 $. In fact it can be easily seen that the cost function in \cite{CPRY3} is bounded from below by $ \kgv $ and therefore the positivity of the latter implies the positivity of the first. Hence any threshold $ \bar\Omega_0 $ one might deduce there must be larger than $ \Omegac $ by definition.
	\end{rem}}
	
	\begin{proof}
		One side of the statement, i.e., the fact that the condition $ \kgv(0) \geq 0 $ is necessary for the positivity of $ \kgv $ everywhere, is obviously trivial, so we focus on the other side of the implication, namely that $ \kgv(0) \geq 0 $ is also sufficient.
		
		 The core of the proof is to show that
		 \beq
		 	\label{eq: condition kgv}
		 	\kgv(y) \geq 0 \mbox{ for any } y \in \R 	\Longleftrightarrow \kgv(0) \geq 0.
		\eeq
		Indeed if we assume that this double implication is true, a straightforward computation yields
		\beq
			\label{eq: kgv0}
			\kgv(0) = \lf[ \frac{1}{2} + \frac{1}{\pi \Omega_0 (s+2)} \gvm^2(0) - \frac{2\gvchem}{\Omega_0(s+2)} \ri] \gvm^2(0),
		\eeq
		since $ \gvm $ is symmetric w.r.t. the origin and has a maximum at $ y = 0  $ (see Proposition \ref{pro: min gvf}). The result is then a trivial consequence of strict positivity of $ \gvm(0) $.
		
		In order to prove \eqref{eq: condition kgv}, we first observe that $ \kgv(\pm \infty) = 0 $ and $ \kgv $ is smooth, so, if there was a point $ y_0 $ where $ \kgv $ becomes negative, it must be $ |y_0| < +\infty $. Moreover as $ \gvm $, $ \kgv $ is symmetric w.r.t. to the origin, so it suffices to consider $ y \in \R^+ $. The derivative of $ \kgv $ is easily computed from the expression \eqref{eq: kgv}:
		\beq
			{\kgv}^{\prime}(y) = \gvm \gvm^{\prime} + 2 \Omega_0 y \gvm^2,
		\eeq
		and one immediately has that $ {\kgv}^{\prime}(0) = 0 $, i.e., $ \kgv $ has a critical point there. Whether it is a minimum or a maximum depends on $ s $ and $ \Omega_0 $, but as we are going to see this does not matter. We can in any case compute easily the second derivative of $ \kgv $ exploiting once more \eqref{eq: gvm variational}:
		\beq
			\label{eq: kgv second derivative}
			{\kgv}^{\prime\prime}(y) = \lf(\gvm^{\prime}\ri)^2 + 4 \Omega_0 y \gvm \gvm^{\prime} + 2 \lf[ \tx\frac{1}{2} \alpha^2 y^2 + \Omega_0  - \gvchem+ \tx\frac{1}{\pi} \gvm^2 \ri] \gvm^2.
		\eeq

		Then we prove the crucial property of $ \kgv $: suppose that $ \kgv $ has a maximum at $ y_1  \geq 0 $ and then a minimum at $ y_2 > y_1 $, then
		\beq
			\label{eq: kgv crucial pro}
			\frac{\kgv(y_2)}{\gvm^2(y_2)} \geq \frac{\kgv(y_1)}{\gvm^2(y_1)},
		\eeq
		and in particular $ \kgv(y_2) \geq 0 $ if $ \kgv(y_1) \geq 0 $. 
		
		To conclude the argument once \eqref{eq: kgv crucial pro} is proven, it is sufficient to observe that $ \kgv $ has a critical point in $ y = 0 $, which by parity must be either a maximum or a minimum: if it is a maximum, then \eqref{eq: kgv crucial pro} shows that at any minimum point $ y_2 > 0 $, $ \kgv(y_2) \geq 0 $. Notice that it does not matter whether $ \kgv $ has a single or multiple minima, because any minimum after the first requires the presence of a preceding maximum point, where $ \kgv $ is larger than its first minimum and therefore positive. If on the opposite $ \kgv $ has a minimum at the origin, then it means that there must be a maximum at some $ y_1 > 0 $, where obviously $ \kgv(y_1) \geq \kgv(0) \geq 0 $ and we can repeat the argument for any minimum after $ y_1 $.
		
		Let us now prove \eqref{eq: kgv crucial pro}: we assume again that $ \kgv $ has a maximum in $ y_1 \geq 0 $ and a minimum in $ y_2 > y_1 $. Then it must be $ {\kgv}^{\prime}(y_{1,2}) = 0 $, i.e.,
		\beq
			\gvm^{\prime}(y_{1,2}) = - 2 \Omega_0 y_{1,2} \gvm^2(y_{1,2}).
		\eeq
		Moreover replacing this condition in \eqref{eq: kgv} and \eqref{eq: kgv second derivative}, we get
		\beq
			\label{eq: kgv min}
			\kgv(y_{1,2}) = \lf[ \Omega_0 \tx\frac{s-2}{s+2} y_{1,2}^2 + \frac{1}{2}  - \frac{2}{\Omega_0(s+2)} \lf( \gvchem - \frac{1}{2\pi} \gvm^2(y_{1,2}) \ri) \ri] \gvm^2(y_{1,2}),
		\eeq
		\beq
			\label{eq: kgv second derivative min}
			{\kgv}^{\prime\prime}(y_{1,2}) = \lf[ \Omega_0^2 (s-2) y_{1,2}^2 + 2\Omega_0 - 2 \gvchem + \tx\frac{2}{\pi} \gvm^2(y_{1,2}) \ri] \gvm^2(y_{1,2}).
		\eeq
		Moreover
		\beq
			{\kgv}^{\prime\prime}(y_1) \leq 0 \leq {\kgv}^{\prime\prime}(y_2),
		\eeq
		which implies
		\beq
			\Omega_0^2(s-2) y_1^2 + 2 \Omega_0 - 2 \mu + \tx\frac{2}{\pi} \gvm^2(y_{1}) \leq \Omega_0^2(s-2) y_2^2 + 2 \Omega_0 - 2 \mu + \tx\frac{2}{\pi} \gvm^2(y_{2}),
		\eeq
		and using this inequality in the expression of $ \kgv(y_2) $, we obtain
		\bml{
			\frac{\kgv(y_2)}{\gvm^2(y_2)} \geq \Omega_0 \frac{s-2}{s+2} y_{1}^2 - \frac{2}{\pi\Omega_0(s+2)} \lf( \gvm^2(y_2) - \gvm^2(y_1) \ri) + \frac{1}{2}  - \frac{2}{\Omega_0(s+2)} \lf( \gvchem - \frac{1}{2\pi} \gvm^2(y_{2}) \ri) \\
			= \frac{\kgv(y_1)}{\gvm^2(y_1)} +  \frac{1}{\pi\Omega_0(s+2)} \lf( \gvm^2(y_1) - \gvm^2(y_2) \ri) \geq \frac{\kgv(y_1)}{\gvm^2(y_1)},
		}
		because by hypothesis $ y_2 > y_1 $ and $ \gvm $ is decreasing.
		
		Notice that as a by-product of our analysis we found out that can have no global minima, since $ \inf_{y \in \R} \kgv(y) = 0 $ and therefore at any such minimum point $ y_0 $, we would have $ \kgv(y_0) = 0 $, but this clearly contradicts \eqref{eq: kgv crucial pro}. Hence if the inequality on r.h.s. of \eqref{eq: kgv positive} is strict then $ \kgv(y) > 0 $ for any finite $ y $.
	\end{proof}
	
\nt
Proposition \ref{pro: kgv positive} introduces the equation \eqref{eq: Omegac}. The next step is obviously to prove that such an equation as at least one solution:

	\begin{pro}[Equation \eqref{eq: Omegac}]
		\label{pro: Omegac}
		\mbox{}	\\
		The equation \eqref{eq: Omegac}
		\bdm
			 \Omega_0 = \frac{4}{s+2} \lf[ \gvchem - \frac{1}{2\pi} \gvm^2(0) \ri].
		\edm
		has at least one solution $ \Omega_0 > 0 $.
	\end{pro}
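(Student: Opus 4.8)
The plan is to recast \eqref{eq: Omegac} as a sign-change problem and apply the intermediate value theorem. Recall that $\gvm$ and $\gvchem$ depend on $\Omega_0$ only through the single parameter $\alpha=\Omega_0\sqrt{s+2}$ (see \eqref{eq: alpha} and Proposition \ref{pro: min gvf}), and that $\Omega_0\mapsto\alpha$ is an increasing bijection of $(0,\infty)$ onto itself. I therefore set
\[
	\Psi(\Omega_0):=\frac{4}{s+2}\lf[\gvchem-\frac{1}{2\pi}\gvm^2(0)\ri]-\Omega_0,
\]
and aim to show that $\Psi$ changes sign on $(0,\infty)$. Continuity of $\Psi$ is the routine ingredient: since the minimizer of $\gvf$ is unique up to sign (Proposition \ref{pro: min gvf}), a standard stability argument (uniform $H^1$ and weighted $L^2$ bounds on $\gvm$ from the energy, extraction of a weakly convergent subsequence as $\Omega_0\to\Omega_0'$, identification of the weak limit with the minimizer at $\Omega_0'$ by uniqueness and lower semicontinuity, hence strong $H^1$ convergence, and finally $H^1(\R)\hookrightarrow C(\R)$ to pass to $\gvm^2(0)$) shows that both $\gvchem$ and $\gvm^2(0)$ depend continuously on $\Omega_0$. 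I record for later use that the bound $\gvm^2(0)\le\pi\gvchem$ of \eqref{eq: gvm infty bound} gives $\gvchem-\frac{1}{2\pi}\gvm^2(0)\ge\tfrac12\gvchem$, so the bracket is always positive and $\Psi(\Omega_0)\ge\frac{2}{s+2}\gvchem-\Omega_0$.

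Positivity at small $\Omega_0$ comes from a Thomas--Fermi lower bound. Dropping the (positive) kinetic term in $\gvf$ yields $\gvchem\ge\gve\ge E^{\mathrm{gv}}_{\mathrm{TF}}$, the minimum of $g\mapsto\int_\R\lf\{\frac{\alpha^2}{2}y^2g^2+\frac{1}{2\pi}g^4\ri\}$ under $\|g\|_2=1$. The rescaling $y=\alpha^{-2/3}w$, $g=\alpha^{1/3}h$ shows that this functional scales exactly as $\alpha^{2/3}$, so $E^{\mathrm{gv}}_{\mathrm{TF}}=c_{\mathrm{TF}}\,\alpha^{2/3}$ with $c_{\mathrm{TF}}>0$ independent of $\alpha$. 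Hence $\gvchem\ge c_{\mathrm{TF}}(s+2)^{1/3}\Omega_0^{2/3}$ and, by the previous observation, $\Psi(\Omega_0)\ge\frac{2c_{\mathrm{TF}}}{(s+2)^{2/3}}\Omega_0^{2/3}-\Omega_0>0$ for $\Omega_0$ small, since $\Omega_0^{2/3}\gg\Omega_0$.

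Negativity at large $\Omega_0$ is the crux, and here the \emph{sharp} leading constant is essential. Testing $\gvf$ on the rescaled harmonic-oscillator ground state $\gtrial(y)=\alpha^{1/4}\pi^{-1/4}e^{-\alpha y^2/2}$ gives $\gve\le\frac{\alpha}{2}+C\alpha^{1/2}$, while the operator inequality $-\frac12\partial_y^2+\frac{\alpha^2}{2}y^2\ge\frac{\alpha}{2}$ yields $\int_\R\lf\{\frac12(\gvm')^2+\frac{\alpha^2}{2}y^2\gvm^2\ri\}\ge\frac{\alpha}{2}$. Subtracting, the interaction energy obeys $\frac{1}{2\pi}\|\gvm\|_4^4\le C\alpha^{1/2}$, whence $\gvchem=\gve+\frac{1}{2\pi}\|\gvm\|_4^4=\frac{\alpha}{2}+O(\alpha^{1/2})$. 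Discarding the nonnegative term $\frac{1}{2\pi}\gvm^2(0)$ then gives
\[
	\frac{4}{s+2}\lf[\gvchem-\frac{1}{2\pi}\gvm^2(0)\ri]\le\frac{2\alpha}{s+2}+O(\alpha^{1/2})=\frac{2}{\sqrt{s+2}}\,\Omega_0+O(\Omega_0^{1/2}).
\]
Since $s>2$ forces $\frac{2}{\sqrt{s+2}}<1$, the right-hand side is strictly below $\Omega_0$ for large $\Omega_0$, so $\Psi(\Omega_0)<0$ there. Combining the two sign statements with continuity of $\Psi$ and the intermediate value theorem produces a solution $\Omega_0>0$ of \eqref{eq: Omegac}.

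The main obstacle is precisely obtaining the sharp leading behaviour $\gvchem=\frac{\alpha}{2}(1+o(1))$ at large $\Omega_0$: the crude estimate $\gvchem\le 2\gve$ only produces the factor $\frac{4}{s+2}$, which beats $\Omega_0$ merely for $s>14$, whereas the refinement above isolates the harmonic ground-state energy and works for every $s>2$. I expect the continuity verification to be entirely routine, and I stress that this scheme yields only existence; uniqueness of the solution, and hence the identification of $\Omegac$ as the sharp third critical constant (cf. Remark \ref{rem: sharp}), is left open.
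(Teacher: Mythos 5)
Your proposal is correct and follows essentially the same scheme as the paper: recast \eqref{eq: Omegac} as a sign change of $G(\Omega_0)-\Omega_0$, obtain positivity at small $\Omega_0$ from the Thomas--Fermi scaling $\gve \geq C\Omega_0^{2/3}$ after dropping the kinetic term and using \eqref{eq: gvm infty bound}, and obtain negativity at large $\Omega_0$ from the harmonic-oscillator comparison, the whole point being that $2/\sqrt{s+2}<1$ for every $s>2$. The only divergence is in how the large-$\Omega_0$ step is organized. The paper cancels $-\frac{1}{2\pi}\gvm^2(0)$ against the quartic part of $\gvchem$ via the single inequality $\|\gvm\|_4^4\leq\|\gvm\|_\infty^2=\gvm^2(0)$, which reduces the bracket directly to $\gve$ and then needs only the trial-state upper bound $\gve\leq\frac{\alpha}{2}(1+\OO(\Omega_0^{-1/2}))$. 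You instead discard $-\frac{1}{2\pi}\gvm^2(0)\leq 0$ and sharpen the estimate of $\gvchem$ itself to $\frac{\alpha}{2}+\OO(\alpha^{1/2})$, which requires the additional operator lower bound $-\frac12\partial_y^2+\frac{\alpha^2}{2}y^2\geq\frac{\alpha}{2}$ to show that the interaction term is subleading. Both routes give the same constant $\frac{2}{\sqrt{s+2}}$; the paper's cancellation is slightly more economical, while yours yields as a by-product the explicit asymptotics $\gvchem=\frac{\alpha}{2}(1+o(1))$. Your parenthetical warning that the ``crude estimate $\gvchem\leq 2\gve$'' would require $s>14$ refers to a strawman the paper does not actually employ at this point. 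Finally, you spell out the continuity of $\Omega_0\mapsto\gvchem-\frac{1}{2\pi}\gvm^2(0)$, which the paper leaves implicit but which is indeed needed to invoke the intermediate value theorem; your stability argument for it is the standard one and is fine.
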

	
	\begin{proof}
		Let us first set 
		\bdm
			G(\Omega_0) : = \frac{4}{s+2} \lf[ \gvchem - \frac{1}{2\pi} \gvm^2(0) \ri],
		\edm
		so that \eqref{eq: Omegac} reads $ \Omega_0 = G(\Omega_0) $.	We will show that $ G(\Omega_0) $ is asymptotically smaller than $ \Omega_0 $ (resp. larger) $ \Omega_0 $ for large (resp. small) $ \Omega_0 $.
		
		Let us first consider $ \Omega_0 \gg 1 $: using the trivial bound $ \lf\| \gvm \ri\|_4^4 \leq \lf\| \gvm \ri\|_{\infty}^2 $ and the definition of $ \gvchem $, we get 
		\bdm
			G(\Omega_0) \leq \tx\frac{4}{s+2} \gve.
		\edm
		If now we plug into $ \gvf $ as a trial state the ground state of the harmonic oscillator $ \hosc = - \frac{1}{2} \Delta + \frac{1}{2} \alpha^2 y^2 $, we easily obtain
		\bdm
			\gve \leq \tx\frac{1}{2} \Omega_0 \sqrt{s+2} \lf(1 + \OO(\Omega_0^{-1/2}) \ri),
		\edm
		so that
		\bdm
			G(\Omega_0) \leq \tx\frac{2}{\sqrt{s+2}} \Omega_0  \lf(1 + \OO(\Omega_0^{-1/2}) \ri) < \Omega_0,
		\edm
		if $ \Omega_0 \gg 1 $, because $ s > 2 $.
		
		On the other hand for small $ \Omega_0 $, thanks to the estimate \eqref{eq: gvm infty bound} and again the definition of $ \gvchem $, we have
		\bdm
			G(\Omega_0) \geq \tx\frac{2}{s+2} \gve.
		\edm
		To bound from below $ \gve $ for small $ \Omega_0 $, we can simply drop the kinetic term to get
		\bdm	
			\gve \geq \inf_{\lf\| \rho \ri\|_1 = 1} \int_{\R} \diff y \: \lf\{ \tx\frac{1}{2} \alpha^2 y^2 \rho + \frac{1}{2\pi} \rho^2 \ri\},
		\edm
		i.e., a TF-like functional. By scaling we immediately obtain
		\bdm
			\inf_{\lf\| \rho \ri\|_1 = 1} \int_{\R} \diff y \: \lf\{ \tx\frac{1}{2} \alpha^2 y^2 \rho + \frac{1}{2\pi} \rho^2 \ri\} = C \Omega_0^{2/3},
		\edm
		as $ \Omega_0 \to 0 $, so that 
		\bdm
			G(\Omega_0) \geq C \Omega_0^{2/3} > \Omega_0,
		\edm
		for $ \Omega_0 $ small enough.
	\end{proof}
	
	In the above Proposition we have not investigated the uniqueness of the solution. We indeed expect that such a solution is in fact unique, but without a proof of this fact, we have to choose $ \Omegac $ equal to the {\it largest} possible solution.

	Now we turn our attention back to the cost function $ K $: as $ \kgv $ it is given by the sum of a the positive density $ \frac{1}{2} \gcrm^2 $ and the negative potential function $ F $ (see next Proposition \ref{pro: pro F}). In addition $ \gcrm $ is monotonically decreasing for $ y \geq y_{\betas} = o(1) $ and, like $ F $, almost symmetric. Close the boundary of the interval $ [-\eta,\eta] $, $ \gcrm $ gets extremely small (in fact exponentially small in $ \eps $) but $ F $ vanishes identically at $ \pm \eta $. In conclusion it is clear that the overall positivity of $ K $ should then emerge from a very delicate balance between the two opposite contributions.

	We first state some simple properties of $ F $ collected in the following

	\begin{pro}[Properties of $ F $]
		\label{pro: pro F}
		\mbox{}	\\
		The potential function defined in \eqref{eq: potential f} is such that
		\beq 
			\label{eq: negativity F}
			F(y) \leq 0,	\qquad 	\mbox{for any } y \in [-\eta,\eta],
		\eeq
		\beq
			\label{eq: boundary F}
			F(\pm \eta) = 0.
		\eeq	
	\end{pro}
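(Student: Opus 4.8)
The plan is to work from the second, explicit representation of $ F $ in \eqref{eq: potential f},
\begin{equation}
	F(y) = 2\Omega_0 \int_{-\eta}^{y} \diff t \: \frac{1}{1 + \eps^2 t} \lf( t + \tfrac{1}{2} \eps^2 t^2 - \tfrac{\eps^2 \betas}{2\Omega_0} \ri) \gcrm^2(t),
\end{equation}
and to combine it with two facts already at our disposal: the strict positivity of $ \gcrm $ on $ [-\eta,\eta] $ (Proposition \ref{pro: min gvbf}) and the optimality identity \eqref{eq: optimality} that characterizes the optimal phase $ \betas $.

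First I would dispose of the boundary conditions \eqref{eq: boundary F}. The equality $ F(-\eta) = 0 $ is immediate, since the integral reduces to one over a degenerate interval. For $ F(\eta) = 0 $ the key observation is that the integral from $ -\eta $ to $ \eta $ of the integrand above is exactly $ 2\Omega_0 $ times the left-hand side of \eqref{eq: optimality}, which vanishes by the very definition of $ \betas $; hence $ F(\eta) = 0 $ as well.

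For the negativity \eqref{eq: negativity F} I would track the sign of the derivative. Differentiating the representation above gives
\begin{equation}
	F^{\prime}(y) = \frac{2\Omega_0}{1 + \eps^2 y} \: \gcrm^2(y) \: h(y), \qquad h(y) := y + \tfrac{1}{2} \eps^2 y^2 - \tfrac{\eps^2 \betas}{2\Omega_0}.
\end{equation}
Since $ 1 + \eps^2 y > 0 $ and $ \gcrm^2 > 0 $ throughout $ [-\eta,\eta] $, the sign of $ F^{\prime} $ coincides with that of the upward parabola $ h $. Its roots are $ y_{\pm} = \eps^{-2}\lf( -1 \pm \sqrt{1 + \eps^4 \betas / \Omega_0} \ri) $; using $ \betas = \OO(1) $ from Proposition \ref{pro: betas} and $ \eta \propto |\log\eps| $, one finds $ y_+ = \eps^2 \betas/(2\Omega_0) + \OO(\eps^6) = \OO(\eps^2) $, which lies well inside $ [-\eta,\eta] $, while $ y_- \approx -2\eps^{-2} $ lies far to the left of $ -\eta $. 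Hence on the whole interval $ h $ changes sign exactly once, being negative on $ (-\eta, y_+) $ and positive on $ (y_+, \eta) $.

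Consequently $ F $ decreases from $ F(-\eta) = 0 $ to its unique minimum at $ y_+ $ and then increases back to $ F(\eta) = 0 $. A continuous function that vanishes at both endpoints and is first non-increasing and then non-decreasing in between is necessarily non-positive throughout, which is exactly \eqref{eq: negativity F}. The only slightly delicate point is verifying that the second root $ y_- $ falls outside $ [-\eta,\eta] $, i.e.\ that the shift produced by $ \betas $ does not destroy the single-crossing picture; this reduces to the a priori bound $ \betas = \OO(\eps^{-2}) $ established in Proposition \ref{pro: betas}, which forces $ \eps^4 \betas/\Omega_0 = \OO(\eps^2) \to 0 $ and keeps $ |y_-| \sim 2\eps^{-2} \gg \eta \propto |\log\eps| $. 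Everything else is elementary.
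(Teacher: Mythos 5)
Your proposal is correct and follows essentially the same route as the paper: the boundary values come from the degenerate integral and the optimality condition \eqref{eq: optimality}, and negativity follows from the sign analysis of $ F' $, which vanishes at a single point $ \OO(\eps^2) $ inside $ [-\eta,\eta] $ where $ F $ attains its global minimum. Your extra detail on the roots of the quadratic factor and the role of the a priori bound on $ \betas $ simply makes explicit what the paper leaves as "easy to verify".
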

	
	\begin{proof}
		One of the identities \eqref{eq: boundary F} is trivial, the other is a direct consequence of \eqref{eq: optimality}. In order to show that $ F $ is negative everywhere we compute the derivative
		\beq
			\label{eq: derivative F}
			F'(y)
			= \tx\frac{1}{\xiny}
			\lf(
			2\Om _0 y
			+\Om _0 \eps ^2 y ^2
			- \betas \eps ^2
			\ri)\gcrm ^2(y),
		\eeq
		and it is easy to verify that because of the first term $ F'(-\eta) < 0 $ while $ F'(\eta) > 0 $. Moreover $ F' $ vanishes at a single point $ y_F = \OO(\eps^2) $, where $ F $ has a global minimum. Hence it is negative everywhere in $ [-\eta,\eta] $.
	\end{proof}
	
	A very crucial piece of information about the potential function formulated in the next Proposition is an alternative expression of it, which relies on the variational equation \eqref{eq: gvbm variational} and is the analogue of \eqref{eq: fgv alternative} for $ \fgv $.
	
	\begin{pro}[Alternative expressions of $ F $]
		\label{pro: alt expression F}
		\mbox{}	\\
		For any $ y \in [-\eta,\eta] $ the potential function $ F $ admits the following alternative expressions
		\beq
			F(y)
			= - \frac{\Om _0}{\alpha ^2} \lf(
				\gcrm' 
			\ri) ^2 
			+ \frac{2\Om _0}{\alpha ^2}\lf[ \frac{1}{2} \alpha^2 y^2 
			+\frac{1}{2\pi} \gcrm^2
			-\gcrchem 
			\ri] \gcrm ^2 +
			\begin{cases}
				R_{+}(y) + R_{+},	&	\mbox{if } y \geq y_\betas,	\\
				R_{-}(y) + R_{-},	&	\mbox{if } y \leq y_\betas,
			\end{cases}
		\eeq
		where 
		\beq
			\label{eq: tails}
			R_{\pm}(y) = \OO(\eps ^2 \eta ^7) \gcrm^2(y), 	\qquad
			R_{\pm} = -\Om _0 \eta ^2 \gcrm ^2(\pm\eta)
			\lf(
			1
			+ o(1) \ri).
		\eeq
	\end{pro}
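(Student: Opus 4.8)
The plan is to reproduce, for the genuinely $\eps$-dependent profile $\gcrm$, the manipulation that led to the alternative expression \eqref{eq: fgv alternative} of $\fgv$, while keeping track of the finite-interval boundary terms and of every correction carrying a power of $\eps$.

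The first step is to move the integration onto the monotone tail of $\gcrm$. Since the optimality condition \eqref{eq: optimality} is exactly the statement $F(\eta)=0$, for $y\ge y_{\betas}$ I would write $F(y)=-2\Om_0\int_y^\eta (1+\eps^2 t)^{-1}\big(t+\tfrac12\eps^2 t^2-\tfrac{\eps^2\betas}{2\Om_0}\big)\gcrm^2\,\diff t$, whereas for $y\le y_{\betas}$ I would keep the defining representation $F(y)=2\Om_0\int_{-\eta}^y(\cdots)\gcrm^2\,\diff t$. Splitting at the maximum point $y_{\betas}=o(1)$ (Corollary \ref{cor: max gcrm}) guarantees that in either case the integration runs over a region on which $\gcrm$ is monotone, so that $\gcrm^2(t)\le\gcrm^2(y)$ on the whole domain of integration; this monotonicity is what will let me bound each residual integral by $\gcrm^2(y)$ times a power of $\eta$.

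Working with $y\ge y_{\betas}$ (the other case being symmetric), I separate the leading term $-2\Om_0\int_y^\eta t\,\gcrm^2$, the remainder of the weight being $\OO(\eps^2(1+t^2))$. Writing $t=\tfrac12\partial_t(t^2)$ and integrating by parts yields the boundary value $-\Om_0\eta^2\gcrm^2(\eta)$, which is the constant $R_+$, together with $\Om_0 y^2\gcrm^2(y)$ and $2\Om_0\int_y^\eta t^2\gcrm\gcrm'$. In the last integral I replace $\tfrac12\alpha^2 t^2$ by $U_{\betas}(t)$, the discrepancy being $\OO(\eps^2\eta)$ by \eqref{eq: taylor U}, and then use the variational equation \eqref{eq: gvbm variational} in the form $U_{\betas}\gcrm=\gcrchem\gcrm+\tfrac12\gcrm''+\tfrac{\eps^2}{2(1+\eps^2 t)}\gcrm'-\eps^2 t^3 v\gcrm-\tfrac1\pi\gcrm^3$. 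The $\eps$-independent part $\gcrm'\big[\gcrchem\gcrm+\tfrac12\gcrm''-\tfrac1\pi\gcrm^3\big]$ is the exact derivative $\partial_t\big[\tfrac{\gcrchem}{2}\gcrm^2+\tfrac14(\gcrm')^2-\tfrac{1}{4\pi}\gcrm^4\big]$; integrating it, using the Neumann condition $\gcrm'(\eta)=0$ and the exponential smallness of $\gcrm(\eta)$, produces precisely the claimed main terms $-\tfrac{\Om_0}{\alpha^2}(\gcrm')^2(y)+\tfrac{2\Om_0}{\alpha^2}\big[\tfrac12\alpha^2 y^2+\tfrac{1}{2\pi}\gcrm^2-\gcrchem\big]\gcrm^2(y)$. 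One checks directly that at $y=\eta$ the surviving $\Om_0\eta^2\gcrm^2(\eta)$ of this main expression is cancelled by $R_+$, consistently with $F(\eta)=0$.

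The remaining task, and the place where care is needed, is to show that the collected errors — from $(1+\eps^2 t)^{-1}-1$ and the $\eps^2 t^2,\eps^2\betas$ pieces of the weight, from the discrepancy $\tfrac12\alpha^2 t^2-U_{\betas}$, and from the genuinely $\eps$-dependent terms $\tfrac{\eps^2}{2(1+\eps^2 t)}\gcrm'$ and $\eps^2 t^3 v\gcrm$ of the variational equation — assemble into $R_+(y)=\OO(\eps^2\eta^7)\gcrm^2(y)$. Each error is an integral over $[y,\eta]$ of a polynomial in $t$ against $\gcrm^2$, $\gcrm\gcrm'$ or $(\gcrm')^2$. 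The derivative bound \eqref{eq: derivative gcrm}, i.e.\ $|\gcrm'|\le C\eta^3\gcrm$ and hence $(\gcrm')^2\le C\eta^6\gcrm^2$, reduces everything to integrals of $(\text{poly})\cdot\gcrm^2$, and monotonicity then gives $\int_y^\eta|t|^k\gcrm^2\le C\eta^{k+1}\gcrm^2(y)$; for the $\eps^2 t^3 v\gcrm\gcrm'$ contribution one integrates by parts once more to expose the endpoint value $y^3 v(y)\gcrm^2(y)$. The worst term is the one generated by $\tfrac{\eps^2}{2(1+\eps^2 t)}(\gcrm')^2$, which contributes $\OO(\eps^2)\cdot\eta^6\cdot\eta\,\gcrm^2(y)=\OO(\eps^2\eta^7)\gcrm^2(y)$ and fixes the exponent in \eqref{eq: tails}. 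The only genuinely delicate point is thus organizing the integrations by parts so that every error comes out proportional to the lower endpoint value $\gcrm^2(y)$ rather than to a mere global small constant, which is exactly what the split at $y_{\betas}$ and the monotonicity of $\gcrm$ secure; the case $y\le y_{\betas}$ is handled identically by integrating from $-\eta$, producing the boundary constant $R_-=-\Om_0\eta^2\gcrm^2(-\eta)(1+o(1))$.
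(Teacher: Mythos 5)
Your proposal is correct and follows essentially the same route as the paper's proof: exploiting $F(\eta)=0$ (the optimality condition for $\betas$) to integrate over the monotone tail $[y,\eta]$ (resp. $[-\eta,y]$), integrating by parts to produce the boundary constant $R_\pm$ and the term $\Om_0 y^2\gcrm^2(y)$, reconstructing $U_{\betas}$ so that the variational equation \eqref{eq: gvbm variational} turns the remaining integrand into an exact derivative, and controlling all residuals via the bound \eqref{eq: derivative gcrm} together with the monotonicity of $\gcrm$. The only (immaterial) difference is that the paper antidifferentiates the full weight $-2\Om_0 t-\Om_0\eps^2 t^2+\betas\eps^2$ at once, whereas you peel off the leading $t$ term first and treat the $\eps^2$ corrections as errors from the start.
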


	\begin{proof}
		We consider only the case $ y \ge \yb $, since the other one is analogous. The key ingredient of the proof is an integration by parts, exactly as for \eqref{eq: fgv alternative}. We spell all the details nevertheless for the sake of clarity. Thanks to the vanishing of $ F $ at $ \eta $, we have
		\bml{
			\label{eq: F computed 0}
			F(y)
			=\int _y ^\eta
			\diff t\
			\tx\frac{1}{\xint}
			\lf(
			-2\Om _0 t
			-\Om _0 \eps ^2 t ^2
			+\betas \eps ^2
			\ri)
			\gcrm ^2 = \disp\int _y ^\eta
			\diff t\
			\tx\frac{1}{\xint}
			\: \gcrm ^2	\: \partial_t \lf(
			-\Omega_0 t ^2
			-\frac{1}{3} \Om _0 \eps ^2 t ^3
			+\betas \eps ^2 t
			\ri)
			\\
			= \tx\frac{1}{1+\eps ^2 \eta}
			\lf(
			-\Om _0\eta ^2
			-\frac{1}{3} \Om _0 \eps ^2 \eta ^3
			+\betas \eps ^2 \eta
			\ri)
			\gcrm ^2 (\eta)
			-\frac{1}{\xiny}
			\lf(
			-\Om _0y ^2
			-\frac{1}{3}\Om _0 \eps ^2 y ^3
			+\betas \eps ^2 y
			\ri)
			\gcrm ^2 (y)+
			\\
			+\int _y ^\eta
			\diff t\	
			\lf(
			-\Om _0 t ^2
			-\tx\frac{1}{3}\Om _0 \eps ^2 t ^3
			+\betas \eps ^2 t
			\ri)
			\lf[ \tx\frac{\eps ^2}{\xinyp ^2} \gcrm ^2 - \frac{2}{\xinyp} \gcrm  \gcrm' \ri]  \\
			= 
			-\Om _0 \eta ^2 \gcrm ^2 (\eta) (1 + o(1))
			+ \lf( \Om _0 y ^2 + \OO(\eps^2\eta^4) \ri) \gcrm^2(y) 	\\
			- 2 \disp\int _y ^\eta
			\diff t\: \tx\frac{1}{\xinyp} 	
			\lf(
			-\Om _0 t ^2
			-\tx\frac{1}{3}\Om _0 \eps ^2 t ^3
			+\betas \eps ^2 t
			\ri)
			\gcrm  \gcrm'.
		}	
		We now rewrite the last term by reconstructing the potential $ U_{\betas} $ and using the variational equation \eqref{eq: gvbm variational}: since 
		\bmln{
			-\tx\frac{1}{\xiny}
			\lf(
			-\Om _0 t ^2
			-\frac{1}{3} \Om _0\eps ^2 t ^3
			+\betas \eps ^2 t
			\ri) = \frac{2\Om _0}{\alpha ^2} (1 + \eps^2 t) U _\betas(t) \\
			+ \tx\frac{\eps ^2}{\xintp}
			\lf(
			\frac{1}{6} \alpha^2 t ^3
			- \frac{1}{2} \Omega_0 (s-2) \betas t 
			+\Om _0 \eps ^2 \betas t ^2-\frac{1}{2}\eps ^2 \betas^2 
			\ri),
		}
		we obtain
		\bml{
			\label{eq: F computed 1}
			- 2 \int _y ^\eta
			\diff t\	\frac{1}{\xinyp} 
			\lf(
			-\Om _0 t ^2
			-\tx\frac{1}{3}\Om _0 \eps ^2 t ^3
			+\betas \eps ^2 t
			\ri) \gcrm  \gcrm' =\frac{4\Om _0}{\alpha ^2}
			\int _y ^\eta
			\diff t \: (1 + \eps^2 t) \:
			U _\betas(t)
			\gcrm
			\gcrm'
			\\
			+  2\eps ^2\int _y ^\eta
			\diff t\
			\tx\frac{1}{\xintp}
			\lf(
			\frac{1}{6} \alpha^2 t ^3
			- \frac{1}{2} \Omega_0 (s-2) \betas t 
			+\Om _0 \eps ^2 \betas t ^2-\frac{1}{2}\eps ^2 \betas^2 
			\ri)
			\gcrm\gcrm' \\
			= \frac{4\Om _0}{\alpha ^2}
			\int _y ^\eta
			\diff t \: (1 + \eps^2 t) \:
			U _\betas(t)
			\gcrm
			\gcrm' + \OO(\eps^2 \eta^7) \gvbm^2(y),
		}
		where we have used the bound \eqref{eq: derivative gcrm} and the monotonicity of $ \gcrm $ for $ y \geq y_{\betas} $. The first term on the r.h.s. can be rewritten by means of \eqref{eq: gvbm variational}:
		\bmln{
			\frac{4\Om _0}{\alpha ^2}
			\int _y ^\eta
			\diff t \: (1 + \eps^2 t) \:
			U _\betas(t)
			\gcrm
			\gcrm'
			=\frac{2\Om _0}{\alpha ^2}
			\int _y ^\eta
			\diff t \: (1 + \eps^2 t) \:
			\lf[
			\tx\frac{1}{2}
			\lf(
				\gcrm'
			\ri) ^2
			-\frac{1}{2\pi} \gcrm ^4
			+\gcrchem \gcrm ^2
			\ri]' 
			\\
			+\frac{2\Om _0 \eps ^2}{\alpha ^2}
			\int _y ^\eta	
			\diff t\ 
			\lf(
			\gcrm '
			\ri) ^2
			-\frac{4\Om _0 \eps ^2}{\alpha ^2}
			\int _y ^\eta
			\diff t \: (1 + \eps^2 t) \:
			t ^3 v(t) \gcrm \gcrm'=
			\\
			=\frac{2\Om _0}{\alpha ^2}
			\lf(
			1
			+\eps ^2 \eta
			\ri)
			\lf[
			-\tx\frac{1}{2\pi}\gcrm ^4(\eta)
			+\gcrchem \gcrm ^2(\eta)
			\ri]
			-\frac{2\Om _0}{\alpha ^2}\xinyp
			\lf[
			\tx\frac{1}{2}
			\lf(
				\gcrm'(y)
			\ri) ^2
			-\frac{1}{2\pi}
			\gcrm ^4(y)
			+\gcrchem \gcrm ^2(y)
			\ri]-
			\\
			+\frac{2\Om _0 \eps ^2}{\alpha ^2}
			\int _y ^\eta
			\diff t\
			\lf[
			\tx\frac{1}{2}
			\lf(
				\gcrm' 
			\ri) ^2
			+\frac{1}{2\pi}\gcrm ^4 
			-\gcrchem \gcrm ^2 
			\ri]
			-\frac{4\Om _0 \eps ^2}{\alpha ^2}
			\int _y ^\eta
			\diff t \: (1 + \eps^2 t) \:
			t ^3 v(t) \gcrm  \gcrm'	\\
			= \frac{2\Om _0}{\alpha ^2}
			\gcrchem \gcrm ^2(\eta)
			-\frac{2\Om _0}{\alpha ^2}
			\lf[
			\tx\frac{1}{2}
			\lf(
				\gcrm'(y)
			\ri) ^2
			-\frac{1}{2\pi}
			\gcrm ^4(y)
			+\gcrchem \gcrm ^2(y)
			\ri] + \OO(\eps^2\eta^7) \gcrm^2(y).
		}
		Putting together the above estimate with \eqref{eq: F computed 0} and \eqref{eq: F computed 1}, we obtain the result.
	\end{proof}

\nt
Thanks to Proposition \ref{pro: alt expression F}, the cost function $ K$ can also be expressed as
\beq
	\label{eq: K alternative}
	K(y) = - \frac{\Om _0}{\alpha ^2} \lf(
				\gcrm' 
			\ri) ^2 
			+ \frac{2\Om _0}{\alpha ^2}\lf[ \frac{\alpha^2}{4 \Omega_0} + \frac{1}{2} \alpha^2 y^2 
			-\frac{1}{2\pi} \gcrm^4
			+\gcrchem 
			\ri] \gcrm ^2 
			+
			\begin{cases}
				R_{+}(y) + R_{+},	&	\mbox{if } y \geq y_\betas,	\\
				R_{-}(y) + R_{-},	&	\mbox{if } y \leq y_\betas.
			\end{cases}
\eeq
This alternative expression will play an important role in the proof of its positivity, exactly as for $ \kgv $. Another important ingredient of the proof is also the closeness of $ K $ to $ \kgv $ as $ \eps \to 0 $:
	
	\begin{lem}
		\label{lem: K close kgv}
		\mbox{}	\\
		For any $ \Omega_0 > 0 $ and $ y \in \annt $
		\beq
			\label{eq: K close kgv}
			K(y) - \kgv(y) = \OO(\eps^2|\log\eps|^{\infty}).
		\eeq
	\end{lem}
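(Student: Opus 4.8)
The plan is to compare $K$ and $\kgv$ directly through their defining formulae \eqref{eq: cost f} and \eqref{eq: kgv}, rather than through the alternative representations \eqref{eq: K alternative} and \eqref{eq: kgv alternative}. Writing
\beq
	K(y) - \kgv(y) = \tx\frac{1}{2}\lf( \gcrm^2(y) - \gvm^2(y) \ri) + \lf( F(y) - \fgv(y) \ri),
\eeq
the two contributions can be handled separately. The reason for avoiding the alternative expressions is that both of them carry a term proportional to $\lf(\gcrm'\ri)^2$, respectively $\lf(\gvm'\ri)^2$, and we have no pointwise control on the difference $\lf(\gcrm'\ri)^2 - \lf(\gvm'\ri)^2$: Proposition \ref{pro: infty est diff gvm} only provides an $L^2$ bound on $(\gvm/\gcrm)'$. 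Comparing the primitive objects sidesteps this difficulty completely, since the potential functions $F$ and $\fgv$ are integrals in which the derivatives never appear, and this is really the only genuine decision in the proof.

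The density term is immediate on $\annt$. Factoring $\gcrm^2 - \gvm^2 = (\gcrm - \gvm)(\gcrm + \gvm)$ and combining the pointwise bound \eqref{eq: infty est diff gvm} with the uniform boundedness of $\gcrm$ and $\gvm$ (see \eqref{eq: gvbm infty est} and \eqref{eq: gvm infty bound}) gives $\tx\frac{1}{2}\lf(\gcrm^2 - \gvm^2\ri) = \OO\big(\eps^2\eta^{(7+4\nu)/2}\big) = \OO(\eps^2|\log\eps|^{\infty})$. This is the only place where the restriction $y \in \annt$ is actually used; the bound on $F - \fgv$ derived below holds uniformly on the whole of $[-\eta,\eta] \supset \annt$.

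For the potential functions I would split $F - \fgv$ into three pieces, using \eqref{eq: potential f} and the rewriting $\fgv(y) = 2\Om_0\int_{-\infty}^y t\,\gvm^2(t)\,\diff t$ of \eqref{eq: fgv}. The \emph{coefficient error}, i.e.\ the difference between $\tx\frac{1}{\xint}\lf(t + \tx\frac{1}{2}\eps^2 t^2 - \tx\frac{\eps^2\betas}{2\Om_0}\ri)$ and $t$, is $\OO\big(\eps^2(t^2+1)\big)$ because $\betas = \OO(1)$ by \eqref{eq: betas}; integrated against $\gcrm^2$ it contributes $\OO(\eps^2)$ thanks to the moment bounds \eqref{eq: int powers}. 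The \emph{profile error} $2\Om_0\int_{-\eta}^y t\,(\gcrm^2 - \gvm^2)\,\diff t$ is controlled by Cauchy--Schwarz as $\| t \|_{L^2(-\eta,\eta)}\,\lf\| \gcrm^2 - \gvm^2 \ri\|_{L^2_\eta}$, which by \eqref{eq: gv energy diff} is $\OO(\eta^{3/2})\cdot\OO(\eps^2\eta^{7/2}) = \OO(\eps^2\eta^5)$. Finally the \emph{tail error} $2\Om_0\int_{-\infty}^{-\eta} t\,\gvm^2\,\diff t$, which accounts for the differing lower limits of $F$ and $\fgv$, is $\OO(\eps^{\infty})$ by the Gaussian decay \eqref{eq: estimate on gvm} together with $\eta \propto |\log\eps|$, since $\exp\{-\tx\frac{\alpha}{2}\eta^2\} = \OO(\eps^{\infty})$. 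Collecting the three pieces yields $F - \fgv = \OO(\eps^2|\log\eps|^{\infty})$, and adding the density term proves the claim. I do not anticipate any real obstacle beyond bookkeeping of the powers of $\eta = \OO(|\log\eps|)$, all of which appear multiplied by $\eps^2$ and are therefore harmlessly absorbed into the symbol $\OO(\eps^2|\log\eps|^{\infty})$.
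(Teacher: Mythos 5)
Your proof is correct, and it is considerably more explicit than the paper's, which disposes of the lemma in one sentence as ``a direct consequence of the pointwise estimate \eqref{eq: infty est diff gvm}''. The pointwise estimate does settle the density part $\tx\frac12(\gcrm^2-\gvm^2)$ on $\annt$ exactly as you argue, but it cannot by itself control $F-\fgv$: these are integrals from $-\eta$ (resp.\ $-\infty$) up to $y$, and most of that integration range lies outside $\annt$, where \eqref{eq: infty est diff gvm} says nothing. Your three-way split --- coefficient error via $\betas=\OO(1)$ and the moment bounds \eqref{eq: int powers}, profile error via Cauchy--Schwarz and the $L^2$ bound $\lf\|\gcrm^2-\gvm^2\ri\|_{L^2_\eta}=\OO(\eps^2\eta^{7/2})$ from \eqref{eq: gv energy diff}, and the Gaussian tail $\int_{-\infty}^{-\eta}t\,\gvm^2=\OO(\eps^\infty)$ --- supplies precisely the missing input, and every piece lands inside $\OO(\eps^2|\log\eps|^\infty)$. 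The only minor points worth recording are that $\annt=[-y_\eta,y_\eta]$ with $y_\eta=\OO(\sqrt{\log\eta})\ll\eta$, so both $F(y)$ and $\fgv(y)$ are indeed defined on $\annt$ and the splitting of the integration domains is legitimate, and that the $L^2_\eta$ norm carries the weight $(1+\eps^2 y)$, which is uniformly comparable to $1$ on $[-\eta,\eta]$ and so does not disturb the Cauchy--Schwarz step. In short: your route through the primitive definitions \eqref{eq: cost f} and \eqref{eq: kgv} rather than the alternative representations is the right call (the derivative terms there are indeed not pointwise controlled), and your write-up is a complete proof of a statement the paper only asserts.
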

	
	\begin{proof}
		The result is a direct consequence of the pointwise estimate \eqref{eq: infty est diff gvm}.
	\end{proof}
	
\nt
The above Lemma in combination with Proposition \ref{pro: kgv positive} might seem to give also the positivity of $ K $ inside $ \annt $. However this is not the case because, although we proved that $ \kgv $ is positive on the whole real line, we did not provide any lower bound to it. In fact even by just looking at its minima, one could conclude from \eqref{eq: kgv crucial pro} that $ \kgv(y_2) \geq \kgv(y_1) \gvm^2(y_2)/\gvm^2(y_1) $, where $ y_{2}, y_1 $ are the positions of the minimum point and the preceding maximum point (consider for simplicity the half-line $ \R^+ $). Now even if $ \kgv(y_1) > C > 0 $ as it occurs for instance at the origin, the ratio between the densities can become extremely small in $ \ann $. In addition to that the inequality holds true only for the minima of $ \kgv $ and it might be that it has no minimum inside $ \annt $ or $ \ann $, in which case we only know that it is positive there, but without any meaningful lower bound.

In fact we will be able to prove positivity of $ K $ only in domain strictly smaller than $ \ann $, because of the additional constant terms $ R_{\pm} $ in \eqref{eq: K alternative}, irrespective of their smallness. We thus set
\beq
	\label{eq: annm}
	\annm : = 
	\lf\{
		y \in [-\eta,\eta]\: \big|\:
		\gcrm ^2 (y)
		\ge \eta ^6 \max\lf\{
			\gcrm ^2 (\eta),
			\gcrm ^2 (-\eta)
		\ri\}
	\ri\}.
\eeq
By monotonicity of $ \gcrm $ for large $ y $ is easy to see that $ \annm = [- y_{-}, y_{+}] $ with $ y_{\pm} \to \infty $ as $ \eps \to 0 $. Notice also that $ \gcrm $ is very small at the boundary of $ \annm $, although not as small as $ \gcrm(\eta) $.

We can now state the main result of this Section:

	\begin{pro}[Positivity of $ K $]
		\label{pro: K positive}
		\mbox{}	\\
		If $ \Omega_0 > \Omegac $ as $ \eps \to 0 $, 
		\beq
			K(y) > 0,	\qquad		\mbox{for any } y \in \annm.
		\eeq
		%in particular it holds that if $ \bar{y} $ is any point such that $ K(\bar{y}) \le K(y) $ for any $ y \in \annt $ then
		%\beq
			%K(y)
			%\ge \frac{\gcrm ^2(\bar{y})}{2}
			%\lf(
			%1
			%-\frac{\Omc}{\Om _0}
			%\ri)\lf(
			%1
			%+\bigO{\eta ^{-1}}
			%\ri).
		%\eeq
	\end{pro}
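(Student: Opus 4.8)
The plan is to transfer the critical-point argument of Proposition \ref{pro: kgv positive} from the limiting cost function $\kgv$ to the $\eps$-dependent cost function $K$, using the alternative expression \eqref{eq: K alternative} and controlling its remainders precisely on $\annm$. The role of the weight $\eta^6$ in the definition \eqref{eq: annm} is exactly to neutralize the constant remainders: since $R_\pm = -\Om_0\eta^2\gcrm^2(\pm\eta)(1+o(1))$ while $\gcrm^2(y) \geq \eta^6\max\{\gcrm^2(\eta),\gcrm^2(-\eta)\}$ on $\annm$, one has $|R_\pm|/\gcrm^2(y) \leq C\Om_0\eta^{-4} = o(1)$, and likewise $R_\pm(y)/\gcrm^2(y) = \OO(\eps^2\eta^7)$. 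Dividing \eqref{eq: K alternative} by $\gcrm^2$ therefore yields, uniformly on $\annm$,
\beq
	\label{eq: K normalized}
	\frac{K(y)}{\gcrm^2(y)} = -\frac{\Om_0}{\alpha^2}\frac{(\gcrm')^2}{\gcrm^2} + \frac{2\Om_0}{\alpha^2}\lf[\frac{\alpha^2}{4\Om_0} + \tx\frac{1}{2}\alpha^2 y^2 + \frac{1}{2\pi}\gcrm^2 - \gcrchem\ri] + o(1),
\eeq
the exact analogue of the normalized version of \eqref{eq: kgv alternative}.

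First I would anchor the argument near the origin. Since the maximum point $y_\betas = o(1)$ lies well inside $\annt$, Lemma \ref{lem: K close kgv}, Corollary \ref{cor: max gcrm} and the strict inequality of Proposition \ref{pro: kgv positive} (valid because $\Om_0 > \Omegac$) give $K(y_\betas) = \kgv(0) + o(1) \geq c_0 > 0$ for $\eps$ small, with $c_0$ a fixed constant. By the near-symmetry of $\gcrm$ recorded in \eqref{eq: gcrm almost symm} and \eqref{eq: y3}, it suffices to propagate positivity to the right of $y_\betas$, the left half being treated identically.

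Next I would reproduce the monotonicity mechanism \eqref{eq: kgv crucial pro} for $K$. At any critical point $y_0$ of $K$, the relation $K' = \gcrm\gcrm' + F'$ together with \eqref{eq: derivative F} gives $\gcrm'(y_0) = -(1+\eps^2 y_0)^{-1}(2\Om_0 y_0 + \Om_0\eps^2 y_0^2 - \betas\eps^2)\gcrm(y_0)$; inserting this into \eqref{eq: K normalized} cancels the kinetic term and produces the exact analogue of \eqref{eq: kgv min}, up to $o(1)$. A parallel computation of $K''(y_0)$ mirroring \eqref{eq: kgv second derivative min}, and the comparison of its sign at a local maximum $y_1$ and a subsequent local minimum $y_2$, then yield
\bdm
	\frac{K(y_2)}{\gcrm^2(y_2)} \geq \frac{K(y_1)}{\gcrm^2(y_1)} + \frac{1}{\pi\Om_0(s+2)}\big(\gcrm^2(y_1) - \gcrm^2(y_2)\big) - o(1),
\edm
where the displayed difference is non-negative because $\gcrm$ decreases past $y_\betas$. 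Since across a minimum-to-maximum transition the ratio $K/\gcrm^2$ cannot decrease (the numerator grows while the positive denominator shrinks), positivity propagates outward from $K(y_\betas)/\gcrm^2(y_\betas) \geq c_0/\|\gcrm\|_\infty^2$, and $K > 0$ on $\annm$ follows.

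The main obstacle will be to guarantee that the accumulated $o(1)$ errors along the chain of extrema stay strictly below the fixed gap $c_0/\|\gcrm\|_\infty^2$ inherited from $\Om_0 > \Omegac$. This is the delicate point: one must control the number of extrema of $K$ inside $\annm$ (equivalently, the zeros of $(\log\gcrm)' + (1+\eps^2 y)^{-1}(2\Om_0 y + \Om_0\eps^2 y^2 - \betas\eps^2)$) and check that the per-step remainder, of size $\OO(\eps^2\eta^7 + \eta^{-4})$, is dominated by the non-negative gain $\frac{1}{\pi\Om_0(s+2)}(\gcrm^2(y_1)-\gcrm^2(y_2))$ produced at each maximum-to-minimum step. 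The quantitative symmetry estimates \eqref{eq: gcrm almost symm}, \eqref{eq: y3} and the pointwise bound \eqref{eq: infty est diff gvm} are precisely the tools that make this bookkeeping work, just as the exact symmetry of $\gvm$ rendered the limiting argument error-free.
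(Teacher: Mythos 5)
Your overall strategy---transferring the critical-point comparison from $\kgv$ to $K$ via the alternative expression \eqref{eq: K alternative} and the closeness estimates---is the right one, and your normalization $|R_\pm|/\gcrm^2 = \OO(\eta^{-4}) = o(1)$ on $\annm$ is a correct and useful observation. But the closing step is where the proposal genuinely breaks down, at exactly the point you flag yourself: propagating positivity through the full chain of extrema of $K$ requires either a bound on the number of extrema (which neither you nor the paper has) or a per-step gain dominating the per-step error, and the gain $\tfrac{1}{\pi\Omega_0(s+2)}\big(\gcrm^2(y_1)-\gcrm^2(y_2)\big)$ does not do the job: two consecutive extrema can be arbitrarily close, making this gain negligible compared with the $o(1)$ remainder, and the symmetry estimates \eqref{eq: gcrm almost symm}, \eqref{eq: y3}, \eqref{eq: infty est diff gvm} give no control over this. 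Moreover, the ``across a minimum-to-maximum transition the ratio cannot decrease'' step is only valid once positivity at the minimum is already known, so the induction must carry a quantitative lower bound that an uncontrolled accumulation of $o(1)$'s destroys. A secondary issue is that by working on $\annm$ only, you leave open the case in which the minimum of $K$ over $\annm$ is attained at $\partial\annm$, where it is not a critical point and the identity does not apply.

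The paper closes the argument differently, by two devices you should note. First, it replaces $K$ by the modified function $\kt = K - \deps\gcrm^2 - R_\pm$ on the whole interval $[-\eta,\eta]$: this removes the constant remainders from the critical-point identity and, since $F(\pm\eta)=0$, gives the explicit endpoint positivity $\kt(\pm\eta) = \big(\tfrac12-\deps+o(1)\big)\gcrm^2(\pm\eta)>0$, so the only dangerous minima are interior critical points. Second---and this is the key structural point missing from your proposal---the comparison inequality \eqref{eq: K crucial pro 1} does not require $y_M$ and $y_m$ to be \emph{adjacent} extrema: its proof only uses the signs of $\kt''$ at the two points and the monotonicity $\gcrm(y_m)<\gcrm(y_M)$. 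One can therefore take $y_m$ to be the \emph{global} minimum and $y_M$ a single maximum with $\kt(y_M)\geq\kt(y_{\betas})\geq c_0>0$ (such a maximum exists because the endpoints and $y_m$ all lie below $\kt(y_{\betas})$), which moreover satisfies $y_M=\OO(1)$ and hence $\gcrm(y_M)\geq C>0$. A single application of the inequality then gives $\kt(y_m)\geq C_0\,\gcrm^2(y_m)>0$ with only one $o(1)$ error, which is beaten by the fixed gap inherited from $\Omega_0>\Omegac$; no chaining and no count of extrema is needed. Positivity of $K$ on $\annm$ then follows from $K = \kt+\deps\gcrm^2+R_\pm$ together with your own observation that $\deps\gcrm^2\geq|R_\pm|$ there for the choice $\deps=\eta^{-3}$.
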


%\begin{cor}
	
	%Notice that in particular the previous Theorem implies that for any $ \Om _0 > \Omc $ and $ \eps $ small enough we have
	%\[
		%K(y)
		%\ge \frac{\gcrm ^2(\bar{y})\gcrm ^2(y)}{2 \|\gcrm\| _\infty ^2}
		%\lf(
			%1
			%-\frac{\Omc}{\Om _0}
		%\ri)\lf(
			%1
			%+\bigO{\eps ^2 \eta ^{6+2\nu}}
		%\ri)
	%\]
	%with $ \bar{y} $ any point such that $ K(\bar{y}) \le K(y) $ for any $ y \in \annt $.
%	
%\end{cor}

	\begin{proof}
		As in the proof of positivity of $ \kgv $ in Proposition \ref{pro: kgv positive} the key idea is to show that positivity at the origin is indeed sufficient to get the result. This in turn is easily inherited from positivity of $ \kgv $ whenever $ \Omega_0 > \Omegac $, via \eqref{eq: K close kgv}: by \eqref{eq: kgv0}
		\bdm
			\kgv(0) > C > 0,
		\edm
		but
		\bdm
			K(0) - \kgv(0) = \OO(\eps^2|\log\eps|^{\infty}),
		\edm
		and thus $ K(0) > C > 0 $ for a possibly different constant $ C $.
		
		The rest of the proof follows the same line of reasoning of the proof of Proposition \ref{pro: kgv positive}. There are however two complications: first we have two alternative expressions of $ K $ in $ [y_{\betas},\eta] $ and $ [-\eta,y_{\betas}] $ respectively. Recall that $ y_{\betas} = o(1) $ denotes the unique maximum point of $ \gcrm $. Second the presence of the constant terms $ R_{\pm} $ in \eqref{eq: K alternative} is very annoying and in fact it is responsible of the restriction to $ \annm $. 
		
		In order to handle the first issue it is sufficient to take into account the two intervals $ [y_{\betas},\eta] $ and $ [-\eta,y_{\betas}] $ separately and use a different expressions for $ K $ (see \eqref{eq: K alternative}).
		
		The second issue on the other hand leads to the introduction of the modified cost function
		\beq
			\label{eq: tilde K}
			\kt(y) : = K(y)  - \deps \gcrm^2(y) - R_{\pm}
		\eeq
		for some 
		\beq
			\label{eq: deps}
			 0 < \deps \ll \eta^{-2} \ll 1
		\eeq
		to be chosen later. Here we have used a compact notation to mean that we subtract $ R_+ $ (resp. $ R_- $) in $ [y_{\betas},\eta] $ (resp. $ [-\eta,y_{\betas}] $).
		
		Now we observe that if $ \Omega_0 > \Omegac $
		\beq
			\kt(y_{\betas}) = \kt(0) + o(1) = K(0) + o(1) > 0,
		\eeq
		thanks to the pointwise estimate \eqref{eq: K close kgv} and since $ \lf\| \gcrm \ri\|_{\infty} \leq C $. It is interesting to remark that this is the only point in the proof where we use the condition $ \Omega_0 > \Omegac $, although several later estimates are affected by this one. %For the second estimate we can simply integrate the variational equation from $ y_{\betas} $ up to any point $ y = o(1) $, obtaining that for any finite $ y $, $ \gcrm^{\prime}(y) = \OO(1) $. Then we plug this information into 
		Moreover at the boundary of the domain we have
		\beq
			\kt(\pm\eta) = \lf( \tx\frac{1}{2} - \deps + \OO(\eps^3 \eta^7) \ri) \gcrm^2(\pm\eta) > 0.
		\eeq
		Therefore in order to exclude that $ \kt $ becomes negative, it suffices to prove that it is positive at any possible {\it global} minimum point $ - \eta < y_m < \eta $.
		
		We claim that, for $ \Omega_0 > \Omegac $, any global minimum point $ y_m $ of $ \kt $ must satisfy the condition
		\beq
			\label{eq: ym condition}
			|y_m| \gg 1.
		\eeq
		The reason is the pointwise estimate \eqref{eq: K close kgv} and the observation contained in Proposition \ref{pro: kgv positive}: $ \kt $ and $ \kgv $ are pointwise close and therefore we would have 
		\bdm
			\kgv(y_m) \leq \min [K(-\eta),K(\eta)] + o(1) \leq \half \min [\gcrm^2(-\eta),\gcrm^2(\eta)]+ o(1) = o(1),
		\edm
		 which in turn implies $ \kgv(y_m) \leq 0 $ since $ \kgv $ is independent of $ \eps $. For $ \Omega_0 > \Omegac $ this contradicts the statement of Proposition \ref{pro: kgv positive}.
		
		The key point in the proof is the following property: let $ y_m $ be a point where $ \kt $ reaches its global minimum $ \kt(y_m) < \kt(y_{\betas}) $ (otherwise there would be nothing to prove) and $ y_M $ any maximum point of $ \kt $ such that $ y_M < y_m $, if $ y_m > y_{\betas} $, or $ y_M > y_m $ in the opposite case $ y_m < y_{\betas} $. Notice that such a maximum needs not to be the global maximum but its existence is a consequence of smoothness of $ \kt $ and the inequalities
		\bdm
			\kt(\pm\eta) < \kt(y_{\betas}), \qquad	\kt(y_m) < \kt(y_{\betas}).  
		\edm
		Then we are going to prove that
		\beq
			\label{eq: K crucial pro 1}
			\frac{\kt(y_m)}{\gcrm^2(y_m)} \geq \frac{\kt(y_M)}{\gcrm^2(y_M)} + o(1).
		\eeq
		Now suppose that this is true, then we can pick a maximum point $ y_M $ of $ \kt $ where $ \kt(y_M) \geq \kt(y_{\betas}) > C > 0 $. In addition it must be 
		\beq
			y_M = \OO(1),
		\eeq
		because $ \kt(y) \leq C g^2(y) $ and the decay estimate \eqref{eq: better estimate on gvbm} or the pointwise estimate \eqref{eq: infty est diff gvm} imply that $ \kt(y) = o(1) $, if $ |y| \gg 1 $. Hence by the lower bound \eqref{eq: better estimate on gvbm} $ \gcrm(y_M) \geq C > 0 $, \eqref{eq: K crucial pro 1} yields
		\beq
			\label{eq: K crucial pro 2}
			\kt(y_m) \geq \gcrm^2(y_m) \lf( C^{-2} \kt(y_{\betas}) + o(1) \ri) \geq C_0 \gcrm^2(y_m) > 0
		\eeq
		for some $ C_0 > 0 $. In fact we have obtained something more: for any $ y \in \ann $ either $ \kt(y) \geq \min \{ \kt(-\eta), \kt(\eta) \} > 0  $ or
		\bdm
			\frac{\kt(y)}{\gcrm^2(y)} \geq \frac{\kt(y_m)}{\gcrm^2(y)} \geq C \frac{\gcrm^2(y_m)}{\gcrm^2(y)} > 0.
		\edm
		Either way $ \kt $ is positive everywhere in $ [-\eta,\eta] $. Moreover the positivity of $ \kt $ implies that
		\bdm
			K(y) > R_{\pm} + \deps \gcrm^2(y) \geq 0,	\qquad \mbox{if } \gcrm^2(y) \geq \deps^{-1} |R_+|,
		\edm
		for any $ y \in \ann $. If now we restrict the inequality to $ \annm $ and we choose, e.g., $ \deps = \eta^{-3} $, the estimates \eqref{eq: tails} imply that inside $ \annm $
		\bdm
			\gcrm^2(y) \geq \eta^6 \max\lf\{\gcrm^2(-\eta), \gcrm^2(\eta) \ri\} \gg C\eta^{5} \max\lf\{\gcrm^2(-\eta), \gcrm^2(\eta) \ri\} \geq \deps^{-1} |R_{\pm}|,
		\edm
		so that $ K(y) $ is strictly positive for any $ y \in \annm $. In fact a closer look to the chain of inequalities reveals that we have proven something more, i.e., for $ \eps $ small enough
		\beq
			\label{eq: lb K}
			K(y) \geq \eta^{-3} \gcrm^2(y),	\qquad	\mbox{for any } y \in \annm.
		\eeq
		
		We have now to prove \eqref{eq: K crucial pro 1}. Recall the assumption: we have a global minimum of $ \kt $ at $ y_m $ and a maximum at $ y_M $, which is on the left (resp. right) of $ y_m $, if $ y_m > y_{\betas} $ (resp. $ y_m < y_{\betas} $). The idea is the same used in the proof of Proposition \ref{pro: kgv positive}: the derivative of $ \kt $, i.e.,
		\beq
			\label{eq: kt derivative}
			\kt^{\prime}(y) =\lf[
			(1 - \deps) \gcrm'
			+ \tx\frac{1}{\xiny}
			\lf(
				2\Om _0 y
				+\Om _0 \eps ^2 y ^2
				- \betas \eps ^2
			\ri)\gcrm
		\ri] \gcrm,
		\eeq
		must vanish both at $ y_m $ and $ y_M $ and therefore
		\beq
			\label{eq: extreme K}
			\gcrm^{\prime}(y_{m,M}) = - \frac{1}{(1 - \deps)(1 + \eps^2 y_{m,M})}\lf(
				2\Om _0 y_{m,M}
				+\Om _0 \eps ^2 y_{m,M} ^2
				- \betas \eps ^2
			\ri)\gcrm(y_{m,M}).
		\eeq
		The second derivative of $K $ can be computed as well:
		\beq
			\kt''(y) =
			(1 - \deps) \lf(\gcrm'\ri)^2 + 4 \Omega_0 y \gcrm \gcrm' + \lf[ (1 - \deps)\lf( \alpha^2 y^2 + \tx\frac{2}{\pi} \gcrm^2 - 2 \gcrchem \ri) + 2 \Omega_0 + \OO(\eps^2 \eta^5) \ri] \gcrm^2, 		
		\eeq
		so that at any extreme point of $ \kt $, one has
		\beq
			\label{eq: kt primeprime ext}
			\kt''(y_{m,M}) = \lf[ \Omega_0^2 (s-2) y_{m,M}^2 + 2 \Omega_0 + \tx\frac{2}{\pi} \gcrm^2(y_{m,M}) - 2 \gcrchem + o(1) \ri] \gcrm^2(y_{m,M}),
		\eeq
		where we have exploited the condition \eqref{eq: deps}. Similarly by \eqref{eq: K alternative} we get
		\beq
			\label{eq: kt ext}
			\kt(y_{m,M}) = \lf[ \Omega_0 \tx\frac{s-2}{s+2} y_{m,M}^2  + \tx\frac{1}{2} - \tx\frac{1}{\pi \Omega_0(s+2)} \gcrm^2(y_{m,M}) - \frac{2}{\Omega_0(s+2)} \gcrchem + o(1) \ri] \gcrm^2(y_{m,M}),
		\eeq
		and a direct comparison between \eqref{eq: kt primeprime ext} and \eqref{eq: kt ext} yields
		\beq
			\label{eq: key identity}
			\frac{\kt(y_{m,M})}{\gcrm^2(y_{m,M})} = \frac{s-2}{s+2} -\frac{\gcrm^2(y_{m,M})}{\pi \Omega_0 (s+2)} + \frac{\kt''(y_{m,M}) }{\Omega_0 (s+2) \gcrm^2(y_{m,M})} + o(1).
		\eeq
		Now this is the key identity because by assumption (recall also \eqref{eq: ym condition})
		\bdm
			\kt''(y_{M}) \leq 0 \leq \kt''(y_{m}),	\qquad		\gcrm(y_m) < \gcrm(y_M),
		\edm
		so that
		\bdm
				\frac{\kt(y_{m})}{\gcrm^2(y_{m})} \geq \frac{\kt(y_{M})}{\gcrm^2(y_{M})} + o(1),
		\edm
		i.e., \eqref{eq: K crucial pro 1} is proven. Note that the fact that we have two different explicit expressions of $ \kt $ for $ y > y_{\betas} $ and $ y < y_{\betas} $ did not affect the proof, because the difference between the two expressions is $ o(1) $ and therefore can be included in the error term.
	\end{proof}

\section{Energy Asymptotics}
\label{sec: energy}

We attack in this Section the proof of Theorem \ref{teo: energy asymptotics}, which will imply the main result of the paper. The result is obtained by combining upper (Proposition \ref{pro: upper bound}) and lower (Proposition \ref{pro: lower bound}) bounds on $ \gpe $.

\subsection{Upper Bound}
\label{sec: energy ub}

The upper bound on $ \gpe $ is stated in next

	\begin{pro}[GP energy upper bound]
		\label{pro: upper bound}
		\mbox{}	\\
		As $ \eps \to 0 $,
		\beq
			\label{eq: upper bound}
			\gpe \leq \frac{\gcre}{\eps^4} + \OO(1).
		\eeq
	\end{pro}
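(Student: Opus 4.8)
The plan is to evaluate $\gpf$ on a giant vortex trial state and reduce the computation to the one-dimensional energy $\gcre$. Since a wave function of the form $f(x)e^{in\vartheta}$ is single valued on $\R^2$ only when $n\in\Z$, whereas the optimal phase gives in general a non-integer $\Omega+\betas$, the first step is to choose $n$ to be the integer nearest to $\Omega+\betas$ and set $\beta:=n-\Omega$, so that $|\beta-\betas|\le\tfrac12$. This is the only place where the non-integer character of the optimal phase enters, and it is harmless precisely because, by \eqref{eq: pot U}, the parameter $\beta$ always appears multiplied by at least $\eps^2$.

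For the profile I would take the rescaling of $\gcrm$, namely $f(x):=(\sqrt{2\pi}\,\eps)^{-1}\gcrm((x-1)/\eps^2)$, smoothly extended to all of $\R^+$ so as to vanish outside a slightly larger annulus while keeping the Neumann matching at $\pm\eta$. By the Gaussian decay \eqref{eq: better estimate on gvbm} one has $\gcrm(\pm\eta)=\OO(\eps^{\eta_0})$, so, choosing $\eta_0$ large, both the renormalization constant and the energy contribution of the extension region are $\OO(\eps^{\infty})$. Setting $\psitrial:=c_{\eps}\,f(x)e^{in\vartheta}$ with $c_{\eps}=1+\OO(\eps^{\infty})$, the change of variables of Section \ref{sec: gv profiles} --- which is in fact an exact identity, the Taylor remainder of $W$ being absorbed into $v$ --- yields
\bdm
	\gpe \le \gpf[\psitrial] = \frac{1}{\eps^{4}}\,\gvbf[\gcrm] + \OO(\eps^{\infty}),
\edm
where the $\beta$ in $\gvbf$ is forced to be $n-\Omega$ by the winding number of $\psitrial$.

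It then remains to compare $\gvbf[\gcrm]$ --- the functional at the rounded phase, evaluated on the minimizer at $\betas$ --- with $\gcre=\gvf_{\betas}[\gcrm]$. Since the two functionals differ only through their potentials,
\bdm
	\gvbf[\gcrm]-\gcre = \int_{-\eta}^{\eta}\diff y\,(1+\eps^{2}y)\,\bigl(U_\beta(y)-U_{\betas}(y)\bigr)\,\gcrm^{2}(y).
\edm
From \eqref{eq: pot U} the leading contribution is $-2\Omega_0\eps^{2}(\beta-\betas)\int_{-\eta}^{\eta}\tfrac{y}{1+\eps^{2}y}\gcrm^{2}$. The decisive point is that this first-order term is not merely $\OO(\eps^{2})$ but $\OO(\eps^{4})$: the optimality condition \eqref{eq: optimality} forces $\int_{-\eta}^{\eta}\tfrac{y}{1+\eps^{2}y}\gcrm^{2}=\OO(\eps^{2})$, while the remaining terms of $U_\beta-U_{\betas}$ carry higher powers of $\eps$ and are controlled by the moment bounds \eqref{eq: int powers} together with $|\beta-\betas|\le\tfrac12$. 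Hence $\gvbf[\gcrm]=\gcre+\OO(\eps^{4})$, and dividing by $\eps^{4}$ gives \eqref{eq: upper bound}.

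I expect the genuine difficulty to be bookkeeping rather than conceptual. One must check that the rounding of the phase and the smooth truncation of the profile outside $\ann$ really cost only $\OO(\eps^{\infty})$, and --- the heart of the matter --- that the cancellation of the $\OO(\eps^{2})$ correction is exactly the content of the optimality relation \eqref{eq: optimality}, so that the penalty for replacing $\betas$ by an admissible integer phase is second order in the effective small parameter. Everything else reduces to the already established exponential decay (Proposition \ref{pro: gvbm point est}) and the moment estimates \eqref{eq: int powers}.
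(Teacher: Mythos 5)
Your proposal is correct and follows essentially the same route as the paper: testing $\gpf$ on $\psitrial = (\sqrt{2\pi}\eps)^{-1}\gcrm((x-1)/\eps^2)e^{in\vartheta}$ with $n$ an integer within $\OO(1)$ of $\Omega+\betas$ and the profile regularized outside $\ann$, with the $\OO(1)$ remainder coming entirely from the phase rounding. The paper omits the computation you carry out, and your identification of the optimality relation \eqref{eq: optimality} as the reason the rounding penalty is $\OO(\eps^4)$ at the level of $\gvbf$ (hence $\OO(1)$ for $\gpe$) is exactly the right accounting.
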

	
	\begin{proof}
		The proof is rather simple because it is sufficient to test $ \gpf $ on suitable trial function of the form
		\beq
			\psitrial(\xv) :
			=\frac{1}{\sqrt{2\pi}\eps}
			\gtrial\lf(
			\tx\frac{|\xv|-1}{\eps ^2}
			\ri)
			\exp \lf\{i\lf\lfloor\Om +\betas\ri\rfloor\theta \ri\},
		\eeq
		where $ \gtrial $ coincides up to a normalization constant with $ \gcrm $ within $ \ann $ and is suitably regularized outside. The calculation is rather straightforward and we omit it for the sake of brevity. Note that the remainder $ \OO(1) $ is entirely due to the fact that the phase $ \Omega + \betas $ might not be an integer number. Otherwise one would obtain a much better error term $ \OO(\eps^4) $.
	\end{proof}

\subsection{Lower Bound}
\label{sec: energy lb}

A lower bound for $ \gpe $ matching the upper bound of Proposition \ref{pro: upper bound} is formulated in next

	\begin{pro}[GP energy lower bound]
		\label{pro: lower bound}
		\mbox{}	\\
		If $ \Omega_0 > \Omegac $, as $ \eps \to 0 $,
		\beq
			\label{eq: lower bound}
			\gpe \geq \frac{\gcre}{\eps^4} + \OO(\eps^{\infty}).
		\eeq
	\end{pro}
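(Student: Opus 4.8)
The plan is to combine the Lassoued--Mironescu energy splitting with the positivity of the cost function established in Proposition \ref{pro: K positive}. First I would use the exponential decay of Proposition \ref{pro: gpm exp small} to restrict the integration in $\gpf[\gpm]$ to the annulus $\ann$, which costs only an error $\OO(\eps^{\infty})$ both in the energy and in the renormalization of the mass. Then, after absorbing the optimal giant vortex phase $e^{i\lfloor\Omega+\betas\rfloor\vartheta}$ into a residual magnetic potential $\av$, I would write $\gpm = \frac{1}{\sqrt{2\pi}\,\eps}\,\gcrm\big(\tx\frac{|\xv|-1}{\eps^2}\big)\,u(\xv)$ and exploit the variational equation \eqref{eq: gvbm variational} satisfied by $\gcrm$ to cancel the cross terms in the usual way. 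This yields the decoupling
\beq
	\gpe \geq \frac{\gcre}{\eps^4} + \frac{1}{2\pi\eps^2}\,\E[u] - \OO(\eps^{\infty}),
\eeq
with $\E[u]$ the reduced functional \eqref{eq: reduced energy}, so that everything reduces to proving $\E[u]\geq-\OO(\eps^{\infty})$.

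For the latter I would follow the hydrodynamic strategy of Section \ref{sec: heuristics}. Since $\av$ is divergence free, the potential function $F$ of \eqref{eq: potential f} is well defined and $\gcrm^2\av$ is proportional to $\nabla^{\perp}F$; integrating by parts the term $\av\cdot\mathbf{j}_u$ (with $\mathbf{j}_u$ the current \eqref{eq: current}) converts it into an integral of $F\,\curl(\mathbf{j}_u)$ plus boundary contributions. Here the optimality of $\betas$ is essential: by the condition \eqref{eq: optimality} it forces $F(\pm\eta)=0$ (Proposition \ref{pro: pro F}), so the boundary terms on $\partial\ann$ can be disposed of using a single potential function rather than two. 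Using then the elementary bound $|\curl(\mathbf{j}_u)|\leq|\nabla u|^2$ together with the negativity of $F$, and discarding the nonnegative $(1-|u|^2)^2$ contribution, I would arrive at
\beq
	\E[u] \geq \int_{\ann} \diff\xv \: K(y)\,|\nabla u|^2,
\eeq
where $K=\tx\frac{1}{2}\gcrm^2+F$ is exactly the cost function \eqref{eq: cost f}.

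At this stage the hypothesis $\Omega_0>\Omegac$ enters solely through Proposition \ref{pro: K positive}: on $\annm$ the integrand is nonnegative and can simply be dropped. What remains is the contribution of $\ann\setminus\annm$, and I expect this to be the main obstacle. There the positivity of $K$ is \emph{not} available, precisely because of the constant remainders $R_{\pm}$ appearing in the alternative expression of $F$ (see \eqref{eq: tails}), which are in fact what dictates the definition \eqref{eq: annm} of $\annm$. I would control this term by the exponential smallness of $\gcrm$ near $\partial\ann$ combined with the a priori bound on $\gpm$: since on $\annm$ one has $\gcrm^2\geq\eta^6\max\{\gcrm^2(\pm\eta)\}$, the dangerous quantities are suppressed, and the energy possibly lost on the complementary region is $\OO(\eps^{\infty})$.

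Putting these ingredients together gives $\E[u]\geq-\OO(\eps^{\infty})$, and hence the desired bound $\gpe\geq\gcre/\eps^4-\OO(\eps^{\infty})$, which is precisely \eqref{eq: lower bound} and matches the upper bound of Proposition \ref{pro: upper bound}. I anticipate that the delicate points, beyond the bookkeeping of the splitting, are the sharp treatment of the boundary terms with a single potential function and the careful quantitative handling of the transition region $\ann\setminus\annm$; the positivity input from Proposition \ref{pro: K positive} is the conceptual heart but is by now ready to use.
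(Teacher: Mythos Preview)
Your overall architecture is right and matches the paper: restrict to $\ann$ via exponential decay, Lassoued--Mironescu split against $\gcrm$, integrate the current term by parts against the potential function $F$, and invoke the positivity of $K$ from Proposition~\ref{pro: K positive}. Two points, however, need correcting.

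First, a minor one: the paper extracts the \emph{non-integer} phase $e^{i(\Omega+\betas)\vartheta}$, so that $u$ is only quasi-periodic (see \eqref{eq: u semiperiodic}). This is exactly what makes the residual magnetic potential equal to the $\av$ whose potential function satisfies $F(\pm\eta)=0$. If you extract $e^{i\lfloor\Omega+\betas\rfloor\vartheta}$ instead, then $\av$ is shifted by an $\OO(1/x)$ term and the corresponding potential function no longer vanishes at $\pm\eta$; your appeal to Proposition~\ref{pro: pro F} would then be inconsistent with your choice of phase.

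Second, and this is the real gap, the order in which you handle the layer $\ann\setminus\annm$ does not close. You propose to integrate by parts on all of $\ann$, apply $|\curl(\mathbf{j}_u)|\leq|\nabla u|^2$, and only then isolate $\int_{\ann\setminus\annm}K|\nabla u|^2$. But on that layer $\gcrm$ can be as small as $\exp\{-c|\log\eps|^2\}$, and since $u\propto \gpm/\gcrm$ the quantity $|\nabla u|^2$ carries a factor $\gcrm^{-2}$; the only uniform a~priori bound available is $\|\nabla\gpm\|_\infty=\OO(\eps^{-6})$, which yields a contribution that is \emph{not} $\OO(\eps^\infty)$. The paper avoids this by reversing the order: it first excises $\ann\setminus\annmd$ directly at the level of the weighted term $\gcrm^2\,\av\cdot\mathbf{j}_u$ (see \eqref{eq: ang momentum layer}), where the $\gcrm^2$ factor turns the integrand back into $|\gpm|\,|\nabla_\vartheta\gpm|$, and the exponential smallness of $|\gpm|$ there suffices. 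The integration by parts is then performed on $\annmd$ only; the resulting boundary terms sit at $\partial\annmd$, and are controlled not by $F(\pm\eta)=0$ but by the bound $|F(y_\pm)|\leq C\eta^8\gcrm^2(\pm\eta)$ combined once more with the smallness of $\gpm$. So the vanishing $F(\pm\eta)=0$ is not what kills the radial boundary terms; the excision-before-IBP ordering is what makes the estimate go through.
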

	
	\begin{proof}
		We first restrict the integration in the GP energy functional to the domain $ \ann $ (recall its definition in \eqref{eq: ring}: to this purpose we just have to observe that all the three terms in the GP energy functional are pointwise positive and thus we can simply drop their integrals outside $ \ann $. Of course $ \gpm $ is not normalized in $ L^2(\ann) $ but the exponential decay proven in Proposition \ref{pro: gpm exp small} guarantees that
		\beq
			\lf\| \gpm \ri\|_{L^2(\ann)} = 1 + \OO(\eps^{\infty}),
		\eeq
		by taking $ \eta_0 $ large enough.
		
		The first step in the proof is a splitting of the energy, in order to extract the leading order term $ \gcre/\eps^4 $. This is now rather standard and we do not spell all the details of the computation. We just note that one sets
		\beq
			\label{def: u}
			\gpm(\xv)
			=: \frac{1}{\sqrt{2\pi}\eps}
			u\lf(
			x,\vartheta
			\ri)
				\gcrm \lf(
			\tx\frac{x-1}{\eps ^2}
			\ri)
		e^{i\lf(\Om +\betas\ri)\theta}.
		\eeq
		Since $ \Omega + \betas $ needs not to be an integer, $ u $ is not single-valued in general, but
		\beq
			\label{eq: u semiperiodic}	
			u(x,\vartheta+2 k \pi) = e^{-i 2\pi k \lf(\Om +\betas\ri)} u(x,\vartheta),
		\eeq
		for any $ k \in \Z $. A part from that $ u $ is finite for any $ \xv \in \ann $ thanks to the strict positivity of $ \gcrm $. A long but simple computation using the variational equation for $ \gcrm $ gives
		\beq
			\label{eq: splitting}
			\gpe \geq \frac{\gcre}{\eps^4} + \frac{\E[u]}{2 \pi \eps^2} + \OO(\eps^{\infty}),
		\eeq
		where the inequality is mainly due to the restriction of the integration domain and setting $ y = 1 + \eps^2 x $ for short
		\beq
			\label{eq: eu}
			\E[u] = \int _\ann
			\diff \xv\
			\gcrm ^2(y)
			\lf\{
			\tx\frac{1}{2}
			\lf|
				\nabla u
			\ri| ^2
			+{\bf a}
			\cdot {\bf j} _u 
			+\frac{1}{2\pi \eps ^4} \gcrm ^2(y)
			(
			1
			-|
				u
			| ^2
			)^2
			\ri\},
		\eeq
		\beq
			{\bf a}(x):=\lf(
			\frac{\Om + \betas}{x}
			-\Om x
		\ri){\bf e_\vartheta},
		\eeq	
		and the superfluid current is defined in \eqref{eq: current}. The rest of the proof is devoted to prove that
		\beq
			\E[u] \geq \OO(\eps^{\infty}).
		\eeq
		
		In order to exploit the cost function trick mentioned in Section \ref{sec: heuristics} and the positivity of $ K $ proven in Proposition \ref{pro: K positive}, we need to restrict again the integration domain in $ \E[u] $ to $ \annmd \subset \ann $, where
		\bdm
			\annmd = \lf\{ \xv \in \R^2 \: \big| \: \lf|1 - |\xv| \ri|/\eps^2 \in \annm \ri\}.
		\edm 
		The only annoying term is the only one which is not positive, i.e., the second one in \eqref{eq: eu}:
		\bml{
			\label{eq: ang momentum layer}
			\lf| \int_{\ann \setminus \annmd} \diff \xv \: \gcrm^2(y) \: {\bf a} \cdot {\bf j} _u \ri| \leq \lf\| {\bf a} \ri\|_{L^{\infty}(\ann)} \int_{\ann \setminus \annmd} \diff \xv \: \gcrm^2(y) \lf|u\ri| \lf| \nabla_{\vartheta} u \ri| \\
			\leq C \eta^2 \int_{\ann \setminus \annmd} \diff \xv \: \lf| \gpm \ri| \lf| \nabla_{\vartheta} \gpm \ri| 	
			\leq C \eps^2 \eta^4 \lf\| \gpm \ri\|_{L^{\infty}(\ann\setminus \annmd)} \lf\| \nabla \gpm \ri\|_{L^{\infty}(\ann)}	\\
			 \leq C \eps^{-4} \eta^4 \lf\| \gpm \ri\|_{L^{\infty}(\ann\setminus \annmd)},
			}
			where we have used the bound $ \lf\| \nabla \gpm \ri\|_{L^{\infty}(\ann)} \leq C \eps^{-6} $, following from
			\beq
				\label{eq: GN}
				\lf\| \nabla \psi \ri\|_{\infty} \leq C \lf( \lf\| \Delta \psi \ri\|_{\infty}^{1/2} \lf\| \psi \ri\|_{\infty}^{1/2} + \lf\| \psi \ri\|_{\infty} \ri),
			\eeq
			which can be proven from Gagliardo-Nirenberg inequalities exactly as in \cite[Lemma 5.1]{CRY}. However the lower bound \eqref{eq: better estimate on gvbm} easily implies that if we set $ \annm =: [-y_-,y_+] $, then
			\bdm
				y_{\pm} = \eta(1 + o(1)),
			\edm
			so that 
			\beq
				\lf| \gpm \ri|_{\partial \annmd} \leq \OO(\eps^{\infty}),
			\eeq
			again by \eqref{eq: gpm exp small} and the arbitrariness in the choice of $ \eta_0 $. Hence \eqref{eq: ang momentum layer} yields an error which can be made smaller than any power of $ \eps $ and we get the lower bound
			\beq
				\E[u] \geq  \int _{\annmd}
				\diff \xv\
				\gcrm ^2(y)
				\lf\{
				\tx\frac{1}{2}
				\lf|
				\nabla u
				\ri| ^2
				+{\bf a}
				\cdot {\bf j} _u 
				+\frac{1}{2\pi \eps ^4} \gcrm ^2(y)
				(
				1
				-|
				u
				| ^2
				)^2
				\ri\} + \OO(\eps^{\infty}).
			\eeq
			
			We can now finally integrate by the angular momentum term by using the potential function $ F $ defined in \eqref{eq: potential f}: it is trivial to verify that
			\beq
				2 \gcrm^2\lf(\tx\frac{x - 1}{\eps^2}\ri) {\bf a}(x) = - \partial_x F \lf( \tx\frac{x - 1}{\eps^2} \ri) {\bf e}_{\vartheta},
			\eeq
			so that
			\bml{
				\label{eq: int by parts}
				 \int_{\annmd} \diff \xv \: \gcrm^2(y) \: {\bf a} \cdot {\bf j} _u = - \frac{1}{2} \int_0^{2\pi} \diff \vartheta \int_{1 - \eps^2 y_-}^{1 + \eps^2 y_+} \diff x \: \partial_x F \lf( \tx\frac{x - 1}{\eps^2} \ri)  \Re \lf[ i u(x,\vartheta) \partial_{\vartheta} u^*(x,\vartheta) \ri] 	\\
				 = \frac{1}{2} \int_0^{2\pi} \diff \vartheta \int_{1 - \eps^2 y_-}^{1 + \eps^2 y_+} \diff x \: F\lf( \tx\frac{x - 1}{\eps^2} \ri) \Re \lf[ i \partial_x u(x,\vartheta) \partial_{\vartheta} u^*(x,\vartheta) + i u(x,\vartheta) \partial^2_{x,\vartheta} u^*(x,\vartheta) \ri]	\\
				  - \frac{1}{2} \int_0^{2\pi} \diff \vartheta \: \Big| F\lf( \tx\frac{x - 1}{\eps^2} \ri) \Re \lf[ i u(x,\vartheta) \partial_{\vartheta} u^*(x,\vartheta) \ri] \Big|_{1-\eps^2 y_-}^{1 + \eps^2 y_+}.
			}
			The boundary term can be easily proven to provide an exponentially small correction: consider, e.g., the term at $ 1 + \eps^2 y_+ $, since $ \lf| F(y_{\pm}) \ri| \leq C \eta^8 \gcrm^2(\pm\eta) $, one can reconstruct a term, which can be bounded exactly as \eqref{eq: ang momentum layer}. The result is an error $ \OO(\eps^{\infty}) $. The rest is integrated by parts once more but this time w.r.t. $ \vartheta $:
			\bmln{
				\frac{1}{2} \int_0^{2\pi} \diff \vartheta \int_{1 - \eps^2 y_-}^{1 + \eps^2 y_+} \diff x \: F\lf( \tx\frac{x - 1}{\eps^2} \ri) \Re \lf[ i u(x,\vartheta) \partial^2_{x,\vartheta} u^*(x,\vartheta) \ri]		\\
				 = - \frac{1}{2} \int_0^{2\pi} \diff \vartheta \int_{1 - \eps^2 y_-}^{1 + \eps^2 y_+} \diff x \: F\lf( \tx\frac{x - 1}{\eps^2} \ri) \Re \lf[ i \partial_{\vartheta} u(x,\vartheta) \partial{x} u^*(x,\vartheta) \ri],
			}
			where the vanishing of boundary terms is due to the periodicity of $ u^* \partial_x u $ and  its complex conjugate (compare with \eqref{eq: u semiperiodic}). Altogether we have thus obtained that
			\bml{
				 \int_{\annmd} \diff \xv \: \gcrm^2(y) \: {\bf a} \cdot {\bf j} _u = \int_{\annmd} \diff \xv \: F\lf( \tx\frac{x - 1}{\eps^2} \ri) \Re \lf[ i \nabla_x u(x,\vartheta) \nabla_{\vartheta} u^*(x,\vartheta) \ri]	 + \OO(\eps^{\infty})\\
				  \geq - \int_{\annmd} \diff \xv \: \lf| F\lf( \tx\frac{x - 1}{\eps^2} \ri) \ri| \lf| \nabla u \ri|^2 + \OO(\eps^{\infty}) = - \int_{\annmd} \diff \xv \: F\lf( \tx\frac{x - 1}{\eps^2} \ri) \lf| \nabla u \ri|^2  + \OO(\eps^{\infty}),
			}
			and therefore
			\bml{
				\label{eq: lb final}
				\E[u] \geq \int_{\annmd} \diff \xv \: \lf\{ K\lf( \tx\frac{x - 1}{\eps^2} \ri) \lf| \nabla u \ri|^2 + \frac{1}{2\pi \eps ^4} \gcrm ^2(y)
				(
				1
				-|
				u
				| ^2
				)^2
				\ri\} + \OO(\eps^{\infty})	\\
				\geq \eta^{-3} \int_{\annmd} \diff \xv \: \gcrm^2(y) \lf| \nabla u \ri|^2 + \OO(\eps^{\infty}) \geq \OO(\eps^{\infty}),
			}
			thanks to Proposition \ref{pro: K positive} and in particular \eqref{eq: lb K}.			
	\end{proof}

\section{Giant Vortex Transition}
\label{sec: gv transition}

In this Section we prove the results regarding absence of vortices and total vorticity of the condensate.

	\begin{proof}[Proof of Theorem \ref{teo: no vortices}]	
		Combining \eqref{eq: lb final} with \eqref{eq: splitting} and the upper bound proven in Proposition \ref{pro: upper bound}, we get
		\beq
			\int_{\annmd} \diff \xv \: \gcrm ^4(y)
				(
				1
				-|
				u
				| ^2
				)^2 = \OO(\eps^6),
		\eeq
		which already means that $ |u| $ can not differ too much from $ 1 $. To deduce the pointwise estimate of Theorem \ref{teo: no vortices}, we need to combine this with an estimate of $ \lf\| \nabla  u \ri\|_{\infty} $.
		
		As in \cite[Lemma 4.3]{CRY} we obtain from \eqref{eq: GP variational} and \eqref{eq: gvbm variational} the following variational equation for $ u $:
		\beq
			\label{eq: u variational}
			-\half \gvbm \Delta u
			-\frac{1}{\eps ^2} \gcrm' \partial _x u
			-i \gcrm {\bf a} \cdot \nabla u
			+\frac{1}{\pi \eps ^4} \gcrm ^3
			\lf(
			\lf|
				u
			\ri| ^2
			-1
			\ri) u
			=\lf(
			\gpchem
			-\frac{1}{\eps ^4} \gcrchem
			\ri) \gcrm u,
		\eeq
		which yields (recall the definition of $ \ab \subset \annmd $ in \eqref{eq: bulk})
		\bdm
			\lf\| \Delta u \ri\|_{L^{\infty}(\ab)} \leq  C\lf[
			\eps ^{-2} \eta
			\lf\|
				\nabla u
			\ri\| _{L ^\infty(\ab)}
			+\eps ^{-4} \eta^{3a}
			\ri].
		\edm
		Now using the elliptic estimate \eqref{eq: GN} we conclude that
		\beq
			\label{eq: estimate on grad u}
			\lf\|
				\nabla u
			\ri\| _{L ^\infty(\ab)}
			= \OO\lf(\eps ^{-2} \eta ^{1+\frac{3a}{2}}\ri).
		\eeq
	
		Suppose now that it exists $ \xv _0 \in \ab $ such that $	\lf|
			u(\xv _0)
			-1
		\ri|
		\geq \eps^{1/2} |\log\eps|^b
		$,
		for some $ b > 0  $ to be chosen later. Then from \eqref{eq: estimate on grad u} we get that
		\[
			\tx
			\lf|
			|u|
			-1
			\ri|
			\ge \half \eps ^{1/2} |\log\eps|^b,	
			\qquad \mbox{for }
			\xv \in \ba_{\varrho}\lf(
			\xv _0\ri)	\cap \anna,
		\]
		with $ \varrho = \eps^{5/2} |\log\eps|^{b - 1 - \frac{3a}{2}} $, and
		\bdm
			\OO(\eps^6) =  \int _{\ab \cap  \ba_{\varrho}\lf(
			\xv _0\ri)}
			\diff \xv\
			\gcrm ^4(y) \lf(
			1 - |u| ^2
			\ri) ^2
			\ge C \eps^{6} |\log\eps|^{2b - 7a - 2},
		\edm
		which is a contradiction for all $ b \geq 4a - 1 $.
	\end{proof}

\nt
We now focus on the proof of Theorem \ref{teo: winding number} and for later purposes we state a useful Lemma, which is the analogue of \cite[Lemma 3.5]{CPRY3}:

	\begin{lem}
		\label{lem: deg u estimate}
		\mbox{}	\\
		Let $ \Omega_0 > \Omegac $ and $ R $ be a radius satisfying $ R = 1 +\bigO{\eps ^2} $, then 
		\beq
			\label{eq: winding u}
			\lf|
			\deg\lf(
				u,
				\partial \ba _R
			\ri)
			\ri|	= \OO(1)
		\eeq
	\end{lem}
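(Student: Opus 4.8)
The plan is to show that the fractional degree $D:=\deg\lf(u,\partial\ba_R\ri)$ is controlled by the reduced energy $\E[u]$ in the \emph{central} part of the bulk, where the cost function is bounded below by a strictly positive constant rather than by the $\eta^{-3}$-degraded bound \eqref{eq: lb K}. First I observe that by \eqref{def: u} the phase of $\gpm$ equals $\arg u+(\Om+\betas)\vartheta$, so that $\deg(\gpm,\partial\ba_R)=\deg(u,\partial\ba_R)+(\Om+\betas)$ for every admissible radius. Since $\gpm$ does not vanish on $\annmd$ by Theorem \ref{teo: no vortices}, the integer $\deg(\gpm,\partial\ba_R)$ is locally constant, hence constant on the connected bulk annulus, and therefore $D$ is independent of $R$ there. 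Moreover, combining the upper bound of Proposition \ref{pro: upper bound} with the splitting \eqref{eq: splitting} gives $\E[u]=\OO(\eps^2)$.

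Next I would localize the lower bound \eqref{eq: lb final} to a thin central annulus
\[
	A_0:=\lf\{\xv\;\big|\;\lf|(x-1)/\eps^2\ri|\le M\ri\},
\]
with $M$ a fixed ($\eps$-independent) constant. The crucial point is that on $A_0$ the cost function is bounded below by a positive constant: by Lemma \ref{lem: K close kgv} one has $K(y)=\kgv(y)+\OO(\eps^2|\log\eps|^{\infty})$ on $\annt\supset A_0$, while for $\Omega_0>\Omegac$ the \emph{strict} inequality in Proposition \ref{pro: kgv positive} together with \eqref{eq: kgv0} yields $\kgv(0)>0$; by continuity $\kgv\ge c>0$ on $\{|y|\le M\}$, so that $K\ge c/2$ on $A_0$ for $\eps$ small. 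Since also $\gcrm\ge c'>0$ there, \eqref{eq: lb final} gives $\E[u]\ge c\int_{A_0}|\nabla u|^2\,\diff\xv$.

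Finally I would extract the degree from $\int_{A_0}|\nabla u|^2$. Writing $u=|u|e^{i\phi}$ (the phase $\phi$ is multivalued but $\partial_\vartheta\phi$ is single-valued where $u\ne 0$) and keeping only the angular part, $|\nabla u|^2\ge x^{-2}|u|^2(\partial_\vartheta\phi)^2$; using $|u|^2\ge 1/2$ on the bulk from \eqref{eq: no vortices} together with Jensen's inequality $\int_0^{2\pi}(\partial_\vartheta\phi)^2\,\diff\vartheta\ge 2\pi D^2$ (recall that $\tfrac{1}{2\pi}\oint\partial_\vartheta\phi\,\diff\vartheta=D$ is constant in $R$), one obtains
\[
	\int_{A_0}|\nabla u|^2\,\diff\xv\ge \pi D^2\int_{1-\eps^2 M}^{1+\eps^2 M}\frac{\diff x}{x}\ge c\,D^2\eps^2.
\]
Combining this with $\E[u]\ge c\int_{A_0}|\nabla u|^2$ and $\E[u]=\OO(\eps^2)$ forces $D^2=\OO(1)$, which is the assertion \eqref{eq: winding u}.

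The main obstacle, and the one genuine improvement over a naive estimate, is the second step. The global bound $K\ge\eta^{-3}\gcrm^2$ invoked in \eqref{eq: lb final} is too lossy and would only produce $|D|=\OO(|\log\eps|^{3/2})$; obtaining the sharp $\OO(1)$ requires the $\eps$-uniform \emph{constant} lower bound on $K$ on $A_0$, which ultimately rests on the strict positivity $\kgv(0)>0$ that holds precisely because $\Omega_0>\Omegac$. I would also need to record the elementary but necessary facts that $\gcrm$ stays bounded away from zero on $A_0$ (so that the localization does not lose the energy) and that $|u|$ is close to $1$ on the whole bulk (so that $\partial_\vartheta u$ genuinely detects the winding); both are already available, the former from \eqref{eq: better estimate on gvbm} and the latter from Theorem \ref{teo: no vortices}.
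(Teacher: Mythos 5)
Your proposal is correct and follows essentially the same route as the paper: both arguments rest on the matching of upper and lower bounds to get $\E[u]=\OO(\eps^2)$, on the $\eps$-uniform constant lower bound for $K$ near $x=1$ (via Lemma \ref{lem: K close kgv} and the strict positivity of $\kgv$ for $\Omega_0>\Omegac$), and hence on $\lf\|\nabla u\ri\|_{L^2}^2=\OO(\eps^2)$ over a thin annulus of width $\OO(\eps^2)$. The only (cosmetic) difference is in how the degree is extracted from that $L^2$ bound: you use the phase decomposition $u=|u|e^{i\phi}$ and Jensen's inequality on each circle, while the paper uses a radial cut-off, Stokes' formula and Cauchy--Schwarz; both require the same input that $|u|$ is bounded away from zero there, and both yield $|D|=\OO(1)$.
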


	\begin{proof}
		We use a smooth radial cut-off function $ \chi $ with support in $ [\widetilde{R},R] $ such that $ \chi(\widetilde{R}) =0 $ and $ \chi(R) = 1 $, for some radius
	\bdm
		\widetilde{R}
		=R
		- c \eps ^2
	\edm
	with $ c>0 $. We also require that $ \lf|
			\chi
		\ri|
		\le 1 $ and $ \lf|
			\nabla \chi
		\ri| = \OO(\eps ^{-2}) $. Then by Stokes formula
		\bml{
			\deg\lf(
			u,
			\partial \ba _R
			\ri)
			=\frac{1}{\pi}
			\int _{\partial \ba _R}
			\diff \sigma \: \Im
			 \lf(
			\frac{\nabla_{\vartheta} u}{u}
			\ri) = \frac{1}{\pi}
			\int _{\partial \ba _R}
			\diff \sigma \: \chi(R) \: \Im
			 \lf(
			\frac{\nabla_{\vartheta} u}{u}
			\ri) \\
			=
			\frac{1}{\pi} \int _{\ba_R\setminus \ba_{\widetilde{R}} }
			\diff \xv \:	\nabla ^\perp \chi \cdot 
			\Im \lf(
			\frac{\nabla u}{u}
			\ri).
		}
		Therefore
		\beq
			\label{eq: deg est}
			\lf|
			\deg\lf(
				u,
				\partial \ba _R
			\ri)
			\ri|
			\le \frac{C}{\eps ^2}
			\int _{\ba_R\setminus \ba_{\widetilde{R}} }
			\diff \xv\
			\frac{\lf|\nabla u\ri|}{|u|}
			\leq \frac{C}{\eps^2} \lf| \ba_R\setminus \ba_{\widetilde{R}} \ri|^{1/2} \lf\| \nabla u \ri\|_{L^2(\ba_R\setminus \ba_{\widetilde{R}})},
		\eeq
		where we used that $ \lf\| 1 - |u| \ri\|_{L^{\infty}(\ba_R\setminus \ba_{\widetilde{R}})} = o(1) $. Now the result proven in Proposition \ref{pro: kgv positive} in fact says that for any $ \xv \in \ba_R\setminus \ba_{\widetilde{R}} $ and for $ \Omega_0 > \Omegac $, there exists a constant $ C > 0  $ such that
		\bdm
			\kgv\lf(\tx\frac{x-1}{\eps^2}\ri) \geq C,
		\edm
		and thanks to \eqref{eq: K close kgv} the same inequality holds true for $ K $, i.e.,
		\beq
			K\lf(\tx\frac{x-1}{\eps^2}\ri) \geq C > 0, 	\qquad		\mbox{for any } \xv \in \ba_R\setminus \ba_{\widetilde{R}}.
		\eeq
		Going back to \eqref{eq: lb final} this yields
		\bdm
			\OO(\eps^2) \geq \int_{\annmd} \diff \xv \:  K\lf( \tx\frac{x - 1}{\eps^2} \ri) \lf| \nabla u \ri|^2 \geq  \int_{\ba_R\setminus \ba_{\widetilde{R}}} \diff \xv \:  K\lf( \tx\frac{x - 1}{\eps^2} \ri) \lf| \nabla u \ri|^2 \geq C \lf\| \nabla u \ri\|_{L^2(\ba_R\setminus \ba_{\widetilde{R}})},
		\edm
		which gives the result once plugged into \eqref{eq: deg est}.
	\end{proof}

	\nt
	We are now in position to complete the estimate of the winding number of $ \gpm $:
	
	\begin{proof}[Proof of Theorem \ref{teo: winding number}]	
		We follow \cite[proof of Theorem 1.5]{CPRY3}. The positivity of $ |\gpm| $ on $ \partial \ba _R $ is guaranteed for any radius $ R = 1 +\bigO{\eps ^2} $ thanks to \eqref{eq: no vortices}. A simple computation shows that
		\[
			\deg \lf(
			\gpm,
			\partial \ba_R
			\ri)
			=\Om 
			+\betas
			+\deg \lf(
			u,
			\partial \ba_R
			\ri),
		\]
		which yields the result in combination with \eqref{eq: winding u}.
	\end{proof}


\begin{thebibliography}{aaa99}

\bibitem[ARVK]{ARVK}	\textsc{J.R. Abo-Shaeer, C. Raman, J.M. Vogels, W. Ketterle}, Observation of Vortex Lattices in Bose-Einstein Condensates,  {\it Science} {\bf 292}, 476--479 (2001).

%\bibitem[A]{A} 		\textsc{A. Aftalion}, {\it Vortices in Bose-Einstein Condensates}, Progress in Nonlinear Differential Equations and their Applications \textbf{67}, Birkh\"auser, Basel, 2006.

%\bibitem[AB]{AB} \textsc{A. Aftalion, X. Blanc}, Reduced Energy Functionals for a Three Dimensional Fast Rotating Bose-Einstein Condensate, \textit{Ann. Inst. H. Poincar\'{e} C: Anal. Non Lin\'{e}aire} {\bf 25} (2008), 339--355 .

%\bibitem[ABN]{ABN} \textsc{A. Aftalion, X. Blanc, F. Nier}, Lowest Landau Level Functionals and Bargmann Spaces for Bose-Einstein Condensates, \textit{J. Funct. Anal.} \textbf{241} (2006), 661--702.

%\bibitem[AAB]{AAB}	\textsc{A. Aftalion, S. Alama, L. Bronsard}, Giant Vortex and the Breakdown of Strong Pinning in a Rotating Bose-Einstein Condensate, {\it Arch. Rational Mech. Anal.} {\bf 178}, 247--286 (2005).

%\bibitem[AB]{AB} \textsc{A. Aftation, X. Blanc}, Vortex Lattices in Rotating Bose-Einstein Condensates, \textit{SIAM J. Math. Anal.} \textbf{38}, 874--893 (2006).




%\bibitem[ASS]{ASS} \textsc{A. Aftalion, E. Sandier, S. Serfaty}, Pinning phenomena in the Ginzburg-Landau model of superconductivity, {\it J. Maths Pures Appl.} {\bf 80}, 339--372 (2001).

\bibitem[AS]{AS}	\textsc{M. Abramovitz, I.A. Stegun}, \emph{Handbook of Mathematical Functions: with Formulas, Graphs, and Mathematical Tables}, Dover, New York, 1965.

\bibitem[ABD]{ABD}  \textsc{A. Aftalion, X. Blanc, J. Dalibard}, Vortex Patterns in a Fast Rotating Bose-Einstein Condensate, \textit{ Phys. Rev. A} \textbf{71}, 023611 (2005).

\bibitem[ABN]{ABN}  \textsc{A. Aftalion, X. Blanc, F. Nier}, Vortex distributions in the lowest Landau level, \textit{ Phys. Rev. A} \textbf{73}, 011601 (2006).

\bibitem[AJR]{AJR} \textsc{A. Aftalion, R.L. Jerrard, J. Royo-Letelier}, Non-Existence of Vortices in the Small Density Region of a Condensate, \textit{J. Funct. Anal.} \textbf{260}, 2387--2406 (2011).

%\bibitem[ABM]{ABM}	\textsc{S. Alama, L. Bronsard, V. Millot}, Gamma-convergence of 2D Ginzburg-Landau functionals with vortex concentration along curves, \textit{J. Anal. Math.} \textbf{114}, 341 (2011).

\bibitem[AD1]{AD1}	\textsc{X. Antoine, R. Duboscq}, GPELab, a Matlab Toolbox to Solve Gross-Pitaevskii Equations I: Computation of Stationary Solutions, {\it Comput. Phys. Commun.}, {\bf 185} (2014), 2969--2991.

\bibitem[AD2]{AD2} \textsc{X. Antoine, R. Duboscq}, GPELab, A Matlab Toolbox to Solve Gross-Pitaevskii Equations II: Dynamics and Stochastic Simulations, {\it Comput. Phys. Commun.}  {\bf 193} (2015), 95--117.


%\bibitem[AS]{AS}		\textsc{N. Andr\'{e}, I. Shafrir}, Minimization of a Ginzburg-Landau Type Functional with Nonvanishing Dirichlet Boundary Condition, {\it Calc. Var. Partial Differential Equations} {\bf 7}, 1--27 (1998). 

%\bibitem[BJOS1]{BJOS1} \textsc{S. Baldo, R.L. Jerrard, G. Orlandi, H. M. Soner}, Convergence of Ginzburg-Landau functionals in 3D superconductivity, \textit{Arch. Rational Mech. Anal.} \textbf{205}, 699-752 (2012).

%\bibitem[BJOS2]{BJOS2} \textsc{S. Baldo, R.L. Jerrard, G. Orlandi, H. M. Soner}, Vortex density models for superconductivity and superfluidity, preprint \textit{arXiv:1112.0293 [math-ph]} (2011). 


%\bibitem[BBH]{BBH}	 \textsc{F. B\'{e}thuel, H. Br\'{e}zis, F. H\'{e}lein}, {\it Ginzburg-Landau Vortices}, Progress in Nonlinear Differential Equations and their Applications \textbf{13}, Birkh\"auser, Basel, 1994.

%\bibitem[BPGW]{BPGW} \textsc{G. Baym, C.J. Pethick, S.A. Gifford, G. Watanabe}, Nonuniform vortex lattices in inhomogeneous rotating Bose-Einstein condensates, \textit{Phys. Rev. A} \textbf{75}, 013602 (2007).

%\bibitem[BBH]{BBH} \textsc{F. B\'{e}thuel, H. Br\'{e}zis, F. H\'{e}lein}, {\it Ginzburg-Landau Vortices}, Progress in Nonlinear Differential Equations and their Applications \textbf{13}, Birkh\"auser, Basel, 1994.

\bibitem[BSSD]{BSSD}\textsc{V. Bretin, S. Stock, Y. Seurin, J. Dalibard}, Fast Rotation of a Bose-Einstein Condensate, {\it Phys. Rev. Lett.} {\bf 92}, 050403 (2004).


%\bibitem[BS]{BS} \textsc{H. Br\'ezis, S. Serfaty}, A variational formulation for the two-sided obstacle problem with measure data, \textit{Comm. Contemp. Math} \textbf{4}, 357--374 (2002).

\bibitem[BCPY]{BCPY} 	\textsc{J.-B. Bru, M. Correggi, P. Pickl, J. Yngvason}, The TF Limit for Rapidly Rotating Bose Gases in Anharmonic Traps, \emph{Commun. Math. Phys.} \textbf{280} (2008), 517--544.

\bibitem[CD]{CD} \textsc{Y. Castin, R. Dum}, Bose-Einstein Condensates with Vortices in Rotating Traps, \textit{Eur. Phys. J. D} \textbf{7} (1999), 399--412.

\bibitem[CHES]{CHES} \textsc{I. Coddington, P.C. Haljan, P. Engels, V. Schweikhard, S. Tung, E.A. Cornell},Experimental studies of equilibrium vortex properties in a Bose-condensed gas, \textit{Phys. Rev. A} \textbf{70}, 063607 (2004).

\bibitem[Co]{Co} 		\textsc{N.R. Cooper}, Rapidly Rotating Atomic Gases, \emph{Adv. Phys.} \textbf{57} (2008), 539--616. 

\bibitem[CDY1]{CDY1}   	\textsc{M. Correggi, T. Rindler-Daller, J. Yngvason}, Rapidly Rotating Bose-Einstein Condensates in Strongly Anharmonic Traps, \emph{J. Math. Phys.} \textbf{48} (2007), 042104.

\bibitem[CDY2]{CDY2}   	\textsc{M. Correggi, T. Rindler-Daller, J. Yngvason}, Rapidly Rotating Bose-Einstein Condensates in Homogeneous Traps, \emph{J. Math. Phys.} \textbf{48} (2007), 102103.

\bibitem[CPRY1]{CPRY1} \textsc{M. Correggi, F. Pinsker, N. Rougerie, J. Yngvason}, Critical Rotational Speeds in the Gross-Pitaevskii Theory on a Disc with Dirichlet Boundary Conditions, {\it J. Stat. Phys.} {\bf 143}, 261--305 (2011).

\bibitem[CPRY2]{CPRY2} \textsc{M. Correggi, F. Pinsker, N. Rougerie, J. Yngvason}, Rotating Superfluids in Anharmonic Traps: From Vortex Lattices to Giant Vortices, \emph{Phys.\ Rev.\ A} \textbf{84}, 053614 (2011). 

\bibitem[CPRY3]{CPRY3} \textsc{M. Correggi, F. Pinsker, N. Rougerie, J. Yngvason}, Critical Rotational Speeds for Superfluids in  Homogeneous Traps, {\it J. Math. Phys.} \textbf{53}, 095203 (2012).

\bibitem[CPRY4]{CPRY4} \textsc{M. Correggi, F. Pinsker, N. Rougerie, J. Yngvason}, Vortex Phases of Rotating Superfluids, {\it J. Phys.: Conf. Ser.} {\bf 414} (2013), 012034.

\bibitem[CPRY5]{CPRY5} \textsc{M. Correggi, F. Pinsker, N. Rougerie, J. Yngvason}, Giant Vortex Phase Transition in Rapidly Rotating Trapped Bose-Einstein Condensates, {\it Eur. Phys. J. Special Topics} {\bf 217} (2013), 183--188.

\bibitem[CR1]{CR1}		\textsc{M. Correggi, N. Rougerie},  Inhomogeneous Vortex Patterns in Rotating Bose-Einstein Condensates, {\it Commun. Math. Phys.} {\bf 321} (2013), 817--860.

\bibitem[CR2]{CR2}		\textsc{M. Correggi, N. Rougerie}, On the Ginzburg-Landau Functional in the Surface Superconductivity Regime, {\it Commun. Math. Phys.} {\bf 332} (2014), 1297--1343; erratum {\it Commun. Math. Phys.} {\bf 338} (2015), 1451--1452.

\bibitem[CR3]{CR3}		\textsc{M. Correggi, N. Rougerie}, Boundary Behavior of the Ginzburg-Landau Order Parameter in the Surface Superconductivity Regime, {\it Arch. Rational Mech. Anal.} {\bf 219} (2016), 553--606.

\bibitem[CRY]{CRY}	\textsc{M. Correggi, N. Rougerie, J. Yngvason}, The Transition to a Giant Vortex Phase in a Fast Rotating Bose-Einstein Condensate, {\it Commun. Math. Phys.} {\bf 303}, 451--508 (2011).

\bibitem[CY]{CY}		\textsc{M. Correggi, J. Yngvason}, Energy and Vorticity in Fast Rotating Bose-Einstein Condensates, {\it J. Phys. A: Math. Theor.} {\bf 41}, 445002 (2008).

\bibitem[Dan]{Dan} \textsc{I. Danaila}, Three-dimensional Vortex Structure of a Fast Rotating Bose-Einstein Condensate with Harmonic-plus-quartic Confinement, \textit{ Phys. Rev. A} \textbf{72}, 013605 (2005).

\bibitem[E]{E}	\textsc{L.C. Evans}, {\it Partial Differential Equation}, Graduate Studies in Mathematics {\bf 19}, AMS, Providence, 1998.

%\bibitem[ESY1]{ESY1} \textsc{L. Erd\H os, B. Schlein, H.T. Yau}, Rigorous Derivation of the Gross-Pitaevskii Equation, \textit{Phys.Rev. Lett.} \textbf{98} (2007), 040404.

%\bibitem[ESY2]{ESY2} \textsc{L. Erd\H os, B. Schlein, H.T. Yau}, Derivation of the Gross-Pitaevskii Equation for the Dynamics of Bose-Einstein Condensate, \textit{Ann. Math.} \textbf{172} (2010), 291--370.

\bibitem[Fe1]{Fe1} 	\textsc{A.L. Fetter}, Rotating Trapped Bose-Einstein Condensates, \emph{Rev. Mod. Phys.} \textbf{81} (2009), 647--691.

\bibitem[Fe2]{Fe2}	\textsc{A.L. Fetter},  Rotating Vortex Lattice in a Bose-Einstein Condensate Trapped in Combined Quadratic and Quartic Radial Potentials, \emph{Phy. Rev. A} \textbf{64} (2001), 063608. 

\bibitem[FJS]{FJS} 	\textsc{A.L. Fetter, N. Jackson, S. Stringari},  Rapid Rotation of a Bose-Einstein Condensate in a Harmonic Plus Quartic Trap, \emph{Phys. Rev. A} \textbf{71} (2005), 013605. 

\bibitem[FB]{FB} 		\textsc{U.R. Fischer, G. Baym}, Vortex States of Rapidly Rotating Dilute Bose-Einstein Condensates, \emph{Phys. Rev. Lett.} \textbf{90} (2003), 140402.

%\bibitem[FH]{FH} \textsc{S. Fournais, B. Helffer}, \textit{Spectral Methods in Surface Superconductivity}  Progress in Nonlinear Differential Equations and their Applications \textbf{77}, Birkh\"auser, Basel, 2010.

%\bibitem[FZ]{FZ} 		\textsc{H. Fu, E. Zaremba},  Transition to the Giant Vortex State in a Harmonic-plus-quartic Trap, \emph{Phys. Rev. A} \textbf{73} (2006), 013614. 

\bibitem[IM1]{IM1} 	\textsc{R. Ignat, V. Millot}, The Critical Velocity for Vortex Existence in a Two-dimensional Rotating Bose-Einstein Condensate, \emph{J. Funct. Anal.} \textbf{233}, 260--306 (2006).

\bibitem[IM2]{IM2} 	\textsc{R. Ignat, V. Millot}, Energy Expansion and Vortex Location for a Two Dimensional Rotating Bose-Einstein Condensate, \emph{Rev. Math. Phys.} \textbf{18}, 119--162 (2006).

\bibitem[KTU]{KTU} 	\textsc{K. Kasamatsu, M. Tsubota, M. Ueda},  Giant Hole and Circular Superflow in a Fast Rotating Bose-Einstein Condensate, \emph{Phys. Rev. A} \textbf{66} (2002), 050606.

\bibitem[Ka]{Ka}	\textsc{A. Kachmar}, Energy of a rotating Bose-Einstein condensate in a harmonic trap, preprint {\it  arXiv:1306.3296 [math.AP]}.

\bibitem[KB]{KB} 		\textsc{G.M. Kavoulakis, G. Baym}, Rapidly Rotating Bose-Einstein Condensates in Anharmonic Potentials, \emph{New J. Phys.} \textbf{5} (2003), 51.1-51.11.

\bibitem[KF]{KF} 		\textsc{J.K. Kim, A.L. Fetter},  Dynamics of a Rapidly Rotating Bose-Einstein Condensate in a Harmonic Plus Quartic trap, \emph{Phys. Rev. A} \textbf{72} (2005), 023619. 

%\bibitem[Je]{J} 		\textsc{R.L. Jerrard}, Lower Bounds for Generalized Ginzburg-Landau Functionals, \emph{SIAM J. Math. Anal.} \textbf{30}, 721--746 (1999).

%\bibitem[JS1]{JS1} 	\textsc{R.L. Jerrard, H.M. Soner}, Limiting Behavior of the Ginzburg-Landau Functional, \emph{J. Func. Anal.} \textbf{192}, 524--561 (2002).


%\bibitem[JS2]{JS2} 	\textsc{R.L. Jerrard, H.M. Soner}, The Jacobian and the Ginzburg-Landau Energy, \emph{Calc. Var. Partial Differential Equations} \textbf{14}, 524--561 (2002).

\bibitem[LM]{LM} 		\textsc{L. Lassoued, P. Mironescu}, Ginzburg-Landau Type Energy with Discontinuous Constraint, \emph{J. Anal. Math.} \textbf{77} (1999), 1--26.

\bibitem[LS]{LS} \textsc{E.H. Lieb, R. Seiringer}, Derivation of the Gross-Pitaevskii Equation for Rotating Bose Gases, \emph{Commun. Math. Phys.} \textbf{264} (2006), 505--537.

\bibitem[LSSY]{LSSY} \textsc{E.H. Lieb, R. Seiringer, J.P. Solovej, J. Yngvason}, \textit{The Mathematics of the Bose Gas and its Condensation}, Oberwolfach Seminar Series \textbf{34}, Birkh\"{a}user, Basel (2005), expanded version available at \textit{arXiv:cond-mat/0610117}.



%\bibitem[LeS]{LeS} \textsc{M. Lewin, R. Seiringer}, Strongly Correlated Phases in Rapidly Rotating Bose Gases, \emph{J. Stat. Phys.} \textbf{137} (2009), 1040--1062.

%\bibitem[LL]{LL} 		\textsc{E.H. Lieb, M. Loss}, \emph{Analysis}, Graduate Studies in Mathematics {\bf 14}, AMS, Providence, 1997.

%\bibitem[LiS]{LiS} 		\textsc{E.H. Lieb, R. Seiringer}, Derivation of the Gross-Pitaevskii Equation for Rotating Bose Gases, \emph{Comm. Math. Phys.} \textbf{264} (2006), 505--537.

%\bibitem[LSSY]{LSSY} \textsc{E.H. Lieb, R. Seiringer, J.P. Solovej, J. Yngvason}, \textit{The Mathematics of the Bose Gas and its Condensation}, Oberwolfach Seminar Series \textbf{34}, Birkh\"{a}user, Basel (2005).

\bibitem[LSY1]{LSY1} 	\textsc{E.H. Lieb, R. Seiringer, J. Yngvason}, Bosons in a Trap: A Rigorous Derivation of the Gross-Pitaevskii Energy Functional, \emph{Phys. Rev. A.} \textbf{61} (2000), 043602.

\bibitem[LSY2]{LSY2} \textsc{E.H. Lieb, R. Seiringer, J. Yngvason}, The Yrast Line of a Rapidly Rotating Bose Gas: The Gross-Pitaevskii Regime, \textit{Phys. Rev. A} \textbf{79} (2009), 063626.

\bibitem[MCWD]{MCWD}	\textsc{K.W. Madison, F. Chevy, W. Wohlleben, J. Dalibard}, Vortex Formation in a Stirred Bose-Einstein Condensate, {\it Phys. Rev. Lett.} {\bf 84}, 806--809 (2000).

%\bibitem[N]{N}		\textsc{L. Nirenberg}, An extended interpolation inequality, {\it Ann. Sc. Norm. Super. Pisa Cl. Sci.} {\bf 20}, 733--737 (1966).


%\bibitem[N]{N}		\textsc{L. Nirenberg}, An extended interpolation inequality, {\it Ann. Scuola Norm. Sup. Pisa} {\bf 20} (1966), 733--737.

%\bibitem[P]{P}		\textsc{S.I. Pohozaev}, Eigenfunctions of the Equation $ \Delta u + \lambda f (u) = 0 $, {\it Soviet Math. Dokl.} {\bf 6} (1965), 1408--1411.

%\bibitem[P]{P} \textsc{P. Pickl}, Derivation of the Time Dependent Gross Pitaevskii Equation with External Fields, preprint \textit{arXiv:1001.4894v2 [math-ph]} (2010).

\bibitem[RAVXK]{RAVXK}	\textsc{C. Raman, J.R. Abo-Shaeer, J.M. Vogels, K. Xu, W. Ketterle}, Vortex Nucleation in a Stirred Bose-Einstein Condensate, {\it Phys. Rev. Lett.} {\bf 87}, 210402 (2001).

\bibitem[R1]{R1} 		\textsc{N. Rougerie}, The Giant Vortex State for a Bose-Einstein Condensate in a Rotating Anharmonic Trap: Extreme Rotation Regimes, {\it J. Math. Pures Appl.} {\bf 95} (2011), 296--347. 

\bibitem[R2]{R2}		\textsc{N. Rougerie}, Vortex Rings in Fast Rotating Bose-Einstein Condensates, \textit{Arch. Rational Mech. Anal.} \textbf{203}, 69--135 (2012).


%\bibitem[R2]{R2} 		\textsc{N. Rougerie}, unpublished. 

%\bibitem[ST]{ST} 		\textsc{E.B. Saff, V. Totik}, \emph{Logarithmic Potentials with External Fields}, Grundlehren der Mathematischen Wissenchaften \textbf{316}, Springer-Verlag, Berlin, 1997.


%\bibitem[Sa]{Sa} 		\textsc{E. Sandier}, Lower Bounds for the Energy of Unit Vector Fields and Applications, \emph{J. Funct. Anal.} \textbf{152}, 349--358 (1998).

%\bibitem[SS1]{SS1}	\textsc{E. Sandier, S. Serfaty}, On the Energy of Type-II Superconductors in the Mixed Phase, \emph{Rev. Math. Phys.} \textbf{12} (2000), 1219--1257.

%\bibitem[SS1]{SS1} \textsc{E. Sandier, S. Serfaty}, A Rigorous Derivation of a Free-Boundary Problem Arising in Superconductivity, \textit{Ann. Sci. \'{E}c. Norm. Sup\'{e}r.} \textbf{33}, 561--592 (2000). 

%\bibitem[SS2]{SS2} \textsc{E. Sandier, S. Serfaty}, \emph{Vortices in the Magnetic Ginzburg-Landau Model}, Progress in Nonlinear Differential Equations and their Applications \textbf{70}, Birkh\"{a}user, Basel, 2007, erratum available at http://www.ann.jussieu.fr/serfaty/publis.html.

%\bibitem[SS3]{SS3} \textsc{E. Sandier, S. Serfaty}, Improved lower bounds for Ginzburg-Landau energies via mass displacement, \textit{Analysis and PDEs} \textbf{4-5}, 757--795 (2011).

%\bibitem[SS4]{SS4} \textsc{E. Sandier, S. Serfaty}, From the Ginzburg-Landau Model to Vortex Lattice Problems, \textit{Communications in Mathematical Physics} \textbf{313}, 635-743 (2012).

%\bibitem[Seir]{Seir} 	\textsc{R. Seiringer}, Gross-Pitaevskii Theory of the Rotating Bose Gas, \emph{Commun. Math. Phys.} \textbf{229} (2002), 491--509.

%\bibitem[Ser]{Ser} \textsc{S. Serfaty}, On a Model of Rotating Superfluids, {\it ESAIM: Control Optim. Calc. Var.} {\bf 6} , 201--238 (2001).

%\bibitem[SR1]{SR1} \textsc{D.E. Sheehy, L. Radzihovsky}, Vortices in spatially inhomogeneous superfluids, \textit{Phys. Rev. A} \textbf{70}, 063620(R) (2004).

%\bibitem[SR2]{SR2} \textsc{D.E. Sheehy, L. Radzihovsky}, Vortex lattice inhomogeneity in spatially inhomogeneous superfluids, \textit{Phys. Rev. A} \textbf{70}, 051602 (2004).

%\bibitem[Sta]{St} \textsc{G. Stampacchia}, Le probl\`{e}me de Dirichlet pour les \'{e}quations elliptiques du second ordre \`{a} coefficients discontinus, {\it Annales Inst. Fourier} {\bf 15}, 189--257 (1965).

%\bibitem[SBBHD]{SBBHD} \textsc{S. Stock, B. Battelier, V. Bretin, Z. Hadzibabic, J. Dalibard}, Bose-Einstein condensates in fast rotation, \textit{Laser Phys. Lett.} \textbf{2}, 275--284 (2005).

%\bibitem[T]{T}	\textsc{M.E. Taylor}, {\it Partial Differential Equations: Basic Theory}, Springer-Verlag, New York, 1996.

\end{thebibliography}
\end{document}